\documentclass[11pt]{article}
\pdfoutput=1 
\usepackage[T1]{fontenc}
\usepackage{lmodern}
\usepackage{booktabs}
\usepackage[margin=1in]{geometry}
\usepackage{amsfonts,amssymb,amsmath,amsthm,mathtools,thmtools}
\usepackage{relsize}
\usepackage{microtype,xspace,bm,floatrow}
\usepackage{braket}
\usepackage{mdframed}
\usepackage{dsfont}

\usepackage{doi}
\usepackage{hyperref}
\usepackage[dvipsnames]{xcolor}
\hypersetup{
	colorlinks=true,
	urlcolor=MidnightBlue,
	linkcolor=MidnightBlue,
	citecolor=MidnightBlue,
	unicode
}

\usepackage{authblk}

\newcommand{\negl}{\mathsf{negl}}
\newcommand{\Tr}{\mathrm{Tr}}
\newcommand{\decode}{\mathsf{Decode}}
\newcommand{\RS}{\mathrm{RS}}

\newcommand{\bits}{\{0,1\}}
\newcommand{\bitsn}{\{0,1\}^n}

\usepackage{enumitem}
\setlist{  
  listparindent=\parindent,
  parsep=0pt,
}

\usepackage[capitalise,nameinlink,noabbrev]{cleveref}
\crefname{equation}{}{}

\usepackage[english]{babel}
\addto\extrasenglish{}
\addto\extrasenglish{}

\usepackage{lpic}

\makeatletter
\DeclareRobustCommand\bfseries{%
  \not@math@alphabet\bfseries\mathbf
  \fontseries\bfdefault\selectfont\boldmath}
\makeatother

\declaretheorem[name=Theorem]{theorem}
\declaretheorem[name=Proposition, sibling=theorem]{proposition}
\declaretheorem[name=Lemma,sibling=theorem]{lemma}
\declaretheorem[name=Corollary, sibling=theorem]{corollary}

\declaretheorem[name=Definition, sibling=theorem, style=definition]{definition}

\declaretheorem[name=Remark,sibling=theorem]{remark}

\declaretheorem[name=Problem]{problem}

\DeclareMathOperator{\poly}{poly}

\newcommand{\newclass}[2]{\newcommand{#1}{{\text{\upshape\sffamily #2}}\xspace}}
\newclass{\NP}{NP}
\newclass{\cA}{A}
\newclass{\cB}{B}
\newclass{\BPP}{BPP}
\newclass{\BQP}{BQP}
\newclass{\TFNP}{TFNP}

\newcommand{\newprob}[2]{\newcommand{#1}{{\text{\upshape\scshape #2}}\xspace}}
\newprob{\CI}{Code Intersection}
\newprob{\CIshort}{CI}
\newprob{\OPI}{Optimal Polynomial Intersection}
\newprob{\OPIshort}{OPI}

\renewcommand{\H}{\mathbf{H}}
\renewcommand{\Pr}{\mathbf{Pr}}

\newcommand{\val}{\mathrm{val}}
\newcommand{\OPIval}{\mathrm{val}_{\OPIshort}}

\let\OLDthebibliography\thebibliography
\renewcommand\thebibliography[1]{
  \OLDthebibliography{#1}
  \setlength{\itemsep}{.4ex}
}

\newcommand{\Z}{{\mathbb{Z}}}

\newcommand{\F}{{\mathbb{F}}}
\newcommand{\R}{\mathbb{R}}
\newcommand{\E}{\mathop{\mathbb{E}}}
\newcommand{\Pdist}{\mathbb{P}}

\newcommand{\Bs}{{\mathcal{B}}}
\newcommand{\Cs}{{\mathcal{C}}}
\newcommand{\Ds}{{\mathcal{D}}}

\newcommand{\Os}{{\mathcal{O}}}
\newcommand{\Ps}{{\mathcal{P}}}

\newcommand{\Rs}{{\mathcal{R}}}
\newcommand{\Ss}{{\mathcal{S}}}

\newcommand{\Us}{{\mathcal{U}}}

\newcommand{\Xs}{{\mathcal{X}}}

\newcommand{\Am}{{\mathbf{A}}}

\newcommand{\Sm}{{\mathbf{S}}}

\newcommand{\Xm}{{\mathbf{X}}}

\newcommand{\cv}{{\Vec{c}}}

\newcommand{\ev}{{\Vec{e}}}

\newcommand{\sv}{{\Vec{s}}}

\newcommand{\vv}{{\Vec{v}}}

\newcommand{\xv}{{\Vec{x}}}
\newcommand{\yv}{{\Vec{y}}}
\newcommand{\zv}{{\Vec{z}}}

\begin{document}
\title{\LARGE Exponential quantum space advantage in random data streams}

\author[1]{Adam Bouland\thanks{ \texttt{abouland@stanford.edu}}}
\author[1]{Matthew Ding\thanks{ \texttt{mcding@stanford.edu}}}
\author[2]{Siddhartha Jain\thanks{ \texttt{sidjain@utexas.edu}}}

\affil[1]{Department of Computer Science, Stanford University}
\affil[2]{Department of Computer Science, The University of Texas at Austin}

\date{November 2, 2026}
\maketitle

\begin{abstract}
    We show two unconditional quantum space advantages in the random-order streaming model.
    First, we show the Yamakawa--Zhandry Code Intersection problem admits exponential quantum advantage in the streaming model when its inputs are streamed in random order.
    This means quantum computers exhibit exponential space advantage even when simply receiving $(x,f(x))$ pairs for a uniformly random function $f$ in a uniformly random order.
    Notably, our classical space lower bound does not follow from a communication lower bound. Instead, we utilize density-restoring partitions from communication complexity directly with potential-function analysis in the style used to prove memory-sample tradeoffs for learning. The resulting argument is distinct from both lines of work.
    
    Second, we use our framework to show quantum space advantage for the Optimal Polynomial Intersection (OPI) problem in certain regimes via a streaming version of the Decoded Quantum Interferometry algorithm.
    In particular, we show that for degree $d$ and $n$ evaluation points, attaining $1/2 + \Omega(\sqrt{d/n})$ fraction of satisfied OPI constraints via streaming requires $\Omega(n)$ classical bits of memory but only $O(d\log n)$ qubits.
    This yields provable quantum advantage in a ``low-rate'' regime when the number of evaluation points is much larger than the degree, with a space advantage that can be as large as exponential in certain parameter settings.
\end{abstract}

\newpage

\section{Introduction}

Quantum computation offers numerous forms of computational advantage, reducing the time \cite{Shor_1997, HHL_2009}, space \cite{watrous1999space,ta2013inverting,kretschmer2026unconditional}, or communication \cite{Raz99exponential, Yossef04,KR11} needed to solve problems.
The streaming model---where one accesses the input one component at a time, and the goal is to solve the problem using far less space than the input length---is a natural testing ground for quantum advantage for processing classical data.
In streaming algorithms, a well-studied model is the \emph{random-order model}, where the input is revealed in a uniformly random order through the data stream \cite{munro1980selection,KKS14,KKS15,assadi2021random,chiplunkar2022factorial,hashemi2024weighted,alexandru2026unit}.

While several prior works have studied quantum advantage in streaming algorithms, e.g., \cite{le2006exponential,gavinsky2009exponential,montanaro2016frequency,kallaugher2024exponential,wang2026exponential,feng2026entropy}, so far any exponential quantum space advantage has only been proven with more structured or worst-case streaming models.
For example, Kallaugher, Parekh, and Voronova showed that quantum algorithms can have exponential space advantage for the maximum directed cut problem when the stream is adversarially ordered \cite{kallaugher2024exponential}, however this advantage no longer holds when the stream is randomly ordered \cite{saxena2023streaming}.
Recently, Zhao et al.\ introduced a correlated data stream model which admits exponential quantum space advantage for many natural machine learning tasks \cite{zhao2026exponential}, but where the data stream is not i.i.d.\ or uniformly ordered.\footnote{For example, their classical lower bound lifted from query separations is proven when the inputs are sampled randomly in blocks with a particular structure.}
There have also been a number of works lifting quantum communication advantages to streaming advantages, starting with work of Le Gall \cite{le2006exponential}; in general any communication lower bound can be turned into a streaming lower bound if Alice's input is streamed before Bob's, leading to quantum space separations, e.g., \cite{gavinsky2009exponential}. 
Recently G\"{o}\"{o}s, Gur, Jain, and Li \cite{goos2025quantum} studied a variant of the Code Intersection problem of Yamakawa and Zhandry \cite{Yamakawa2022} and showed that it admits an exponential quantum advantage in communication complexity, which can likewise be lifted to a form of streaming advantage in a worst-case ordering. 
There have also been works demonstrating polynomial space improvements for quantum streaming e.g. \cite{kallaugher2022quantum}.

In summary, prior works showing exponential space advantage have addressed structured data streams.
This leads to the natural question, which to the best of our knowledge remains open:
\begin{center}
    \textbf{\emph{Can quantum algorithms achieve exponential advantage for space complexity in the streaming model when the inputs are presented in a uniformly random order?}}
\end{center}
Random-order streaming can be thought of as relaxing the standard ``doubly worst-case'' streaming model (worst-case input \emph{and} order). Proving lower bounds for random ordering reveals average-case complexity and gives stronger evidence for a problem's hardness compared to a worst-case ordering
\cite{CJP2008selection,guha2009stream, CCM2016robust}.

\subsection{Our results}
We answer the above question in the affirmative. In particular, we show that the Code Intersection problem of Yamakawa and Zhandry \cite{Yamakawa2022} admits an exponential quantum space advantage when its inputs are randomly ordered. Recall that in the Code Intersection problem, the goal is to output a codeword of an error-correcting code where each entry is constrained by a random hash function. We show the following result:

\begin{theorem}[informal version of \cref{thm:yz-lower-bound,thm:YZ-algorithm}]\label{thm:yz-lower-bound-informal}
    Consider the streaming Code Intersection problem where the random oracle $H:\{0,1\}^{\poly(n)} \rightarrow \{0,1\}$ is streamed one entry at a time in a uniformly random order. While the Yamakawa--Zhandry algorithm solves this with $\poly(n)$ qubits of space, any classical streaming algorithm must use at least $\exp({n^{\Omega(1)}})$ bits of space.
\end{theorem}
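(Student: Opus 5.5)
The theorem has two halves, and I would prove them separately: a quantum upper bound and a classical lower bound.

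\textbf{Quantum upper bound.} The plan is to simulate the Yamakawa--Zhandry algorithm in a single pass over the stream. The algorithm prepares a superposition over codewords $\sum_{c} \ket{c}$, written as $\ket{c_1}\cdots\ket{c_n}$ with one register per symbol and $c_i\in\Sigma$. It then applies a phase $(-1)^{H(i,c_i)}$ to each coordinate, Fourier transforms, and postselects/decodes to obtain a codeword $c$ with $H(i,c_i)=1$ for all $i$.

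In streaming form, each arriving pair $((i,\sigma),H(i,\sigma))$ is handled as follows: if $H(i,\sigma)=1$, apply the controlled phase $\ket{\sigma}\!\bra{\sigma}$ to register $i$, and otherwise do nothing. These phase gates are diagonal and commute, so the arrival order is irrelevant. This is exactly why random order costs the quantum algorithm nothing.

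The state has $n\log|\Sigma|=\poly(n)$ qubits, and the post-stream Yamakawa--Zhandry decoding needs only $\poly(n)$ extra workspace. One subtlety is that the original algorithm computes $H$ into an ancilla in superposition; here the phases are applied entry by entry instead. I would check that the resulting state equals the Yamakawa--Zhandry state after the standard phase-kickback step. Their correctness analysis then applies verbatim.

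\textbf{Classical lower bound.} The target is $\exp(n^{\Omega(1)})$ space against a random order. Following the abstract, I would combine density-restoring partitions with a potential-function argument in the style of memory-sample tradeoffs.

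The stream has length $N=n|\Sigma|$ with $|\Sigma|=\exp(n^{\Omega(1)})$. I would view a width-$S$ algorithm as a branching program. At each time $t$ and memory state $m$, the posterior on the unseen entries of $H$ should be close to a \emph{dense} distribution: roughly, min-entropy deficiency at most $O(S)$ on every set of coordinates.

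\begin{itemize}
\item \emph{Partitioning.} Whenever the memory state's conditional distribution becomes non-dense, apply a density-restoring partition in the style of G\"o\"os--Pitassi--Watson. This refines each state into parts where a small set $F$ of coordinates is fixed and the remaining coordinates are dense.
\item \emph{Potential.} Track the total number of fixed coordinates, with $|F|\lesssim S/\delta$ amortized. Each new stream element either hits an already-fixed coordinate or increases the deficiency by at most one bit.
\item \emph{Random order.} Because the order is uniformly random, the stream is independent of the adversary's fixed coordinates. So in expectation only an $|F|/N$ fraction of arrivals interact with structure the algorithm has committed to.
\item \emph{Conclusion.} At the end, output a codeword $c$. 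For each coordinate $i$ with $(i,c_i)$ not fixed, density gives $\Pr[H(i,c_i)=1]\le 1/2+o(1)$. Yamakawa--Zhandry-style list recovery of the code should show that few codewords avoid dense coordinates when $|F|\ll|\Sigma|^{\Omega(1)}$. This forces $S\ge \exp(n^{\Omega(1)})$, since otherwise the success probability is $2^{-\Omega(n)}$ after a union bound over the list.
\end{itemize}

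\textbf{Main obstacle.} The hardest step is controlling how the partition interacts with the streaming order. Memory states at later times are unions of parts from different earlier partitions, and potential-function arguments normally need the ``truncate bad edges'' trick. I would first try defining the potential as the expected (over the random order) number of fixed coordinates among those not yet streamed. I would then show it grows by $O(S)$ per density-restoration event and that there are few such events. This requires bounding the loss in each restoration without a communication reduction.
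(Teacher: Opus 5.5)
Your quantum half has the right core idea: input-dependent commuting gates during the pass, then input-independent postprocessing. The classical half, however, has a genuine gap, and it sits in your final step, not only in the obstacle you flag.

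A static density-restoring partition plus list recovery, applied to the final memory state, cannot bound the success probability. At $t=M$ every address has been read. A final posterior of the form ``$\H$ uniform conditioned on $\H_{T_{\cv_v}}\equiv 1$'' has min-entropy deficiency only $n\ll S$, so it satisfies every density and list-recovery constraint available. List recovery says few codewords have $\geq\rho n$ coordinates inside the fixed set, but says nothing about whether those fixed bits are $1$. Your union bound over the list therefore has nothing to bound.

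The hardness is dynamic. You must charge progress at the moment a codeword \emph{first} has $\kappa\approx\rho n+\log(1/\eta)/\gamma$ observed coordinates. At that moment its remaining $n-\kappa+1$ addresses are unread, hence exactly uniform given the state, and contribute a factor $2^{-(n-\kappa+1)}$.

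Several intermediate claims also misplace the structure. Conditioned on $E_{v,A}$, the unseen part $\H_{A^C}$ is exactly uniform and independent of $\H_A$, so everything the algorithm knows concerns the observed bits. A potential counting fixed coordinates among unstreamed addresses therefore measures nothing. Since each address arrives once, a new arrival never touches a coordinate the algorithm has information about, so the $|F|/N$ heuristic is vacuous. More broadly, your dynamic subcube-style refinement is the communication-complexity route. The paper notes this route relies on a coordinate-wise split of the input and does not survive random interleaving.

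The paper instead uses the threshold-crossing potential
$\Phi(t)=\sum_{v,A}\Pr(E_{v,A})\sum_{\cv:\,k_{\cv}\geq\kappa}2^{-(n-k_{\cv})}\bigl[\Pdist_{\H\mid v,A}(\val_{\H}(\cv;A)=k_{\cv})-2\eta\bigr]^+$. It satisfies $\Phi(0)=0$ and $\Pr(\text{success})\leq 2\eta+\Phi(M)$. The argument then has four parts.
\begin{itemize}
\item \emph{Merging.} Your main obstacle is handled by Jensen's inequality for $[\cdot]^+$: passing from incoming edges to a vertex never increases the potential. No truncation or dynamic partitioning is needed.
\item \emph{Reading a new element.} Codewords already above threshold contribute no net change, because the doubling of $2^{-(n-k)}$ exactly cancels the probability $1/2$ that the new bit is $1$. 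So only codewords at exactly $\kappa-1$ matter.
\item \emph{Random order.} This is the only place it is used: a codeword at $\kappa-1$ crosses with probability $(n-\kappa+1)/(M-t)$. Summing $1/(M-t)$ over $t$ costs only a $\ln M$ factor.
\item \emph{Static partition.} The density-restoring partition is applied separately at each fixed $t$ to $\H_A\mid E_{v,A}$. Its deficiency is at most $S+\log|\Cs|$ outside a $1/|\Cs|$ mass of vertices (Dodis et al.), directly from the width, with no amortized tracking. Core parts fix at most $\ell$ bits, so list recovery bounds the sum of brackets over threshold codewords by $L$. Codewords with fewer than $\rho n$ fixed coordinates are killed by density on their remaining $\geq\log(1/\eta)/\gamma$ observed coordinates.
\end{itemize}

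On the quantum side, Yamakawa--Zhandry does not start from $\sum_{\cv\in\Cs}\ket{\cv}$. Its only oracle-dependent input is $\bigotimes_i\ket{\tau_i}$, with $\ket{\tau_i}$ uniform over $\H_i^{-1}(1)$. Phase gates alone do not give this. On a uniform register, $(-1)^{\H_i(a)}$ produces a state with almost no Fourier weight at zero, i.e., full-weight errors the dual decoder cannot correct. The paper instead XORs $\H_i(a)$ into an ancilla during the pass and measures afterwards. Since a single pass allows no retries, it prepares $C\log n$ copies per coordinate, so all $n$ postselections succeed with probability $1-o(1)$.
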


In particular, this means that quantum computers have exponential advantage even when processing completely unstructured streams, i.e., when streaming entries of a uniformly random function $f$ one at a time through $(x,f(x))$ pairs in a uniformly random order. Thus our result can be interpreted as showing ``quantum advantage without structure'' in the streaming model in terms of both the input values as well as the stream ordering.

We then extend our techniques to the Optimal Polynomial Intersection (OPI) problem, introduced in the study of the DQI algorithm \cite{jordan2025dqi}. Here the goal is to find a low-degree polynomial over a finite field intersecting as many specified sets as possible. Fascinatingly, the DQI algorithm has produced better approximation ratios than the best known classical algorithm in certain parameter regimes, raising the possibility of quantum advantage for this optimization problem~\cite{khattar2025optimized}.

We show the OPI problem unconditionally yields quantum space advantage in the streaming model when the inputs are randomly ordered, but only in certain parameter regimes. In particular, the OPI problem is parameterized by the degree upper bound $d$ of the polynomial, the number of evaluation points $n$ of the polynomial, and the distribution over sets to be intersected at each point. 
We show that the DQI algorithm exhibits a space advantage for OPI of the following form:

\begin{theorem}[informal version of \cref{thm:opi-lower-bound,thm:DQI-algorithm}]\label{thm:opi-lower-bound-informal}
    Consider the streaming OPI problem where the inputs appear in a uniformly random order. While the DQI algorithm achieves a value of $\lceil n/2 + \sqrt{dn}/8\rceil$ for this problem using $O(d \log n)$ qubits of space, any classical streaming algorithm achieving this value must use $\Omega(n)$ classical bits of space.
\end{theorem}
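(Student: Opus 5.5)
The statement has two halves, and I would prove them separately: a quantum upper bound of $O(d\log n)$ qubits, and a classical lower bound of $\Omega(n)$ bits for attaining value $\lceil n/2+\sqrt{dn}/8\rceil$.

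\textbf{Quantum upper bound.} I would implement DQI in streaming form. The algorithm keeps only an $O(d\log n)$-qubit register. This register holds a superposition over error locations (a Dicke-like state over at most $\ell=O(d)$ positions, encoded as sorted lists of indices in $[n]$) together with the syndrome register in $\F_q^{d+1}$. When a pair $(i,F_i)$ arrives, the algorithm applies the controlled phase $(-1)^{\mathds{1}[y\in F_i]}$, centered by $|F_i|/q\approx 1/2$, for the $i$th coordinate. It then adds that coordinate's contribution $y_i\cdot(1,\alpha_i,\dots,\alpha_i^d)$ to the syndrome register and uncomputes the per-coordinate data. Both steps act only on the coordinate currently being streamed, so the arrival order does not matter.

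The hard part of this half is the final step: decoding the syndrome back to the error pattern reversibly within $O(d\log n)$ space. I would use the fact that Reed--Solomon syndrome decoding up to $\ell<d/2$ errors (Berlekamp--Massey) needs only $\poly(d)\log n$ workspace and no access to the stream. The semicircle-law analysis of \cite{jordan2025dqi} with $\ell=\Theta(d)$ then gives expected value $n/2+\Omega(\sqrt{\ell n})$. To reach $\lceil n/2+\sqrt{dn}/8\rceil$ with good probability, I would tune the constants and amplify by repetition, keeping the best sample. Comparing candidate values needs a second pass, so I expect the stated theorem really asks for expected value or success with constant probability, and I would aim for that.

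\textbf{Classical lower bound.} I would build a potential-function argument over the random-order stream. Suppose the memory has $s=o(n)$ bits. I would fix the memory state after $t$ items and condition on it. By density-restoring partitions, the conditional distribution of the unseen sets $F_j$ stays close to uniform, with min-entropy deficit at most about $s$, on all but a few coordinates. The coordinates already seen can be recalled only through the $s$ bits.

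Any output polynomial $p$ of degree $d$ is fixed once the stream ends. Its agreement with a coordinate $j$ whose set $F_j$ is nearly uniform given the memory is then a near-fair coin. The key step is therefore to bound how many coordinates the algorithm can effectively remember: it needs to correlate $p$ with $\Omega(\sqrt{dn})$ excess coordinates. I would show that remembering $k$ bits of information about past sets costs $\Omega(k)$ memory. The point is that in random order the algorithm cannot know which coordinates $p$ will be evaluated at favorably without storing the relevant sets.

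The gain above $n/2$ comes from an $\epsilon$-biased guess on each of $m$ coordinates, summing to about $\epsilon m$. By Pinsker, $\epsilon^2 m\lesssim s$, so $\epsilon m\le\sqrt{sm}$. Reaching $\sqrt{dn}/8$ would need $\sqrt{sn}\ge\sqrt{dn}/8$, which only gives $s=\Omega(d)$, not $\Omega(n)$. This is the main obstacle. The bound must also use that $p$ has degree only $d$: it has just $q^{d+1}$ choices, so $p$ cannot adapt coordinate by coordinate.

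The fix I would try is to count the number of polynomials achieving the target agreement against a fresh random instance. Any fixed polynomial agrees on $n/2\pm O(\sqrt n)$ coordinates. A union bound over $q^{d+1}$ polynomials shows the best one reaches only $n/2+O(\sqrt{dn\log q})$, and the $1/8$ constant sits below this. The algorithm must therefore nearly identify the optimal polynomial. Since that optimizer depends on $\Omega(n)$ bits of the instance, an information-cost or density-restoring argument forces $\Omega(n)$ memory. Making this step rigorous is the crux.
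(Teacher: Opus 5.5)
The genuine gap is the classical half. You do not supply the core argument, and the route you sketch for the crux would fail. The step ``the algorithm must therefore nearly identify the optimal polynomial'' does not follow from your union bound. Each fixed polynomial reaches $\tau=n/2+\Delta$, $\Delta\ge\sqrt{dn}/8$, with probability $e^{-\Theta(d)}$, so the expected number of valid outputs is $q^{d}e^{-\Theta(d)}$. With exponentially many solutions, neither near-optimality nor ``the optimizer depends on $\Omega(n)$ bits'' constrains memory. Indeed, outputting a fixed polynomial already succeeds with probability $e^{-\Theta(d)}$, and what must be shown is that $cn$ bits cannot push this above $2^{-\Omega(d)}$.

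The paper handles the obstacle you correctly identified (biasing one polynomial costs only $\Theta(d)$ bits) by tracking all codewords at once. It uses the threshold potential of \cref{def:generalized-potential}, which sums the binomially weighted terms $\bigl[\Pdist_{\H\mid v,A}(\val_{\H}(\cv;A)\ge j)-2\eta\bigr]^+$ over all $\cv$ and $j\ge\kappa(k_{\cv})=\lceil k_{\cv}/2+s\rceil+1$, with $s=\Delta/2$. Memory updates cannot increase this potential (Jensen). A fresh random-order item raises it only through the posterior probability that a touched codeword already has observed surplus at least $s$. Such a touch happens with probability $(n-k)/(M-t)$, and the term carries weight $\beta_{n-k-1}(\tau-\kappa(k))\le e^{-\Delta^2/(8n)}$, since the unseen coordinates, which are fresh fair coins, must still supply surplus about $\Delta/2$.

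Low degree enters through \cref{lemma:opi-pairwise-agreement}. Take $r=\lfloor n/d\rfloor$ distinct codewords agreeing pairwise on at most $d-1$ coordinates. Hoeffding bounds the probability that all of them have observed surplus $s$ by $e^{-rs^2/n}$, hence by $2^{D}e^{-rs^2/n}$ under a posterior on $\H_A$ with min-entropy deficiency $D$. So for $D\le s^2/(8d)=\Theta(n)$, fewer than $n/(d\eta)$ codewords (with $\eta=e^{-s^2/(4n)}$) are ``hot''. Summing $1/(M-t)$ over the stream then bounds success by $n^3e^{-d/1024}$. This direct-product bound over $n/d$ nearly disjoint codewords is the source of the $\Omega(n)$. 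Neither Pinsker nor density restoration is used: the paper does not use density restoration for OPI, and the unseen bits are exactly uniform given the memory state anyway.

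Your quantum half is repairable but does not match the stream model. The stream delivers single bits $(i,x,\mathds{1}[x\in F_i])$ interleaved in random order, not blocks $(i,F_i)$. So you cannot center by $|F_i|/q$ or accumulate and uncompute each coordinate's syndrome on the fly. The paper instead applies commuting $\pm1$ phases item by item to uniform superpositions on the at most $\ell$ active coordinates. It removes the mean afterwards by projecting out the zero Fourier frequency, and forms the syndrome only after the pass.

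A $\poly(d)\log n$ decoder does not give $O(d\log n)$ qubits. The paper erases the error register by enumerating all weight-$\le\ell$ errors and comparing syndromes, using exponential time but $O(d\log n)$ space. Finally, no repetition or second pass is needed. With inclusion weight $d/(16n)$ per location, the DQI score concentrates at $n/2+(1/4+o(1))\sqrt{dn}$, twice the target excess, so a single run succeeds with probability $1-o(1)$.
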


Thus DQI achieves a streaming advantage when the number of evaluation points $n$ is much more than the degree $d$.\footnote{We note that the full theorem statement requires $d=\omega(\log n)$ and $d=o(n/\log n)$ to achieve the stated bounds.}
This advantage can be exponential in certain asymptotic regimes: for example, if $d=\Theta(\log^2n)$, our theorem results in an $O(\log^3 n)$ quantum versus $\Omega(n)$ classical space advantage, though we note this parameter setting is non-standard in prior DQI optimization works.
Our theorem also gives large polynomial advantages in other regimes, e.g., when $d\sim n^{0.01}$. 
Interestingly, the ``low-rate'' setting where $d\ll n$ is one setting where there is empirical evidence for quantum advantage with DQI in the optimization setting \cite{khattar2025optimized}.\footnote{For example, one finite-sized instance considered in \cite{khattar2025optimized} sets $n=4095, d=70$.}
As streaming can be seen as a simplified ``testbed'' for demonstrating quantum advantage similar to query or communication complexity, our result can be interpreted as indirect evidence that DQI might offer advantage for optimization.
However, we emphasize that our lower bounds only apply to streaming models, and not to the original OPI problem.

\subsection{Proof sketch}
\paragraph{Quantum streaming algorithms.} 
To show quantum advantage for these problems, we first need to show that they admit space-efficient quantum streaming algorithms. Our algorithms rely on the fact that both the Yamakawa--Zhandry algorithm and the DQI algorithm are completely nonadaptive, i.e., they can be implemented in two phases where the first phase depends on the input and the second phase is independent of the input. In this case, the first phase requires only input-dependent commuting gates, which can be implemented while the data is streaming with a single pass.\footnote{Note that one can also use the quantum oracle sketching algorithm of Zhao et al.~\cite{zhao2026exponential}, but this machinery is not required here as a much simpler streaming algorithm suffices.} We particularly use a more space-efficient version of DQI \cite{khattar2025optimized}. We give simple proofs of the correctness of the streaming algorithms in \cref{sec:algorithms}.

\paragraph{Prior classical lower bounds do not suffice.}
The main challenge is now to show classical lower bounds for these streaming problems.
This does not follow from prior works lower bounding the communication complexity of these problems, essentially because the random ordering of the stream breaks up the communication structure of the problem. 
For example, the lower bound of ~\cite{goos2025quantum} for the communication complexity of Code Intersection only works when the input is ordered in a certain way, namely when one first sees the constraints on the first half of the codeword, and then sees the constraints on the second half of the codeword.
More generally there is a barrier for using communication complexity for our lower bound, because there is actually a \emph{zero-communication} protocol for Code Intersection when the input is randomly distributed between two parties.\footnote{This is because even if one has only seen half of the input stream, with high probability there exists a valid codeword whose coordinates all map to 1 within the inputs observed so far. Thus the communication complexity of this problem is trivial across any partition of a random-order stream into a prefix and suffix.}
Another approach is to use general lower bounds for Read-Once Branching Programs (ROBPs) which imply streaming space lower bounds. Here, there is a line of work~\cite{raz2018fast,Raz17,GargRazTal18,Moshkovitz18} giving generic lower bounds, but these techniques are specialized for problems with a \emph{unique} planted solution. Both YZ and OPI are search problems with exponentially many valid solutions, so their technique does not work as a black-box for us. 
  
\emph{Our approach:} To achieve our lower bound, we combine the techniques from the recent communication lower bound for Code Intersection \cite{goos2025quantum} with the technique for proving ROBP lower bounds for space-bounded learning problems ~\cite{raz2018fast,Raz17,GargRazTal18,Moshkovitz18}. This fusion of two important tools for studying communication and space complexity may be of independent interest.

\paragraph{Density-restoring partitions.} At a high level, our lower bound is obtained by showing that the output of the Yamakawa--Zhandry problem meaningfully depends on a large fraction of the input.
We formalize this in our lower bound by quantifying the \emph{uncertainty} in a marginal distribution of inputs conditioned on the current state of the classical algorithm via its min-entropy. An important tool here is density-restoring partitions from query-to-communication lifting literature~\cite{GPW20,CFKMP21,goos2025quantum}.
This is a tool to `concentrate' the certainty in a given input onto  a small marginal distribution which is fixed, leaving all the other marginal distributions with near-maximal uncertainty. This makes the analysis closer to proving query lower bounds.
The work of \cite{goos2025quantum} decomposes the distribution after conditioning on the transcript of the communication protocol into `subcube-like' parts, which is a generalization of query algorithms. They then show that one can show a lower bound for subcube-like protocols using an argument similar to the query lower bound of~\cite{Yamakawa2022}.

However, we cannot reduce to subcubes in our setting. This is because their proof relies on the property that `if Alice is the first player to fix 80\% of her input, Bob has less than 80\% of his input fixed.' This relies on the input being split coordinate-wise, which is no longer true once we stream the oracle in a random order. 

\paragraph{Read-once branching programs.} To fix this problem, we  instead directly lower bound the width of ROBPs for solving the search problem.
Given these pre-processed distributions, which are either `fixed' (have no uncertainty) or `dense' (have near-maximal uncertainty), we use a simple potential function to measure the progress made by a classical algorithm modeled as an ROBP. This is similar to the approach of \cite{GargRazTal18}, who also use a convex potential function to track progress, whose results were further inspired by \cite{Raz17,raz2018fast,Moshkovitz18}. The details of our analysis differ significantly and are specialized to the setting of Code Intersection. Recall that in the Code Intersection problem we want to find a codeword which is mapped to $1^n$ by a random hash function, which is the input.
Intuitively, our potential function is just the conditional probability that a codeword is mapped to $1^n$, summed over all the codewords which have already crossed a fixed threshold of mapped coordinates.
Formally, this requires a couple of technical changes as described in \cref{sec:YZ-lower-bound}.
We crucially use the list-recoverability properties of the folded Reed-Solomon code to prove the lower bound, similar to the previous lower bounds for the Code Intersection problem~\cite{Yamakawa2022,goos2025quantum,bostanci2026advice}.

\paragraph{Fine-grained analysis for OPI.}
Extending these results to OPI requires more careful examination.
For OPI, we have a constraint satisfaction problem (CSP) and we want to say that the classical algorithm cannot achieve a high value with small space. To handle this, we modify our potential function to account for this allowed approximation. This is actually a simple modification of our threshold potential function, and it fits our framework for proving lower bounds for Code Intersection. The second place where our lower bound for OPI differs is that we forego the use of density-restoring partitions to bound the entropy in marginal distributions, and explicitly use the fact that two distinct polynomials of degree less than $d$ can agree on at most $d-1$ points.
This additional technical work leads to a more fine-grained lower bound for the corresponding ROBPs, see \cref{sec:OPI-lower-bound} for details.

\subsection{Acknowledgments}
We thank Avishay Tal for helpful discussions, and Stephen Jordan and Haimeng Zhao for their suggestions on the manuscript. 
A.B.\ was partially supported by the National Science Foundation under award No.\ 2016245 and 2440805, by the Air Force Office of Scientific Research under grant agreement FA9550-21-1-0392 and FA9550-24-1-0089, and by the DOE Office of Science under grant agreement DE-SC0025934. 
M.D.\ was supported in part by the National Science Foundation Graduate Research Fellowship Program under Grant No.\ DGE-2146755. 
S.J.\ was supported by the Amazon AI fellowship.

\paragraph{Concurrent work.}
We recently became aware of two independent and complementary works: (1) Wu and Yang~\cite{WuYang2026} showed streaming advantage for DQI in an incomparable streaming setting (worst-case order, multipass), and (2) Voronova~\cite{voronova2026random} showed exponential quantum space advantage for an ``artificial'' problem based on Hidden Matching in an incomparable random-order streaming setting (repeated inputs, online outputs).

\paragraph{AI disclosure.}
The authors proposed the central questions of this work and identified the barriers to using existing proof techniques.
ChatGPT 6 Astra provided the first proof sketch of the classical streaming lower bound, which the authors then formalized, simplified, and verified. 
The authors contributed all writing and exposition, while ChatGPT and Claude assisted with proofreading and minor suggestions.
The authors take responsibility for the originality and correctness of all proofs.

\section{Preliminaries}
$[n]$ denotes the set $\{1, \ldots, n\}$. For finite sets $S, T$, we denote $S^T$ as the collection of all functions $f: T \rightarrow S$.  For a string $\sv\in \bits^S$, let $|\sv|$ denote the Hamming weight of $\sv$. For a random function $\H: \Us \rightarrow \bits$ and a subset of the domain $S\subseteq \Us$, we write $\H_{S}$ to denote the restriction of $\H$ onto the subset $S$, and write $\H_{S} \equiv 1$ if $\H(u)=1$ for all $u\in S$. Vectors are denoted $\vv$. $\log$ denotes base-2 logarithm and $\ln$ denotes the natural logarithm. Let $[x]^+ \triangleq \max(x, 0)$ for $x\in\R$. We use the convention for binomial coefficients that $\binom{n}{k}=0$ for any $k\notin \{0, 1, \ldots, n\}$.

\subsection{Probability}
Random variables $\Xm$ are denoted in bold. Let $\mathrm{supp}(\Xm)$ denote the support of a random variable $\Xm$. We use $\mathds{1}[E]$ to denote the indicator random variable for the event $E$. 

\begin{proposition}[Hoeffding's inequality]\label{prop:hoeffding}
    If $\Xm_1, \Xm_2, \ldots, \Xm_m$ are independent random variables where $\forall i: a_i \leq \Xm_i \leq b_i$ and $\Sm_m = \Xm_1 + \Xm_2 + \cdots + \Xm_m$, then for $\delta\geq 0$,
    \begin{equation}
        \Pr(\Sm_m - \E \Sm_m \geq \delta) \leq e^{-\left(\frac{2\delta^2}{\sum_{i}(b_i-a_i)^2}\right)}.
    \end{equation}
\end{proposition}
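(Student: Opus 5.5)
The plan is the standard exponential-moment (Chernoff) argument. By replacing $\Xm_i$ with $\Xm_i - \E\Xm_i$, we may assume each $\Xm_i$ has mean zero and takes values in an interval $[a_i', b_i']$ of length $b_i - a_i$. Then $\E\Sm_m = 0$.

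For any $\lambda > 0$, the map $x \mapsto e^{\lambda x}$ is increasing and nonnegative. Markov's inequality gives
\begin{equation*}
\Pr(\Sm_m \geq \delta) \leq e^{-\lambda\delta}\,\E e^{\lambda \Sm_m} = e^{-\lambda\delta}\prod_{i=1}^m \E e^{\lambda \Xm_i},
\end{equation*}
where the factorization uses independence.

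The key step, and the only nontrivial one, is Hoeffding's lemma: a mean-zero random variable $\Xm$ supported in $[a,b]$ satisfies $\E e^{\lambda \Xm} \leq e^{\lambda^2 (b-a)^2/8}$. To prove it, I would use convexity. For $x \in [a,b]$,
\begin{equation*}
e^{\lambda x} \leq \frac{b-x}{b-a}e^{\lambda a} + \frac{x-a}{b-a}e^{\lambda b}.
\end{equation*}
Taking expectations and using $\E\Xm = 0$ gives
\begin{equation*}
\E e^{\lambda\Xm} \leq \frac{b}{b-a}e^{\lambda a} - \frac{a}{b-a}e^{\lambda b}.
\end{equation*}
I would write the right-hand side as $e^{L(h)}$, where $h = \lambda(b-a)$, $p = -a/(b-a)$, and
\begin{equation*}
L(h) = -hp + \ln(1 - p + p e^{h}).
\end{equation*}
One checks that $L(0) = 0$ and $L'(0) = 0$. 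Also $L''(h) = q(1-q) \leq 1/4$, where $q = pe^h/(1-p+pe^h) \in [0,1]$. Taylor's theorem then gives $L(h) \leq h^2/8$. The main care point is verifying this second-derivative bound.

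Combining the pieces,
\begin{equation*}
\Pr(\Sm_m \geq \delta) \leq \exp\Bigl(-\lambda\delta + \tfrac{\lambda^2}{8}\textstyle\sum_i (b_i-a_i)^2\Bigr).
\end{equation*}
I would optimize by choosing $\lambda = 4\delta/\sum_i (b_i-a_i)^2$, which yields exactly $e^{-2\delta^2/\sum_i (b_i-a_i)^2}$. Two degenerate cases remain. If $\delta = 0$ the bound is trivially $1$. If $\sum_i (b_i-a_i)^2 = 0$, every $\Xm_i$ is constant, so the left-hand side is $0$ for $\delta > 0$ and the claim holds under the convention $e^{-\infty} = 0$.
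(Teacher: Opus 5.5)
Your argument is correct. It is the standard Chernoff bound (Markov's inequality applied to $e^{\lambda S_m}$, then factorization by independence) combined with Hoeffding's lemma via the convexity bound and $L''(h) = q(1-q)\le 1/4$, and the choice $\lambda = 4\delta/\sum_i (b_i-a_i)^2$ gives exactly the stated exponent. The paper does not prove this proposition. It quotes it as a standard tool and applies it only with $\delta>0$ to bounded summands: Bernoulli variables in the OPI lower bound, and the weighted bits $w(u)(X_u-1/2)$ in the pairwise-agreement lemma. Your write-up is therefore the textbook proof the paper implicitly relies on, and there is no competing argument to compare against.
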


\subsection{Streaming model and problem definition}
Let $\Sigma$ denote an alphabet and $n$ denote the block length of a code. Define the \emph{address set} $\Us \triangleq [n]\times\Sigma$, and set $M\triangleq |\Us| = n|\Sigma|$. A hidden input $(\H, \bm{\pi})$ is sampled from a predefined distribution, where $\H: \Us\rightarrow\bits$ and $\bm{\pi}: [M] \rightarrow \Us$, and the input stream is then given as
\begin{equation}
    \left(\bm{\pi}(1), \H(\bm{\pi}(1))\right), \left(\bm{\pi}(2), \H(\bm{\pi}(2))\right),\ldots, \left(\bm{\pi}(M), \H(\bm{\pi}(M))\right).
\end{equation}
We will often treat $\H: [n]\times \Sigma \rightarrow \bits$ as $n$ separate functions and use the notation $\H_i(\cdot)\triangleq \H(i, \cdot): \Sigma\rightarrow\bits$ for convenience.

Now we formally define the $\CI$ and $\OPI$ problems in the streaming model\footnote{We define our problems for any arbitrary stream ordering. Our algorithms succeed with high probability given any fixed worst-case stream ordering, while our lower bounds are with high probability over a uniformly random ordering. Note that the random-order lower bound is a \emph{stronger} statement compared to a worst-case lower bound.}:
    
\begin{problem}[$\CI$ ($\CIshort$)]\label{problem:YZ-streaming}
    Let $\Cs \subseteq (\F_q^m)^n$ be an error-correcting code. Sample a uniformly random oracle $\H: [n]\times \F_q^m \rightarrow \bits$. The stream consists of the elements from set 
    \begin{equation}
        \Ss_{\CIshort} \triangleq \{(i,x, \H_i(x)): i\in[n], x\in\F_q^m\}
    \end{equation}
    in some ordering $\bm{\pi}$. Given a single pass over the stream, the goal is to output a codeword $\cv=(c_1,\dots,c_n)\in \Cs$ satisfying
    \begin{equation}\label{eqn:YZ-condition}
        \H_i(c_i)=1, \quad \forall i\in[n].
    \end{equation}
\end{problem}

\begin{problem}[$\OPI$ ($\OPIshort$)]\label{problem:OPI-streaming}
    Let $q$ be a power of two, $n=q-1$, and $1\leq d\leq n$. Fix distinct evaluation points $\alpha_1,\ldots,\alpha_n\in\F_q^\ast$, and write $\Ps_d\triangleq\{f\in\F_q[X]:\deg f<d\}$. Sample each target set $F_1, F_2, \ldots, F_n \subseteq \F_q$ independently by including each element of $\F_q$ using an i.i.d Bernoulli with probability $1/2$. The stream consists of the elements from set 
    \begin{equation}
        \Ss_{\OPIshort} \triangleq \{(i,x, \mathds{1}[x\in F_i]): i\in[n], x\in\F_q\}
    \end{equation}
    in some ordering $\bm{\pi}$. Given a single pass over the stream and a fixed value $\tau \in [n]$, the goal is to output a polynomial $f\in\Ps_d$ such that
    \begin{equation}\label{eqn:OPI-value}
        \OPIval(f)\triangleq \sum_{i=1}^n \mathds{1}[f(\alpha_i)\in F_i] \geq \tau.
    \end{equation}
\end{problem}

Both of these problems can be unified via a search problem of a generalized Code Intersection relation in the streaming model:
\begin{definition}[Generalized search problem]\label{def:code-relation}
    Fix some code $\Cs\subseteq\Sigma^n$. Given a uniformly random sampled function $\H: [n]\times\Sigma\rightarrow\bits$, define the generalized Code Intersection relation $\Rs_{\Cs, \H}\subseteq \bitsn\times\Sigma^n$ as
    \begin{equation}
        \Rs_{\Cs, \H} \triangleq \{(\yv, \cv) \in \bitsn\times\Sigma^n: (\cv\in\Cs) \wedge (\forall i\in[n], \H_i(c_i)=y_i)\}.
    \end{equation}
    The stream consists of the elements from set $\{(i,x, \H_i(x)): i\in[n], x\in\Sigma\}$. 
\end{definition}
\begin{itemize}
    \item \cref{problem:YZ-streaming}: Set $\Sigma \gets \F_q^m$. The goal of $\CIshort$ is equivalent to outputting any codeword $\cv\in\Cs$ such that $(\Vec{1}, \cv)\in  \Rs_{\Cs, \H}$. 

    \item \cref{problem:OPI-streaming}: Set $\Sigma \gets \F_q$. For polynomial $f\in\Ps_d$, let $\cv_f\triangleq(f(\alpha_1),\ldots,f(\alpha_n))$ denote its evaluation vector. We define $\Cs$ as the Reed--Solomon code
    \begin{equation}
        \Cs\triangleq\RS_q(n,d)=\{\cv_f: f\in\Ps_d\}\subseteq\F_q^n.
    \end{equation}
    We can equivalently define $F_i \triangleq \{x\in \F_q: \H_i(x)=1\}$. The goal of $\OPIshort$ is equivalent to outputting any codeword $\cv\in\Cs$ such that $(\yv, \cv)\in \Rs_{\Cs, \H}$ and $|\yv| \geq \tau$.
\end{itemize}

For the remainder of the paper we will primarily use the codeword search problem interpretation for $\OPI$, which is equivalent to \cref{problem:OPI-streaming} up to an isomorphism.

\subsection{Branching programs}
We assume without loss of generality that our streaming algorithm is deterministic, as we can fix the randomness of any randomized streaming algorithm. We model computation of streaming algorithms using a branching program. Branching programs represent any possible unbounded computation using bounded memory and sample complexity. This model has been used successfully to prove memory-sample tradeoffs for a large class of learning problems \cite{raz2018fast, moshkovitz17a, GargRazTal18}.

\begin{definition}[Branching program]
    A \emph{branching program} $\Bs$ with length $M$ and width $w$ is a layered directed graph with layers $0, 1, \ldots, M$ (which we sometimes refer to as time $t$) and vertices $V \subseteq \{0,1,\ldots, M\}\times[w]$. Layer 0 has a single vertex, called the start vertex. Layer $M$ contains only leaf vertices with out-degree 0, each corresponding to a specific output of the branching program. All other vertices in layers $0\leq t < M$ have out-degree $2M$ with all edges pointing towards some vertex in layer $t+1$. Each one of these $2M$ edges is indexed by a unique element in the set $\Us \times \bits$. A fixed input stream defines a unique computation path which traverses exactly one vertex in each layer. The goal of the branching program is to compute some predefined relation $\Rs_{\H}$ which will be independent of the input order.
\end{definition}

\paragraph{Analysis notation.}
We introduce additional notation for our branching program analysis. $\Us\triangleq [n]\times\Sigma$ denotes the set of all addresses and $\Us_t \triangleq \binom{\Us}{t}$ denotes the set of all possible subsets of addresses which are revealed up to time step $t$. Let $V_t$ denote the vertices in layer $t$, and let $\Gamma_t$ denote the edges from layer $t-1$ to layer $t$. For a vertex $v$ in $\Bs$, let $\Gamma_{\text{in}}(v)$ denote the incoming edges into vertex $v$ and $\Gamma_{\text{out}}(v)$ denote the outgoing edges from vertex $v$.

All probabilities and expectations are implicitly over the uniform distribution of all random oracles $\H: \Us \rightarrow \bits$ and the uniform over all bijections $\bm{\pi}: [M]\rightarrow \Us$, unless denoted otherwise. For fixed vertex $v\in V_t$ in layer $0\leq t \leq M$, edge $e\in \Gamma_{t}$ incoming into layer $1\leq t\leq M$, and size-$t$ subset $A\subseteq \Us$, we define the events
\begin{equation}
    E_{v,A} \triangleq \{\bm{A_t} = A, \bm{v_t} = v\}, \qquad E_{e,A} \triangleq \{\bm{A_t} = A, \bm{e_t} = e\},
\end{equation}
and similarly for $E_v$, $E_e$, and $E_A$. We let $\bm{v_t}$ be the vertex that is reached in $V_t$ and $\bm{e_t}$ be the edge that is reached in $\Gamma_t$. Let $\bm{A_t} \triangleq \{\bm{\pi}(1), \bm{\pi}(2),\ldots, \bm{\pi}(t)\}$ denote the addresses appearing in the first $t$ entries of the input stream.

For any fixed time $0 \leq t \leq M$, we use the following notation for distributions conditional on $e\in \Gamma_t, v\in V_t, A\in \Us_t$:
\begin{equation}
    \Pdist_{\H\mid v,A}(\cdot) \triangleq \begin{cases} 
      \Pr_{\H,\bm{\pi}}\left(\cdot  \mid E_{v,A}\right) & \text{if } \Pr_{\H, \bm{\pi}}(E_{v,A}) > 0 \\
      0 & \text{if } \Pr_{\H, \bm{\pi}}(E_{v,A}) = 0
   \end{cases}
\end{equation}
and define $\Pdist_{\H\mid e,A}$ likewise. For $t=M$, we necessarily have $A = \Us$, and we will use $\Pdist_{\H\mid v}$ as shorthand for $\Pdist_{\H\mid v,\Us}$. For a vertex $v\in V_M$, let $\cv_v$ denote its associated output (we can assume without loss of generality that $\cv_v$ is a valid codeword in $\Cs$).

Define $T_{\cv} \triangleq \{(i, c_i): i\in[n]\} \subseteq \Us$ which defines the set of all stream elements corresponding to codeword $\cv$. Given a set of observed addresses $A\subseteq \Us$, i.e., the addresses which have already appeared in the input stream at a given point in time, the set $T_{\cv} \cap A$ represents all observed addresses corresponding to codeword $\cv$. Let $A^C \triangleq \Us \setminus A$ denote the set of addresses which have yet to appear in the input stream. The set $T_{\cv} \cap A^C$ is then the set of addresses corresponding to $\cv$ which have not yet appeared in the stream. Define $k_{\cv}\triangleq |T_{\cv} \cap A|$ which denotes the number of the observed addresses corresponding to $\cv$, with the dependence on $A$ implied.

Given a function $\H: \Us\rightarrow\bits$ and a subset $S \subseteq \Us$, we define the value function:
\begin{equation}
    \val_{\H}(\cv; S) \triangleq |\{(i, c_i) \in T_{\cv}\cap S: \H_i(c_i)=1\}|.
\end{equation}
The $\CI$ condition in \cref{eqn:YZ-condition} is equivalently given as $\val_{\H}(\cv; \Us)=n$, and the $\OPIshort$ condition in \cref{eqn:OPI-value} is equivalently given as $\val_{\H}(\cv; \Us) \geq \tau$. When $S=\Us$, we will sometimes omit the parameter $S$ and write $\val_{\H}(\cv)$ instead.

\subsection{List-recoverable codes}
List recoverability is the main property we use with Code Intersection to bound the number of codewords which become ``too structured'', i.e., having too many coordinates be observed along with high posterior probability of all of them mapping to 1. We define list-recoverable codes with respect to total list size (which is implied by list-recoverability with respect to individual list sizes):
\begin{definition}[List recoverability]
    A code $\Cs \subseteq \Sigma^n$ is called $(\rho, \ell, L)$-list-recoverable if for any sets $S_1, \ldots, S_n$ such that $\sum_{i=1}^n |S_i| \leq \ell$, we have
    \begin{equation}
        \left|\{\cv\in\Cs: |T_{\cv}\cap S|\geq \rho n\}\right|\leq L,
    \end{equation}
    where $S\triangleq \bigcup_{i\in[n]} \{i\} \times S_i$.
\end{definition}

\begin{proposition}[\cite{Yamakawa2022}, Lemma 4.2]\label{prop:YZ-list-recovery}
    Fix any constants $0<c<c'<1$. There exists an explicit family of folded Reed--Solomon codes $\{\Cs_n\}$ over field $\F_q$, where $\Cs_n \subseteq \Sigma^n$ with block length $n$, and folding parameter $m$, such that
    \begin{equation}
        \Sigma = \F_q^m, \qquad q = \Theta(n^2),\qquad m=\Theta(n),\qquad \log|\Sigma|=\Theta(n\log n),\qquad \log|\Cs_n|=O(n^2\log n).
    \end{equation}
    $\Cs_n$ also satisfies the following properties:
    \begin{enumerate}
        \item (List-recoverability, used in classical lower bound \cref{thm:yz-lower-bound}) $\Cs_n$ is $(\rho, \ell, L)$-list-recoverable, where $\rho \in (0,1)$ is a fixed constant, $\ell = 2^{\Theta(n^c)}$, and $L=2^{\Tilde{\Os}(n^{c'})}\leq 2^{o(n)}$.

        \item (Dual decoding, used in quantum algorithm \cref{thm:YZ-algorithm})
        There exists a polynomial-time deterministic decoding algorithm $\decode$ for the dual (over $\F_q$) code $\Cs_n^\bot$ satisfying
        \begin{equation}
        \Pr_{\ev\sim\Ds^n}\left(\forall\xv\in\Cs_n^\bot:\decode(\xv + \ev)=\xv\right) \geq 1-2^{-\Omega(n)},
        \end{equation}
        where $\Ds$ is the distribution over $\Sigma$ which is $\Vec{0}$ with probability $1/2$ and uniform on all other symbols in $\Sigma\setminus\{\Vec{0}\}$.
    \end{enumerate}
\end{proposition}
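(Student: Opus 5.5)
Since this proposition is quoted from \cite{Yamakawa2022}, the plan is to reconstruct their argument, which rests on two known facts about folded Reed--Solomon (FRS) codes. Take $\Cs_n$ to be the $m$-folded Reed--Solomon code over $\F_q$ with $q=\Theta(n^2)$ and $m=\Theta(n)$. Its codewords are obtained by evaluating polynomials of degree $<k$ on $nm$ points of the form $\gamma^{j}$, with $\gamma$ a generator of $\F_q^\ast$, and grouping the evaluations into $n$ consecutive blocks of size $m$. The parameter bookkeeping is then immediate: $\Sigma=\F_q^m$ gives $\log|\Sigma|=m\log q=\Theta(n\log n)$, and $|\Cs_n|=q^k\le q^{nm}$ gives $\log|\Cs_n|=O(n^2\log n)$. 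The rate $R=k/(nm)$ is a constant, fixed below so that both properties hold simultaneously.

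\textbf{List recoverability.} Here I would invoke the Guruswami--Rudra / Guruswami--Wang linear-algebraic list-recovery theorem for FRS codes. For input lists of size $\ell_i$, agreement fraction $\rho>R$ suffices with output list size $q^{O(s)}$ (or $\ell^{O(\cdot)}$ after subspace pruning), as long as $m$ is large relative to $s$ and to the list sizes. I would take $\rho$ to be a fixed constant, and I would translate the total-list-size hypothesis $\sum_i|S_i|\le\ell=2^{\Theta(n^c)}$ into the per-coordinate form. Coordinates with $|S_i|>\ell/(\epsilon n)$ number at most $\epsilon n$, so they can be discarded at the cost of an $\epsilon$ loss in agreement. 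The remaining lists have size $2^{O(n^c)}$, and the Kopparty--Ron-Zewi--Saraf--Wootters pruning bound then gives $L=2^{\tilde O(n^{c'})}$ for any $c'>c$, provided $m=\Theta(n)$ dominates $\log\ell$ by a polynomial margin.

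\textbf{Dual decoding.} The dual of an FRS code with respect to the standard bilinear form on $\F_q^{nm}$ is, up to a fixed coordinate-wise scaling, again a folded (generalized) RS code, of rate $1-R$. I would choose $R$ close to $1$ so that $\Cs_n^\perp$ has small rate $1-R$. The noise $\ev\sim\Ds^n$ corrupts each block independently with probability $1/2$, and a corrupted block is uniformly random on $\Sigma\setminus\{\Vec 0\}$. By Hoeffding (\cref{prop:hoeffding}), with probability $1-2^{-\Omega(n)}$ at most a $(1/2+\epsilon)$ fraction of blocks are corrupted. FRS list decoding up to radius $1-(1-R)-\epsilon>1/2+\epsilon$ then returns a list of size $q^{O(1/\epsilon)}$ containing $\xv$. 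The final step is uniqueness. Any other candidate $\xv'$ in the list must satisfy that $\xv'-\xv$, a nonzero dual codeword of large distance, agrees with the random corruptions on about half of the coordinates. Since corrupted symbols are uniform over a set of size $q^m$, a union bound over the structured list shows this happens with probability $2^{-\Omega(n)}$, uniformly over all $\xv$ because the event depends only on $\ev$.

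I expect this uniqueness argument to be the main obstacle, since the list is chosen after seeing $\ev$. To handle it, I would rephrase the event as ``there exists a nonzero dual codeword $\vv$ that, on the corrupted coordinates, is close to $\ev$,'' and bound it by a union bound over all of $\Cs_n^\perp$. Because the dual rate $1-R$ is small while the field size per block $q^m$ is huge, $|\Cs_n^\perp|\cdot q^{-m\cdot\Omega(n)}$ is already $2^{-\Omega(n)}$. This gives a deterministic polynomial-time $\decode$: list decode, then output the unique candidate consistent with at least a $(1/2-\epsilon)$ fraction of zero error blocks.
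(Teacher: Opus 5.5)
The paper does not prove this proposition; it is imported from \cite{Yamakawa2022}, so your reconstruction has to be judged on its own. Your parameter bookkeeping and your dual-decoding argument are sound.

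\textbf{Dual decoding.} The rescaled dual is a folded RS code of rate $1-R$, and by linearity the good event depends only on $\ev$. A nonzero $\vv\in\Cs_n^\perp$ has fewer than $(1-R)n$ zero blocks. So a competing candidate within distance $(1/2+\epsilon)n$ of $\xv+\ev$ forces $\vv_i=\ev_i\neq\Vec{0}$ on at least $(R-\frac12-\epsilon)n$ blocks. The union bound over $\Cs_n^\perp$ is then at most $2^n q^{(1-R)nm}q^{-m(R-1/2-\epsilon)n}$. You should state the condition this needs, namely $1-R<R-\frac12-\epsilon$, i.e.\ $R>\frac34+\frac{\epsilon}{2}$; your choice of $R$ near $1$ satisfies it. The general Rudra--Uurtamo fact, that random errors below the relative distance leave a unique codeword in the ball, would allow any $R>1/2$, but you do not need it.

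\textbf{List recovery.} This step fails as written, because the tools you name do not apply when $\ell=2^{\Theta(n^c)}$ and $m=\Theta(n)$.
\begin{itemize}
\item Guruswami--Wang linear-algebraic list recovery needs agreement above roughly $\ell/(s+1)+R$. That forces $s\gtrsim\ell$ and $m>s$.
\item The Kopparty--Ron-Zewi--Saraf--Wootters bound likewise needs folding parameter of order $\ell/\epsilon^2$. Its list size $(\ell/\epsilon)^{O(\epsilon^{-1}\log(\ell/\epsilon))}$ would be $2^{\Theta(n^{2c})}$ here, not $2^{\tilde O(n^{c'})}$ for every $c'>c$.
\end{itemize}
Your own hypothesis that $m$ be ``large relative to the list sizes'' is exactly what cannot hold.

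What works is Guruswami--Rudra interpolation with multiplicities. There the input list size enters the agreement threshold only through $\ell^{1/(s+1)}$, and the threshold is about $\ell^{1/(s+1)}\bigl(mR/(m-s+1)\bigr)^{s/(s+1)}$. Take $s=n^{c'}$, so that $\log\ell=n^c\ll s\ll m$. Then the threshold is $R+o(1)$, so any fixed $\rho>R$ works. The root-finding step bounds the output list directly by $q^{O(s)}=2^{O(n^{c'}\log n)}$, so no pruning is needed.

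Linear-algebraic decoding is still fine for your dual step, where input lists have size one and $s=O(1/\epsilon)$. Finally, your reduction from total to per-coordinate list sizes is unnecessary: $\sum_i|S_i|\le\ell$ already gives $|S_i|\le\ell$ for each $i$.
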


\subsection{Min-entropy and density-restoring partitions}
The density-restoring partition is a technique from communication complexity based on the ``structure versus randomness'' paradigm. It represents a distribution with high min-entropy as a convex combination of distributions which are `fixed' on some marginal (zero entropy) and have near-maximal min-entropy on every other marginal which does not intersect with the fixed part.

\begin{definition}[Min-entropy]
    The min-entropy of a random variable $\Xm$, denoted $H_{\infty}(\Xm)$, is defined as
    \begin{equation}
        H_{\infty}(\Xm) \triangleq \min_{x\in \mathrm{supp}(\Xm)} \log\left(\frac{1}{\Pr(\Xm=x)}\right).
    \end{equation}
\end{definition}

We give a useful result on min-entropy applied to our branching program notation. It is the only result which depends directly on the width bound $w=2^S$ of the branching program.
\begin{proposition}[Lemma 2.2 of \cite{dodis2008fuzzy}]\label{prop:min-entropy-deficiency}
Fix $0\leq t\leq M$ and $A$ with $|A|=t$. Except for vertices $v\in V_t$ of total probability mass at most $\varepsilon>0$ conditioned on $\bm{A_t}=A$, we have
\begin{equation}
    H_\infty(\H_A\mid E_{v,A})\geq t-S-\log (1/\varepsilon).
\end{equation}
\end{proposition}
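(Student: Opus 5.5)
The plan is to argue via the standard "average of min-entropy deficiency" bound: conditioning on the memory state of a width-$w=2^S$ program can reduce the min-entropy of the $t$ observed bits $\H_A$ by about $S$ bits, except on a set of vertices of total mass at most $\varepsilon$. First I would fix $t$ and $A$ with $|A|=t$ and condition throughout on $E_A=\{\bm{A_t}=A\}$. Under this conditioning, $\H_A$ is uniform on $\bits^A$, so $H_\infty(\H_A\mid E_A)=t$. The key observation is that, given $E_A$, the vertex $\bm{v_t}$ is a function of $\H_A$ together with the ordering of $A$ within the first $t$ positions. That ordering is independent of $\H$ and uniform over the $t!$ bijections $[t]\to A$. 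So $\bm{v_t}$ takes at most $w=2^S$ values, and we are in the setting of Lemma 2.2 of \cite{dodis2008fuzzy}.

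Next I would carry out the direct computation. Call a vertex $v\in V_t$ bad if some $h\in\bits^A$ has
\begin{equation}
\Pr(\H_A=h\mid E_{v,A})>2^{-(t-S-\log(1/\varepsilon))}.
\end{equation}
For any vertex $v$ and any $h$,
\begin{equation}
\Pr(\H_A=h, \bm{v_t}=v\mid E_A)\le \Pr(\H_A=h\mid E_A)=2^{-t}.
\end{equation}
If $v$ is bad, this gives
\begin{equation}
\Pr(\bm{v_t}=v\mid E_A)=\frac{\Pr(\H_A=h,\bm{v_t}=v\mid E_A)}{\Pr(\H_A=h\mid E_{v,A})}<\frac{2^{-t}}{2^{-t}\,2^{S}/\varepsilon}=\varepsilon 2^{-S}.
\end{equation}
Summing over the at most $2^S$ bad vertices in layer $t$, their total conditional mass is less than $\varepsilon$. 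Every remaining vertex with positive probability satisfies $H_\infty(\H_A\mid E_{v,A})\ge t-S-\log(1/\varepsilon)$, which is the claimed bound.

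The only delicate point is the role of the random ordering $\bm\pi$, and I expect it to be harmless. The computation above never needs $\bm{v_t}$ to be a deterministic function of $\H_A$. It uses only two facts: $\H_A$ is uniform given $E_A$, because $\H$ and $\bm\pi$ are independent, and there are at most $2^S$ vertices in layer $t$. I would also remark that vertices with $\Pr(E_{v,A})=0$ are excluded by the convention in the definition of $\Pdist_{\H\mid v,A}$, so they count as neither good nor bad.
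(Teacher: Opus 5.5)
Your argument is correct, and it is essentially the standard proof of the Dodis et al.\ lemma that the paper invokes as a black box. Under $E_A$, the bound $\Pr(\H_A=h,\bm{v_t}=v\mid E_A)\le 2^{-t}$ together with $|V_t|\le 2^S$ is the chain rule $\tilde{H}_\infty(\H_A\mid \bm{v_t})\ge t-S$, and your union bound over the bad vertices replaces the Markov step that turns this average bound into a worst-case bound off a set of mass $\varepsilon$. You are also right that the random ordering is harmless, since only uniformity of $\H_A$ given $E_A$ and the width bound are used.
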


Next, we formally introduce the density-restoring partition, originally due to \cite{GPW20}:
\begin{definition}[Density]
    Fix a set $J$ of coordinates and consider the random variable $\Xm$ over $\bits^J$. $\Xm$ is called $\gamma$-\emph{dense} if for all subsets $I \subseteq J$,
    \begin{equation}
        H_{\infty}(\Xm_I) \geq \gamma |I|.
    \end{equation}
    In other words $\Xm_I$ has $\gamma$ fraction of the maximum min-entropy achieved by any distribution over $\bits^I$.
\end{definition}
\begin{lemma}[Density-restoring partition, Lemma 3.7 of \cite{CFKMP21}]\label{lemma:density-restoration}
    Fix $0<\gamma \leq 1$, coordinate set $J$, and let $\Xm$ be a random variable with support defined on $\Xs \subseteq \bits^J$. There exists a partition
    \begin{equation}\label{eq:decomposition}
        \Xs = \Xs^{(1)} \sqcup \Xs^{(2)}\sqcup\cdots\sqcup \Xs^{(r)}
    \end{equation}
    where every $\Xs^{(j)}$ is associated with a set $I_j \subseteq J$ and an assignment $\alpha_j \in \bits^{I_j}$ such that for all $j \in [r]$,
    \begin{enumerate}
        \item The marginal distribution $\Xm_{I_j} \mid \Xm \in \Xs^{(j)}$ is fixed to $\alpha_j$.
        \item The marginal distribution $\Xm_{J\setminus I_j}\mid \Xm\in \Xs^{(j)}$ is $\gamma$-dense.
        \item The marginal distribution $\Xm_{J\setminus I_j}\mid \Xm\in \Xs^{(j)}$ has min-entropy
        \begin{equation}
            H_{\infty}(\Xm_{J \setminus I_j} \mid \Xm\in \Xs^{(j)}) \geq H_{\infty}(\Xm) - \gamma |I_j|  + \delta_j,
        \end{equation}
        where $\delta_j \triangleq \log\Pr\left(\Xm \in \cup_{i\geq j} \Xs^{(i)}\right)$.
    \end{enumerate}
\end{lemma}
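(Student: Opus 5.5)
We will prove the density-restoring partition by the standard greedy procedure of \cite{GPW20,CFKMP21}. Set $\Xs_{\geq 1}\triangleq\Xs$. For $j=1,2,\ldots$, as long as $\Xs_{\geq j}\neq\emptyset$, let $\Xm^{(\geq j)}\triangleq \Xm\mid \Xm\in\Xs_{\geq j}$. Choose $I_j\subseteq J$ to be a \emph{maximal} (largest, ties broken arbitrarily; possibly empty) subset violating density of $\Xm^{(\geq j)}$, i.e.\ $H_\infty(\Xm^{(\geq j)}_{I_j})<\gamma|I_j|$. If no nonempty violating set exists, take $I_j=\emptyset$. Let $\alpha_j\in\bits^{I_j}$ be a maximizer of $\Pr(\Xm^{(\geq j)}_{I_j}=\alpha_j)$. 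Then set $\Xs^{(j)}\triangleq\{x\in\Xs_{\geq j}: x_{I_j}=\alpha_j\}$ and $\Xs_{\geq j+1}\triangleq\Xs_{\geq j}\setminus\Xs^{(j)}$. Each step removes at least one point, so the process terminates with a partition, and item 1 holds by construction.

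\textbf{Density (item 2).} Suppose some $K\subseteq J\setminus I_j$ has $H_\infty(\Xm_K\mid \Xm\in\Xs^{(j)})<\gamma|K|$, witnessed by an assignment $\beta$. Then
\[
\Pr(\Xm^{(\geq j)}_{I_j\cup K}=(\alpha_j,\beta))=\Pr(\Xm^{(\geq j)}_{I_j}=\alpha_j)\cdot\Pr(\Xm_K=\beta\mid\Xs^{(j)})>2^{-\gamma|I_j|}2^{-\gamma|K|}.
\]
Here we use that $\alpha_j$ attains the max, so its probability exceeds $2^{-\gamma|I_j|}$. (When $I_j=\emptyset$ the first factor is $1$.) This makes $I_j\cup K$ a strictly larger violating set, contradicting maximality. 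Thus every $K$ satisfies $H_\infty\geq\gamma|K|$.

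\textbf{Min-entropy (item 3).} This is the step requiring care with normalizations. For any $x\in\Xs^{(j)}$,
\[
\Pr(\Xm=x\mid\Xs^{(j)})=\frac{\Pr(\Xm=x)}{\Pr(\Xs_{\geq j})\,\Pr(\Xm^{(\geq j)}_{I_j}=\alpha_j)}.
\]
Bounding $\Pr(\Xm=x)\leq 2^{-H_\infty(\Xm)}$, $\Pr(\Xs_{\geq j})=2^{\delta_j}$, and $\Pr(\Xm^{(\geq j)}_{I_j}=\alpha_j)\geq 2^{-\gamma|I_j|}$ would seem to give the wrong sign on the $\gamma|I_j|$ term. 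So instead I will argue on the marginal on $J\setminus I_j$. Since $x_{I_j}=\alpha_j$ is fixed on $\Xs^{(j)}$, a point of $\bits^{J\setminus I_j}$ determines $x$, and hence
\[
\Pr(\Xm_{J\setminus I_j}=z\mid\Xs^{(j)})=\frac{\Pr(\Xm=(\alpha_j,z))}{\Pr(\Xs^{(j)})}.
\]
The obstacle is lower-bounding $\Pr(\Xs^{(j)})=2^{\delta_j}\Pr(\Xm^{(\geq j)}_{I_j}=\alpha_j)$ in the right direction. The violation gives $\Pr(\Xm^{(\geq j)}_{I_j}=\alpha_j)>2^{-\gamma|I_j|}$. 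Therefore
\[
\Pr(\Xm_{J\setminus I_j}=z\mid\Xs^{(j)})\leq 2^{-H_\infty(\Xm)}\,2^{-\delta_j}\,2^{\gamma|I_j|},
\]
i.e.\ $H_\infty\geq H_\infty(\Xm)-\gamma|I_j|+\delta_j$, exactly item 3 (the sign works out after all). The $I_j=\emptyset$ case is immediate, with $\Pr(\Xs^{(j)})=2^{\delta_j}$.

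I expect the only real subtlety to be checking that maximality of $I_j$ is compatible with choosing $\alpha_j$ as the heaviest assignment, which is what drives item 2.
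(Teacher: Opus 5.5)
Your greedy construction is correct and is the standard argument of \cite{GPW20,CFKMP21}, which the paper cites rather than reproving: maximality of $I_j$ gives item 2, and $\Pr(\Xs^{(j)})\geq 2^{\delta_j}2^{-\gamma|I_j|}$ gives item 3. One small slip: the identity $\Pr(\Xm_{J\setminus I_j}=z\mid\Xs^{(j)})=\Pr(\Xm=(\alpha_j,z))/\Pr(\Xs^{(j)})$ should be $\leq$, since $(\alpha_j,z)$ may lie in an earlier part; that inequality is all you use, and your ``wrong sign'' worry was unfounded because bounding the full vector $\Xm\mid\Xs^{(j)}$ directly gives the same estimate.
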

\begin{corollary}\label{corollary:dense-excess-partition}
    Fix $0 <\gamma, \eta < 1$, $D\geq 0$, set $J$. Let $\Xm$ be a random variable satisfying $H_{\infty}(\Xm)\geq |J|-D$. Apply the density-restoring partition of \cref{lemma:density-restoration}: the collection $\Xs^{(1)}, \Xs^{(2)},\ldots, \Xs^{(r)}$ from \cref{eq:decomposition} can be partitioned into two groups with the following properties:
    \begin{enumerate}
        \item \textbf{Core group} $\Xs^{\mathrm{core}}$: Every $\Xs^{(j)}$ in the group satisfies
        \begin{equation}
            |I_j| \leq \frac{D + \log (1/\eta)}{1-\gamma}.
        \end{equation}

        \item \textbf{Excess group} $\Xs^{\mathrm{excess}}$:
        \begin{equation}
            \Pr(\Xm\in \Xs^{\mathrm{excess}}) < \eta.
        \end{equation}
    \end{enumerate}
\end{corollary}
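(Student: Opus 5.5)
We split the parts of the partition from \cref{lemma:density-restoration} by their fixed-set size. The core group $\Xs^{\mathrm{core}}$ is the union of all $\Xs^{(j)}$ with $|I_j| \leq (D+\log(1/\eta))/(1-\gamma)$, and the excess group $\Xs^{\mathrm{excess}}$ is the union of all the others. Property 1 of the core group then holds by definition. The only thing left to prove is that the excess group has probability less than $\eta$.

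For this we use item 3 of \cref{lemma:density-restoration}, together with the trivial upper bound on min-entropy. The variable $\Xm_{J\setminus I_j}$ takes values in $\bits^{J\setminus I_j}$, so its min-entropy under any conditioning is at most $|J|-|I_j|$. Combining this with item 3 and the hypothesis $H_\infty(\Xm)\geq |J|-D$ gives
\begin{equation}
    |J| - |I_j| \;\geq\; H_\infty(\Xm_{J\setminus I_j}\mid \Xm\in\Xs^{(j)}) \;\geq\; |J| - D - \gamma|I_j| + \delta_j .
\end{equation}
Rearranging, $(1-\gamma)|I_j| \leq D - \delta_j$. Writing $p_j \triangleq \Pr(\Xm\in\cup_{i\geq j}\Xs^{(i)})$, this says $\log(1/p_j) = -\delta_j \geq (1-\gamma)|I_j| - D$.

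For an excess part, $|I_j| > (D+\log(1/\eta))/(1-\gamma)$, so $(1-\gamma)|I_j| - D > \log(1/\eta)$ and therefore $p_j < \eta$. This bounds only the tail mass from index $j$ onward, and the excess parts need not be contiguous, so we cannot add up their tails separately. Instead we let $j^\ast$ be the smallest index of an excess part. Every excess part has index $\geq j^\ast$, so it lies inside $\cup_{i\geq j^\ast}\Xs^{(i)}$. Hence $\Pr(\Xm\in\Xs^{\mathrm{excess}})\leq p_{j^\ast}<\eta$. If there are no excess parts, the bound is trivial.

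The main point needing care is this tail-versus-union issue. Item 3 controls only the tail mass $p_j$, not the mass of part $j$ itself, and choosing the first excess index resolves it. Degenerate cases cause no trouble: parts of probability zero can be discarded, and $\gamma<1$ keeps the division by $1-\gamma$ valid.
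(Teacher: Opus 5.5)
Your proof is correct and uses the same key step as the paper: you combine item 3 of \cref{lemma:density-restoration} with the trivial bound $H_\infty(\Xm_{J\setminus I_j}\mid \Xm\in\Xs^{(j)})\le |J|-|I_j|$ to get $(1-\gamma)|I_j|\le D-\delta_j$, and then bound the excess mass by a single tail probability starting at the first excess index. The only difference is bookkeeping. The paper cuts at the first index $j^*$ with $\delta_{j^*}<\log\eta$, so its excess bound holds by definition and its size bound follows from the inequality. You group by $|I_j|$ instead, so your size bound holds by definition, and your excess group is contained in the paper's tail $\bigcup_{i\ge j^*}\Xs^{(i)}$.
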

\begin{proof}
    For each part $\Xs^{(j)}$ in the density-restoring partition $\Xs = \Xs^{(1)} \sqcup \Xs^{(2)}\sqcup \cdots \sqcup \Xs^{(r)}$, any variable with support on a subset of $\bits^{J\setminus I_j}$ has min-entropy at most $|J|-|I_j|$. Combining this with property (3) of \cref{lemma:density-restoration},
    \begin{equation}
        |J| - D - \gamma |I_j|  + \delta_j \leq H_{\infty}(\Xm_{J \setminus I_j} \mid \Xm\in \Xs^{(j)}) \leq |J|-|I_j|.
    \end{equation}
    Let $j^*$ be the smallest index $j\in [r]$ such that $\delta_j < \log(\eta)$ (if none exists, set $j^*\gets r+1$: the excess group will be empty and the probability bound for it immediately follows). The core group is the union of $\Xs^{(j)}$ for all $j < j^*$. For each $\Xs^{(j)}\subseteq \Xs^{\mathrm{core}}$, $\delta_j \geq \log(\eta)$ and therefore
    \begin{equation}
        |J| - D - \gamma |I_j|  + \log(\eta) \leq |J|-|I_j|,
    \end{equation}
    which implies
    \begin{equation}
        |I_j| \leq \frac{D + \log (1/\eta)}{1-\gamma}.
    \end{equation}
    The excess group is the union of $\Xs^{(j)}$ for all $j \geq j^*$. Since $\delta_{j^*} < \log \eta$, we have
    \begin{equation}
        \Pr(\Xm\in \Xs^{\mathrm{excess}}) = \Pr(\Xm \in \cup_{j\geq j^*} \Xs^{(j)}) < \eta
    \end{equation}
    by definition. 
\end{proof}

\section{Classical space lower bound for streaming Code Intersection}\label{sec:YZ-lower-bound}
We prove our formal classical space lower bound for Code Intersection in a random-order stream in \cref{thm:yz-lower-bound}, which is the formal version of the lower bound in \cref{thm:yz-lower-bound-informal}. A key observation of our model specifically for our analysis is that conditional on event $E_{v,A}$, the posterior distribution of the next stream element is uniform over addresses in $A^C$ and an oracle label which is an independent, fair $0/1$ bit. Additionally, conditioning on the next received stream element leaves the posterior distribution $\H_A$ unchanged.

\subsection{Potential method for Code Intersection space lower bound}
Given a codeword $\cv$, vertex $v\in V_t$ and observed address set $A$, the posterior probability that $\H$ maps $\cv$ to $\Vec{1}$ can be split into two components: the probability of the observed addresses all mapping to 1, and the probability of the unobserved addresses all mapping to 1. Since oracle mappings on the unobserved addresses are independent of the branching program state, this gives
\begin{equation}\label{eqn:YZ-conditional-prob}
    \Pdist_{\H\mid v,A}(\val_{\H}(\cv)=n) = \underbrace{
    \vphantom{\Pdist_{\H\mid v, A}}
    2^{-(n - k_{\cv})}}_{\text{unobserved}} \underbrace{\Pdist_{\H\mid v, A}(\val_{\H}(\cv; A)=k_{\cv})}_{\text{observed}},
\end{equation}
recalling that $k_{\cv}\triangleq |T_{\cv}\cap A|$. The success probability of the branching program can be computed as
\begin{equation}\label{eqn:success-eq-1}
    \Pr(\Bs\ \text{succeeds}) = \sum_{v\in V_M} \Pr(E_v)\,\Pdist_{\H\mid v}(\val_{\H}(\cv_v)=n) \leq \sum_{v\in V_M} \Pr(E_v)\, \max_{\cv\in \Cs} \Pdist_{\H\mid v}(\val_{\H}(\cv)=n).
\end{equation}

The maximum posterior probability over all codewords conditional on a memory state is generally difficult to analyze, because branching programs are able to forget information arbitrarily. We seek to construct a potential which upper bounds it while also being simpler to deal with. A natural choice would be the summation over all codewords
\begin{equation}
    \sum_{v\in V_M} \Pr(E_v) \sum_{\cv\in \Cs} \Pdist_{\H\mid v}(\val_{\H}(\cv)=n),
\end{equation}
however this is equal to the expected number of codewords mapping to $\Vec{1}$ which is an extreme overestimate. This is where the structure versus randomness paradigm comes in: we want to construct our potential so that it only sums over codewords with potentially significant posterior probability (structured information) and ignore the codewords with very low posterior probability (random-like objects). To determine which codewords have the potential for structured information, we simply count the number of oracle points seen which agree with one of the coordinates of the codeword. We also only count probability which exceeds a small cutoff value in order to avoid over-counting many different candidates which we have low confidence in.

\begin{definition}[Threshold crossing potential]\label{def:potential}
    Fix a parameter $1\leq \kappa\leq n$ which we call the threshold. For $\eta\in(0,1/2)$, we define the potential $\Phi_{\eta, \kappa}$ at time $t$ as
    \begin{equation}
        \Phi_{\eta, \kappa}(t) \triangleq \sum_{\substack{v\in V_t\\A\in \Us_t}} \Pr(E_{v,A}) \sum_{\substack{\cv\in\Cs\\k_{\cv}\geq \kappa}} 2^{-(n-k_{\cv})} \left[\Pdist_{\H\mid v, A}(\val_{\H}(\cv; A)= k_{\cv})-2\eta\right]^+.
    \end{equation}
\end{definition}
We will omit parameters $\eta, \kappa$ when they are clear from context. Eventually we will set $\kappa \gets \lceil\rho n + \log(1/\eta)/\gamma\rceil + 1$ and $\eta = 2^{-\Omega(n)}$ in \cref{thm:yz-lower-bound}, however we keep them as general parameters in the lemmas below.

\begin{lemma}[Potential bounds $\CI$ success probability]\label{lemma:success-bound}
    For any threshold $\kappa\in [n]$ and fixed value $\eta\in(0,1/2)$,
    \begin{equation}
        \Phi(0)=0 \qquad \mathrm{and}\qquad \Pr[\Bs\, \mathrm{succeeds}] \leq 2\eta + \Phi(M).
    \end{equation}
\end{lemma}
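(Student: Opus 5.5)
The two claims are handled separately, and both follow directly from the definitions.

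\emph{The identity $\Phi(0)=0$.} At time $t=0$ the only address set is $A=\emptyset$, so $k_{\cv}=|T_{\cv}\cap\emptyset|=0$ for every codeword $\cv$. Since $\kappa\geq 1$, the inner sum ranges over codewords with $k_{\cv}\geq\kappa\geq 1$ and is therefore empty. Hence $\Phi(0)=0$.

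\emph{The bound on the success probability.} The starting point is \cref{eqn:success-eq-1}. At time $M$ the only address set is $A=\Us$, so $\Pr(E_{v,\Us})=\Pr(E_v)$ and $k_{\cv}=n$ for every $\cv$. Because $\kappa\leq n$, every codeword passes the threshold, and the factor $2^{-(n-k_{\cv})}$ equals $1$. The potential therefore becomes
\begin{equation*}
    \Phi(M)=\sum_{v\in V_M}\Pr(E_v)\sum_{\cv\in\Cs}\left[\Pdist_{\H\mid v}(\val_{\H}(\cv)=n)-2\eta\right]^+ .
\end{equation*}

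For each leaf $v$ with $\Pr(E_v)>0$, I use the pointwise bound $p\leq 2\eta+[p-2\eta]^+$ applied to the output codeword $\cv_v$. Since every term in the sum over $\cv$ is nonnegative, dropping all terms except $\cv=\cv_v$ gives
\begin{equation*}
    \Pdist_{\H\mid v}(\val_{\H}(\cv_v)=n)\leq 2\eta+\sum_{\cv\in\Cs}\left[\Pdist_{\H\mid v}(\val_{\H}(\cv)=n)-2\eta\right]^+ .
\end{equation*}

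I multiply this inequality by $\Pr(E_v)$ and sum over $v\in V_M$. The left side becomes exactly the equality in \cref{eqn:success-eq-1}, which uses the actual output $\cv_v$ rather than the maximum over codewords. On the right, $\sum_v\Pr(E_v)=1$, so the result is $\Pr[\Bs\text{ succeeds}]\leq 2\eta+\Phi(M)$.

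No step here is a real obstacle. The only points needing care are that vertices with $\Pr(E_v)=0$ contribute $0$ to both sides under the paper's convention for $\Pdist$, and that the outputs $\cv_v$ are valid codewords, which lets them appear in the sum over $\Cs$.
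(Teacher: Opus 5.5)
Your proof is correct and follows the paper's argument: $\Phi(0)=0$ because the threshold sum is empty, and at $t=M$ the potential reduces to the full sum over codewords, which dominates the success probability up to the additive $2\eta$. The only cosmetic difference is that you apply $p\le 2\eta+[p-2\eta]^+$ directly to the output codeword $\cv_v$ and drop to the sum over $\Cs$, whereas the paper first passes through $\max_{\cv\in\Cs}$ and then bounds the maximum by the sum.
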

\begin{proof}
    At $t=0$, $k_{\cv} = 0 < \kappa$ for all codewords $\cv$, so the summation over codewords is empty and thus $\Phi(0)=0$.  
    
    At $t=M$, $k_{\cv} = n \geq \kappa$ for all codewords $\cv$, so the summation reduces to
    \begin{equation}\label{eqn:success-eq-2}
        \Phi(M) = \sum_{v\in V_M} \Pr(E_v)\sum_{\cv\in\Cs} [\Pdist_{\H\mid v}(\val_{\H}(\cv)=n) - 2\eta]^+.
    \end{equation}
    Combining \cref{eqn:success-eq-1,eqn:success-eq-2},
    \begin{align}
        \Pr(\Bs\ \text{succeeds}) &\leq \sum_{v\in V_M} \Pr(E_v)\left(2\eta + \max_{\cv\in \Cs} \left[\Pdist_{\H\mid v}(\val_{\H}(\cv)=n) - 2\eta\right]^+\right) \\
        &= 2\eta + \sum_{v\in V_M} \Pr(E_v) \max_{\cv\in \Cs}\left[\Pdist_{\H\mid v}(\val_{\H}(\cv)=n) - 2\eta\right]^+ \\
        &\leq 2\eta + \sum_{v\in V_M} \Pr(E_v) \sum_{\cv\in\Cs} \left[\Pdist_{\H\mid v}(\val_{\H}(\cv)=n) - 2\eta\right]^+ \\
        &= 2\eta + \Phi(M).
    \end{align}
\end{proof}

\subsection{Applying the potential to Code Intersection}
We begin by defining the intermediate edge-parameterized potential function:
\begin{equation}
    \Psi_{\eta, \kappa}(t) \triangleq \sum_{\substack{e\in \Gamma_t\\A\in \Us_t}} \Pr(E_{e,A}) \sum_{\substack{\cv\in\Cs\\k_{\cv}\geq \kappa}} 2^{-(n-k_{\cv})} \left[\Pdist_{\H\mid e, A}(\val_{\H}(\cv; A)=k_{\cv})-2\eta\right]^+.
\end{equation}
The potential change from $\Phi(t)$ to $\Phi(t+1)$ can be split into two components:
\begin{enumerate}
    \item $\Psi_{\eta, \kappa}(t+1) - \Phi_{\eta, \kappa}(t)$, which corresponds to the potential gain incurred by starting from a vertex and going ``onto'' its outgoing edge: this corresponds to the information change from receiving a new stream input. 

    \item $\Phi_{\eta, \kappa}(t+1) - \Psi_{\eta, \kappa}(t+1)$, which corresponds to the potential gain incurred by starting from an edge and moving to the vertex it points to: this corresponds to the information change from updating our memory state after receiving a new stream input.
\end{enumerate}

We first show that contribution $(2)$ is always non-positive, analogous to \cite[Claim 4.12]{GargRazTal18}:
\begin{lemma}[Convexity bound]\label{lemma:jensens-potential}
    Fix constants $\kappa\in [n], \eta\in(0,1/2)$. We have
    \begin{equation}
        \Phi_{\eta, \kappa}(t) \leq \Psi_{\eta, \kappa}(t)
    \end{equation}
    for all $1\leq t \leq M$.
\end{lemma}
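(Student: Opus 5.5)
The argument is a convexity (Jensen) bound, vertex by vertex, in the style of \cite[Claim 4.12]{GargRazTal18}. Fix $1\leq t\leq M$, a vertex $v\in V_t$ and a set $A\in\Us_t$. Every path reaching $v$ at time $t$ enters through exactly one edge of $\Gamma_{\text{in}}(v)$. The set $\bm{A_t}$ is determined by the stream prefix and not by which edge was traversed. Hence the event $E_{v,A}$ is the disjoint union of the events $E_{e,A}$ over $e\in\Gamma_{\text{in}}(v)$. This gives
\begin{equation}
    \Pr(E_{v,A}) = \sum_{e\in\Gamma_{\text{in}}(v)} \Pr(E_{e,A}),
\end{equation}
and, whenever $\Pr(E_{v,A})>0$, the posterior at the vertex is a mixture of the posteriors at its incoming edges:
\begin{equation}
    \Pdist_{\H\mid v,A}(\cdot) = \sum_{e\in\Gamma_{\text{in}}(v)} \frac{\Pr(E_{e,A})}{\Pr(E_{v,A})}\, \Pdist_{\H\mid e,A}(\cdot).
\end{equation}
Edges with $\Pr(E_{e,A})=0$ carry zero weight on both sides, so the convention that $\Pdist_{\H\mid e,A}$ is set to $0$ when the conditioning event has probability zero causes no trouble.

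Next, fix a codeword $\cv$ with $k_{\cv}\geq\kappa$. The quantities $k_{\cv}=|T_{\cv}\cap A|$, the weight $2^{-(n-k_{\cv})}$, and the condition $k_{\cv}\geq\kappa$ depend only on $A$ and $\cv$. They are therefore identical in the vertex term and in each edge term, so the codeword index set is the same on both sides. The map $x\mapsto[x-2\eta]^+$ is convex. Applying Jensen's inequality to the mixture above, for the event $\{\val_{\H}(\cv;A)=k_{\cv}\}$, gives
\begin{equation}
    \Pr(E_{v,A})\left[\Pdist_{\H\mid v,A}(\val_{\H}(\cv;A)=k_{\cv})-2\eta\right]^+ \leq \sum_{e\in\Gamma_{\text{in}}(v)} \Pr(E_{e,A})\left[\Pdist_{\H\mid e,A}(\val_{\H}(\cv;A)=k_{\cv})-2\eta\right]^+.
\end{equation}

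To finish, multiply this inequality by $2^{-(n-k_{\cv})}$ and sum over codewords $\cv$ with $k_{\cv}\geq\kappa$, over $v\in V_t$, and over $A\in\Us_t$. For $t\geq1$, each edge of $\Gamma_t$ points into exactly one vertex of $V_t$, so the sets $\Gamma_{\text{in}}(v)$ for $v\in V_t$ partition $\Gamma_t$. The right-hand side therefore reassembles into $\Psi_{\eta,\kappa}(t)$, and the left-hand side is $\Phi_{\eta,\kappa}(t)$.

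No step here is a serious obstacle. The only point needing care is the decomposition of $E_{v,A}$ into the events $E_{e,A}$, specifically that $\bm{A_t}$ is a function of the stream and so is compatible with conditioning on the traversed edge. The rest is Jensen's inequality plus bookkeeping.
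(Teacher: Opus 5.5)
Your proposal is correct and follows essentially the same argument as the paper. You decompose $E_{v,A}$ over the incoming edges, write the vertex posterior as a mixture of edge posteriors, apply Jensen to the convex map $x\mapsto[x-2\eta]^+$, and resum using that the sets $\Gamma_{\text{in}}(v)$ partition $\Gamma_t$ (the case $\Pr(E_{v,A})=0$ is trivial because the left side vanishes).
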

\begin{proof}
    We partition set of edges $e\in\Gamma_t$ according to the vertices in $V_t$ that they point to, allowing us to expand out the sum:
    \begin{equation}
        \Psi(t) = \sum_{\substack{v\in V_t\\A\in \Us_t}} \sum_{e\in \Gamma_{\text{in}}(v)}\Pr(E_{e,A}) \sum_{\substack{\cv\in\Cs\\k_{\cv}\geq \kappa}} 2^{-(n-k_{\cv})} \left[\Pdist_{\H\mid e, A}(\val_{\H}(\cv; A)=k_{\cv})-2\eta\right]^+.
    \end{equation}
    By law of total probability for $E_{v,A}$ with non-zero probability,
    \begin{equation}
        \Pdist_{\H\mid v, A}(\val_{\H}(\cv; A)=k_{\cv}) = \sum_{e\in \Gamma_{\text{in}}(v)}\frac{\Pr(E_{e,A})}{\Pr(E_{v,A})}  \Pdist_{\H\mid e, A}(\val_{\H}(\cv; A)=k_{\cv}).
    \end{equation}
    Since $x\mapsto \max(0,x)$ is a convex function, we can apply Jensen's inequality to get
    \begin{equation}
        \Pr(E_{v,A})\left[\Pdist_{\H\mid v, A}(\val_{\H}(\cv; A)=k_{\cv})-2\eta\right]^+ \leq \sum_{e\in \Gamma_{\text{in}}(v)}\Pr(E_{e,A})  \left[\Pdist_{\H\mid e, A}(\val_{\H}(\cv; A)=k_{\cv})-2\eta\right]^+.
    \end{equation}
    Applying the summation to both sides of the inequality,
    \begin{align}
          \Phi_{\eta, \kappa}(t) &= \sum_{\substack{v\in V_t\\A\in \Us_t}} \sum_{\substack{\cv\in\Cs\\k_{\cv}\geq \kappa}} 2^{-(n-k_{\cv})} \,\Pr(E_{v,A})\left[\Pdist_{\H\mid v, A}(\val_{\H}(\cv; A)=k_{\cv})-2\eta\right]^+ 
          \\&\leq \sum_{\substack{v\in V_t\\A\in \Us_t}} \sum_{\substack{\cv\in\Cs\\k_{\cv}\geq \kappa}} 2^{-(n-k_{\cv})} \sum_{e\in \Gamma_{\text{in}}(v)}\Pr(E_{e,A})  \left[\Pdist_{\H\mid e, A}(\val_{\H}(\cv; A)=k_{\cv})-2\eta\right]^+
          \\&= \Psi_{\eta, \kappa}(t).
    \end{align}
\end{proof}

To bound contribution (1), we introduce a lemma which bounds the posterior probability of success for codewords which have both many observed coordinates as well as a sufficiently high probability of them all mapping to 1.
\begin{lemma}[List-recovery implies low posterior probability for strong candidates]
\label{lemma:list-recovery-confidence}
    Let $\Cs\subseteq\Sigma^n$ be a
    $(\rho,\ell,L)$-list-recoverable code, and let $\Xm$ be a
    random variable on $\bits^A$, where $A\subseteq\Us$.
    Fix $\gamma\in(0,1)$, $\eta\in(0,1/2)$, $D\geq 0$. Suppose that
    \begin{equation}
        H_\infty(\Xm)\geq |A|-D\quad \mathrm{and}\quad \frac{D+\log(1/\eta)}{1-\gamma}\leq\ell.
    \end{equation}
    Then
    \begin{equation}
        \sum_{\substack{\cv\in\Cs\\k_{\cv}\geq \rho n+\log(1/\eta)/\gamma}}
        \left[\Pr(X_{T_{\cv}\cap A}\equiv 1)-2\eta\right]^+ \leq L.
    \end{equation}
\end{lemma}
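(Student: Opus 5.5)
Apply the density-restoring partition (\cref{lemma:density-restoration}) to $\Xm$ with density parameter $\gamma$ and coordinate set $J=A$. Then use \cref{corollary:dense-excess-partition} with the same $\gamma$, $\eta$ and $D$ to split the parts into a core group and an excess group, where $\Pr(\Xm\in\Xs^{\mathrm{excess}})<\eta$. Every core part $\Xs^{(j)}$ has fixed set $I_j$ with $|I_j|\le (D+\log(1/\eta))/(1-\gamma)\le \ell$.

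For each codeword $\cv$ with $k_{\cv}\ge \rho n+\log(1/\eta)/\gamma$, decompose by the law of total probability:
\begin{equation}
\Pr(\Xm_{T_{\cv}\cap A}\equiv 1)\le \eta+\sum_{j\in\mathrm{core}}\Pr(\Xm\in\Xs^{(j)})\,\Pr(\Xm_{T_{\cv}\cap A}\equiv1\mid \Xm\in\Xs^{(j)}).
\end{equation}
Inside a core part $j$, split $T_{\cv}\cap A$ into the fixed coordinates $T_{\cv}\cap I_j$ and the free coordinates $(T_{\cv}\cap A)\setminus I_j$. By $\gamma$-density of $\Xm_{A\setminus I_j}$, the free coordinates all equal $1$ with probability at most $2^{-\gamma|(T_{\cv}\cap A)\setminus I_j|}$.

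Call $\cv$ \emph{captured} by part $j$ if $|T_{\cv}\cap I_j|\ge\rho n$. If $\cv$ is not captured, then $|(T_{\cv}\cap A)\setminus I_j|> k_{\cv}-\rho n\ge \log(1/\eta)/\gamma$, so its conditional probability in part $j$ is at most $\eta$. Summing, each $\cv$ gets
\begin{equation}
\Pr(\Xm_{T_{\cv}\cap A}\equiv 1)\le 2\eta+\sum_{j:\,\cv\text{ captured by }j}\Pr(\Xm\in\Xs^{(j)}),
\end{equation}
hence $[\Pr(\Xm_{T_{\cv}\cap A}\equiv1)-2\eta]^+\le \sum_{j\ni\cv}\Pr(\Xm\in\Xs^{(j)})$.

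Now swap the sums over $\cv$ and $j$. For fixed core $j$, the sets $S_i=\{x:(i,x)\in I_j\}$ satisfy $\sum_i|S_i|=|I_j|\le\ell$. By $(\rho,\ell,L)$-list-recoverability, at most $L$ codewords are captured by part $j$. So the total is at most $\sum_j \Pr(\Xm\in\Xs^{(j)})\cdot L\le L$.

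The main obstacle is the per-part decomposition, which has two parts. First, it must apply the density bound to the free coordinates while only using the trivial bound $\le1$ on the fixed ones. Second, it must correctly route the excess mass $\eta$ and the non-captured mass $\eta$ into the $2\eta$ subtracted inside $[\cdot]^+$. This routing is exactly why the threshold is $\rho n+\log(1/\eta)/\gamma$ and the cutoff is $2\eta$. The strictness $k_{\cv}-\rho n$ versus $\ge$ needs a little care, but $2^{-\gamma x}\le\eta$ for $x\ge\log(1/\eta)/\gamma$ suffices.
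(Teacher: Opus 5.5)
Your proposal is correct and follows the paper's proof essentially step for step. Both arguments apply the density-restoring partition and split it into core and excess groups via \cref{corollary:dense-excess-partition}, use the same captured/non-captured dichotomy $|T_{\cv}\cap I_j|\ge\rho n$ with the $\gamma$-density bound on the free coordinates, and route the excess mass and the non-captured mass into the $2\eta$ cutoff. Both then swap the sums and apply list-recoverability to the fixed set $I_j$ of each core part.
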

\begin{proof}
    To begin, we apply \cref{lemma:density-restoration,corollary:dense-excess-partition} to get the two groups $\Xs^{\mathrm{core}} = \Xs^{(1)} \sqcup \Xs^{(2)}\sqcup\cdots\sqcup \Xs^{(j^*-1)}$ and $\Xs^{\mathrm{excess}} = \Xs^{(j^*)} \sqcup \Xs^{(j^*+1)}\sqcup\cdots\sqcup \Xs^{(r)}$.
       
    For each component $\Xs^{(j)}$ in the core group ($j<j^*$), 
    \begin{equation}
        |I_j| \leq \frac{D + \log(1/\eta)}{1-\gamma}\leq \ell.
    \end{equation}
    From condition (1) of \cref{lemma:density-restoration}, $X_{I_j}|X\in\Xs^{(j)}$ is fixed to $\alpha_j \in \bits^{I_j}$.  
    Fix a codeword $\cv\in\Cs$ where $|T_{\cv} \cap A|\geq \rho n + \log(1/\eta)/\gamma$. For each core component index $j<j^*$, if $|T_{\cv} \cap I_j| < \rho n$, then $|T_{\cv} \cap (A\setminus I_j)|>\log(1/\eta)/\gamma$, and furthermore from condition (2) of \cref{lemma:density-restoration}, $\Xm_{A\setminus I_j}\mid\Xm\in\Xs^{(j)}$ is $\gamma$-dense, so
    \begin{equation}
        \Pr(\Xm_{T_{\cv} \cap (A\setminus I_j)} \equiv 1 \mid \Xm\in\Xs^{(j)}) \leq 2^{-\gamma |T_{\cv} \cap (A\setminus I_j)|} < \eta.
    \end{equation}
    Therefore
    \begin{equation}
        \Pr(\Xm_{T_{\cv} \cap A} \equiv 1 \mid\Xm\in\Xs^{(j)}) \leq \begin{cases} 
      \eta & \text{if}\ |T_{\cv} \cap I_j| < \rho n \\
      1 & \text{if}\ |T_{\cv} \cap I_j| \geq \rho n.
   \end{cases}
    \end{equation}
    Now using the law of total probability on all core indices $j$ plus the excess group,
    \begin{equation}
        \begin{aligned}
        \Pr(\Xm_{T_{\cv} \cap A} \equiv 1) &= \Pr(\Xm_{T_{\cv} \cap A} \equiv 1 \mid \Xm\in\Xs^{\text{excess}})\Pr(\Xm\in\Xs^{\text{excess}}) \\ &\quad +  \sum_{j<j^*} \Pr(\Xm_{T_{\cv} \cap A} \equiv 1\mid \Xm\in\Xs^{(j)})\Pr(\Xm\in\Xs^{(j)}).
        \end{aligned}
    \end{equation}
    The excess contribution is bounded by $\eta$, and we split the core groups by their value of $|T_{\cv} \cap I_j|$:
    \begin{equation}
        \Pr(\Xm_{T_{\cv} \cap A} \equiv 1) \leq \eta + \eta + \sum_{\substack{j<j^*\\|T_{\cv}\cap I_j|\geq\rho n}} \Pr(\Xm\in\Xs^{(j)})
    \end{equation}
    This implies
    \begin{equation}
        \left[\Pr(\Xm_{T_{\cv}\cap A}\equiv 1)-2\eta\right]^+ \leq \sum_{\substack{j<j^*\\|T_{\cv}\cap I_j|\geq\rho n}} \Pr(\Xm\in\Xs^{(j)}).
    \end{equation}
    Now we sum over all codewords $\cv\in\Cs$ where $k_{\cv}\geq \rho n+\log(1/\eta)/\gamma$:
    \begin{align}
        \sum_{\substack{\cv\in\Cs\\k_{\cv}\geq \rho n+\log(1/\eta)/\gamma}}
        \left[\Pr(\Xm_{T_{\cv}\cap A}\equiv 1)-2\eta\right]^+ &\leq \sum_{\substack{\cv\in\Cs\\k_{\cv}\geq \rho n+\log(1/\eta)/\gamma\\j<j^*:|T_{\cv}\cap I_j|\geq\rho n}}
        \Pr(\Xm\in\Xs^{(j)})\\
        &\leq \sum_{j<j^*} \Pr(\Xm\in\Xs^{(j)}) |\{\cv\in\Cs: |T_{\cv}\cap I_j|\geq\rho n\}|\\
        &\leq \sum_{j<j^*} \Pr(\Xm\in\Xs^{(j)}) L  \leq L,
    \end{align}
    where the final inequality is from list-recovery.
\end{proof}

\cref{lemma:list-recovery-confidence} allows us to bound contribution (1) of the potential growth:
\begin{lemma}[Potential function evolution]\label{lemma:potential-growth}
    Consider constants $\gamma\in(0,1), \eta\in(0,1/2)$, and define $\kappa^*\triangleq \lceil\rho n + \log(1/\eta)/\gamma\rceil + 1 \leq n$, and let $\Cs\subseteq\Sigma^n$ be a $(\rho ,\ell,L)$-list-recoverable code. If branching program $\Bs$ has width $w=2^S$, where
    \begin{equation}\label{eqn:potential-increase-assumption}
        S\leq (1-\gamma)\ell - \log|\Cs| - \log(1/\eta),
    \end{equation}
    then for every $0\leq t < M$,
    \begin{equation}
        \Psi_{\eta,\kappa^*}(t+1) - \Phi_{\eta, \kappa^*}(t) \leq (L+1) \frac{(n-\kappa^* + 1)2^{-(n-\kappa^* +1)}}{M-t}.
    \end{equation}
\end{lemma}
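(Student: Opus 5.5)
The plan is a one-step, first-moment computation: fix a vertex and an observed address set at time $t$, average exactly over the next stream element, and observe that only codewords that just reach the threshold can increase the potential. I would fix $v\in V_t$ and $A\in\Us_t$ with $\Pr(E_{v,A})>0$. By the observation at the start of \cref{sec:YZ-lower-bound}, conditioned on $E_{v,A}$ the next element is $(u,b)$ with $u$ uniform on $A^C$ and $b$ a fair bit independent of everything else. The edge $e$ followed is determined by $(v,u,b)$, and $A'=A\cup\{u\}$. Hence
\[
\Pr(E_{e,A'})=\Pr(E_{v,A})\cdot\frac{1}{2(M-t)},
\]
and under $E_{e,A'}$ the restriction $\H_A$ has the same law as under $E_{v,A}$, while $\H(u)=b$ is fixed. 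So $\Psi(t+1)$ is the average over $(v,A)$, weighted by $\Pr(E_{v,A})$, of the average over $(u,b)$ of the codeword sum at $(e,A')$. I will compare this with the codeword sum at $(v,A)$, one codeword at a time.

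For each codeword $\cv$ write $p_{\cv}\triangleq\Pdist_{\H\mid v,A}(\val_{\H}(\cv;A)=k_{\cv})$. There are three cases.
\begin{enumerate}
\item If $u\notin T_{\cv}$, then $k_{\cv}$ and $p_{\cv}$ are unchanged, so the term of $\cv$ is unchanged.
\item If $u\in T_{\cv}$, then $k_{\cv}$ increases by one. The new observed probability equals $p_{\cv}$ when $b=1$ and $0$ when $b=0$, while the weight doubles from $2^{-(n-k_{\cv})}$ to $2^{-(n-k_{\cv}-1)}$. Averaging over $b$ therefore gives
\[
\tfrac12\cdot 2\cdot 2^{-(n-k_{\cv})}[p_{\cv}-2\eta]^+ = 2^{-(n-k_{\cv})}[p_{\cv}-2\eta]^+,
\]
which is exactly the old term. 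So codewords already at or above the threshold contribute no net change in expectation.
\item The only gain comes from codewords with $k_{\cv}=\kappa^*-1$ whose new address lies in $T_{\cv}$. Such a codeword has $n-\kappa^*+1$ unobserved addresses, so $\Pr(u\in T_{\cv})=(n-\kappa^*+1)/(M-t)$. Given that event, its expected new term is $2^{-(n-\kappa^*+1)}[p_{\cv}-2\eta]^+$.
\end{enumerate}
Summing the three cases, the increment is
\[
\Psi(t+1)-\Phi(t)=\frac{(n-\kappa^*+1)2^{-(n-\kappa^*+1)}}{M-t}\sum_{A}\Pr(E_A)\sum_{v}\Pr(\bm{v_t}=v\mid E_A)\sum_{\cv:\,k_{\cv}=\kappa^*-1}[p_{\cv}-2\eta]^+ .
\]

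It then suffices to show that, for each fixed $A$, the $v$-average of the inner sum is at most $L+1$. Since $\kappa^*-1\geq\rho n+\log(1/\eta)/\gamma$, the inner sum is at most the sum controlled by \cref{lemma:list-recovery-confidence}, applied with $\Xm=\H_A\mid E_{v,A}$ and $p_{\cv}=\Pr(\Xm_{T_{\cv}\cap A}\equiv 1)$. I would then apply \cref{prop:min-entropy-deficiency} with $\varepsilon=1/|\Cs|$. Outside vertices of conditional mass at most $1/|\Cs|$, we have $H_\infty(\Xm)\geq t-D$ with $D=S+\log|\Cs|$. The hypothesis \cref{eqn:potential-increase-assumption} is exactly $(D+\log(1/\eta))/(1-\gamma)\leq\ell$, so \cref{lemma:list-recovery-confidence} bounds the inner sum by $L$ on these good vertices. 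On the bad vertices I use the trivial bound $|\Cs|$ (each term is at most $1$), which contributes at most $\varepsilon|\Cs|=1$ in total. Together this gives $L+1$, and summing over $A$ with weights $\Pr(E_A)$ yields the claimed bound.

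The step I expect to be the main obstacle is the exact bookkeeping in case 2 above: checking that the positive part is preserved exactly under averaging over $b$. This holds because the updated observed probability equals either $p_{\cv}$ or $0$, both of which pass through $[\cdot]^+$ linearly when paired with the doubled weight. It also requires that conditioning on $(u,b)$ leaves the law of $\H_A$ unchanged, so that the quantity $p_{\cv}$ used at the edge is literally the one at the vertex. A secondary point to handle with care is that $\Psi$ uses edge-conditioned probabilities, which must factor as described.
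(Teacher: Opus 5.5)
Your proposal is correct and follows essentially the same route as the paper. You average over the next stream element, which is the paper's regrouping of edges by source vertex, and you get the same per-codeword cancellation: terms are unchanged off $T_{\cv}$, and on $T_{\cv}$ the doubled weight is offset by the fair bit, so only codewords with $k_{\cv}=\kappa^*-1$ contribute. You then close with the same good/bad vertex split, using \cref{prop:min-entropy-deficiency} with $\varepsilon=1/|\Cs|$ and \cref{lemma:list-recovery-confidence} with $D=S+\log|\Cs|$, which gives the factor $L+1$.
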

\begin{proof}
    We partition set of edges $e\in\Gamma_{t+1}$, but this time according to the vertices in $V_t$ that they are outgoing from; the edges $e\in\Gamma_{\text{out}}(v)$ are indexed by elements $(i,x,b)\in[n]\times\Sigma\times\bits$. 
    
    Use $A^+$ to denote $A \cup \{(i,x)\}$ and $k_{\cv}^+$ to denote $|T_{\cv}\cap A^+|$, which are both dependent on edge $e$. Since the distribution over stream elements has non-zero support exactly on $A^C$ at time $t+1$, $\Pr(E_{e,A^+})$ is non-zero only if the edge $e$ is indexed by $(i,x)\notin A$. We can rewrite the sum then as
    \begin{equation}\label{eqn:input-labeled-potential}
        \Psi(t+1) = \sum_{\substack{v\in V_t\\A\in \Us_t}} \sum_{(i,x)\in A^C}\sum_{b\in\bits}\Pr(E_{e,A^+}) \sum_{\substack{\cv\in\Cs\\ k_{\cv}^+\geq \kappa^*}} 2^{-(n-k_{\cv}^+)} \left[\Pdist_{\H\mid e, A^+}(\val_{\H}(\cv; A^+)=k_{\cv}^+)-2\eta\right]^+.
    \end{equation}
    
    Consider an edge $e$ indexed by $(i,x,b)$ which arrives at time $t+1$. We have
    \begin{equation}
        E_{e,A^+}=E_{v,A}\cap\{\bm{\pi}(t+1)=(i,x),\ \H_i(x)=b\}.
    \end{equation}
    Since $\bm{\pi}(t+1)$ is a uniformly random element of $A^C$ conditioned on $\Am_t=A, \bm{v}_t=v$ and independent of $\H$, we have
    \begin{equation}
        \Pr\left(\bm{\pi}(t+1)=(i,x) \wedge \H_i(x)=b \mid E_{v,A}\right) = \frac{1}{2}\cdot\frac{1}{M-t},\quad \forall (i,x)\in A^C.
    \end{equation}
    and therefore
    \begin{equation}
        \Pr(E_{e,A^+})= \frac{\Pr(E_{v,A})}{2(M-t)}.
    \end{equation}

    Plugging into \cref{eqn:input-labeled-potential} and rearranging the order of summations,
    \begin{equation}\label{eq:edge-labeled-summation-2}
        \Psi(t+1) = \sum_{\substack{v\in V_t\\A\in \Us_t}} \Pr(E_{v,A}) \sum_{(i,x)\in A^C} \sum_{\substack{\cv\in\Cs\\ k_{\cv}^+\geq \kappa^*}} \sum_{b\in\bits} \frac{2^{-(n-k_{\cv}^+)}}{2(M-t)} \left[\Pdist_{\H\mid e, A^+}(\val_{\H}(\cv; A^+)=k_{\cv}^+)-2\eta\right]^+.
    \end{equation}
    
    \paragraph{Individual codeword analysis.}
     Consider a fixed $v\in V_t$ and $A\in\Us_t$. Additionally, fix a codeword $\cv\in\Cs$ and abbreviate $k\triangleq k_{\cv}$ and $k^+ \triangleq k^+_{\cv}$. We have two possibilities for $(i,x)$:
    \begin{itemize}
        \item If $x\neq c_i$, then $(i,x)\notin T_{\cv}$, and therefore $k^+ = k$ and the posterior probability on the observed addresses remains unchanged:
        \begin{equation}
            \Pdist_{\H\mid e,A^+}(\val_{\H}(\cv;A^+)=k^+_{\cv}) = \Pdist_{\H\mid v,A}(\val_{\H}(\cv;A)=k), \quad \text{if}\ x\neq c_i.
        \end{equation}

        \item If $x = c_i$, then $(i,x)\in T_{\cv}$. Since $(i,x)\in A^+$ by definition, $k^+ = k+1$. Additionally, the edge $e$ determines the original vertex $v$, the address $(i,x)$, and oracle label $b$. Together $A^+, e$ also uniquely determine $A$, and thus $k$ as well. Therefore
        \begin{equation}
            \Pdist_{\H\mid e,A^+}(\val_{\H}(\cv;A^+)=k^+_{\cv}) = (\mathds{1}[b=1]) \Pdist_{\H\mid v,A}(\val_{\H}(\cv;A)=k),\quad \text{if}\ x=c_i,
        \end{equation}
        since the event $\val_{\H}(\cv;A^+) = k + 1$ occurs if and only if both $\val_{\H}(\cv;A) = k$ and $\H_{i}(x) = 1$.  
    \end{itemize}

    For a fixed $v, A$ where $\Pr(E_{v,A})>0$, and our chosen codeword $\cv$, if $k\leq \kappa^*-2$ then it does not appear in the summation over codewords for either $\Phi(t)$ or $\Psi(t+1)$.
    
    Now consider the case $k \geq \kappa^*$. If $(i,x)\in A^C \setminus T_{\cv}$, then $k^+ = k$. There are a total of $M-t-(n-k)$ of these addresses, and for each one and any $b\in\bits$, the conditional probability term is unchanged between $\Phi(t)$ and $\Psi(t+1)$. The total contribution in \cref{eq:edge-labeled-summation-2} from any fixed codeword satisfying $k\geq\kappa^*$ is
    \begin{equation}
     \left(\frac{M-t-(n-k)}{M-t}\right) 2^{-(n-k)} \left[\Pdist_{\H\mid v, A}(\val_{\H}(\cv; A)= k)-2\eta\right]^+.
    \end{equation}
    Now consider the $n-k$ addresses $(i,x)\in T_{\cv} \cap A^C$ which are associated with $\cv$. For each of these addresses $(i,x)$ the conditional probability term is unchanged between $\Phi(t)$ and $\Psi(t+1)$ when $b=1$ and is set to 0 otherwise. On the other hand, $k^+ = k+1$, so the value $2^{-(n-{k^+})}$ doubles. Since $b=1$ with probability $1/2$, the two effects cancel. Therefore the total contribution in \cref{eq:edge-labeled-summation-2} from any codeword $\cv$ satisfying $k\geq\kappa^*$ is
    \begin{equation}
     \left(\frac{n-k}{M-t}\right) 2^{-(n-k)} \left[\Pdist_{\H\mid v, A}(\val_{\H}(\cv; A)= k)-2\eta\right]^+.
    \end{equation}
    Thus the total contribution from codewords $\cv$ satisfying $k\geq \kappa^*$ is equivalently
    \begin{equation}
         2^{-(n-k)} \left[\Pdist_{\H\mid v, A}(\val_{\H}(\cv; A)= k)-2\eta\right]^+
    \end{equation}
    for both $\Phi(t)$ and $\Psi(t+1)$. Thus we get
    \begin{equation}
        \Psi(t+1) - \Phi(t) = \sum_{\substack{v\in V_t\\A\in \Us_t}} \Pr(E_{v,A}) \sum_{\substack{\cv\in\Cs\\k_{\cv} = \kappa^* - 1}} \sum_{\substack{(i,x)\in T_{\cv} \cap A^C}} \frac{2^{-(n-\kappa^* + 1)}}{M-t} \left[\Pdist_{\H\mid v, A}(\val_{\H}(\cv; A)=\kappa^* - 1)-2\eta\right]^+. 
    \end{equation}
    Notice that the change in potential is determined entirely by \emph{codewords which have $\kappa^*$ observed coordinates for the first time!} 
    
    \paragraph{Summing over codewords that cross the threshold.}
    Since $|T_{\cv} \cap A^C| = n-{k_{\cv}} = n-\kappa^*+1$, $\Psi(t+1) - \Phi(t)$ is equal to
    \begin{equation}\label{eqn:threshold-potential-sum}
         \frac{(n-\kappa^* + 1)2^{-(n-\kappa^* + 1)}}{M-t} \E_{A\in\Us_t} \sum_{v\in V_t}\Pr(\bm{v_t}=v \mid \Am_t = A) \sum_{\substack{\cv\in\Cs\\k_{\cv} = \kappa^* - 1}} \left[\Pdist_{\H\mid v, A}(\val_{\H}(\cv; A)=\kappa^* - 1)-2\eta\right]^+. 
    \end{equation}
    To bound this innermost sum, we utilize list-recoverability. For any $A\in \Us_t$, we condition on the event $\Am_t = A$. From \cref{prop:min-entropy-deficiency}, for all vertices $v$ excluding a fraction whose total probability mass is at most $\varepsilon$ conditioned on $\bm{A_t}=A$, we have $H_\infty(\H_A\mid E_{v,A})\geq t-S-\log(1/\varepsilon)$. Call the vertices where this condition holds ``good vertices'' and the $\leq \varepsilon$ fraction of vertices where the condition does not hold the ``bad vertices''. Set $\varepsilon\gets1/|\Cs|$. For each good vertex $v$, we apply \cref{lemma:list-recovery-confidence} with $\Xm \gets \H_A \mid E_{v,A}$ and $D\gets S+\log(1/\varepsilon)$, whose precondition holds due to our assumption in \cref{eqn:potential-increase-assumption}. This gives us
    \begin{equation}
        \sum_{\substack{\cv\in\Cs\\ k_{\cv} = \kappa^* - 1}} \left[\Pdist_{\H\mid v, A}(\val_{\H}(\cv; A)=\kappa^* - 1)-2\eta\right]^+ \leq \sum_{\substack{\cv\in\Cs\\k_{\cv}\geq \rho n+\log(1/\eta)/\gamma}}
        \left[\Pr(\Xm_{T_{\cv}\cap A}\equiv 1)-2\eta\right]^+ \leq L,
    \end{equation}
    which bounds the contribution of all good vertices in \cref{eqn:threshold-potential-sum}. 

    For the bad vertices, we can use the worst-case upper bound of
    \begin{equation}
        \sum_{\substack{\cv\in\Cs\\k_{\cv}\geq \rho n+\log(1/\eta)/\gamma}}
        \left[\Pr(\Xm_{T_{\cv}\cap A}\equiv 1)-2\eta\right]^+ \leq |\{\cv\in\Cs:k_{\cv}\geq \rho n+\log(1/\eta)/\gamma\}| \leq |\Cs|.
    \end{equation}
    Since the probability mass of the bad vertices conditioned on $A$ is at most $\varepsilon$, 
    \begin{equation}
        \sum_{\substack{v\in V_t\\v\ \text{bad}}}\Pr(\bm{v_t}=v \mid \bm{A_t}=A) \sum_{\substack{\cv\in\Cs\\k_{\cv} = \kappa^* - 1}} \left[\Pdist_{\H\mid v, A}(\val_{\H}(\cv; A)=\kappa^* - 1)-2\eta\right]^+ \leq \varepsilon |\Cs| \leq 1. 
    \end{equation}
    This gives an upper bound on the contribution from bad vertices in \cref{eqn:threshold-potential-sum}. Combining the contributions of both the good and bad vertices:
    \begin{equation}
        \Psi(t+1) - \Phi(t) \leq \frac{(n-\kappa^* + 1)2^{-(n-\kappa^* + 1)}}{M-t} \E_{A\in \Us_t}[L+1] = \frac{(n-\kappa^* + 1)2^{-(n-\kappa^* + 1)}}{M-t} (L+1),
    \end{equation}
    completing the proof.
\end{proof}

Finally, we complete the proof of our lower bound using \cref{lemma:success-bound,lemma:jensens-potential,lemma:potential-growth}:
\begin{theorem}\label{thm:yz-lower-bound}
    Consider the $\CI$ streaming problem (\cref{problem:YZ-streaming}), where $\bm{\pi}$ is a uniformly random ordering and $\Cs$ is the folded Reed--Solomon code family in \cref{prop:YZ-list-recovery}. There exist absolute constants $c_1, c_2>0$ such that for large enough $n$, any classical streaming algorithm using at most $2^{c_1 n^{c_2}}$ bits of memory must have success probability at most $2^{-\Omega(n)}$ over the randomness of the oracle and stream ordering.
\end{theorem}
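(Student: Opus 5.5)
The plan is to chain \cref{lemma:success-bound,lemma:jensens-potential,lemma:potential-growth} into a telescoping bound, then choose parameters so that every term is $2^{-\Omega(n)}$.

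Since $\Phi(0)=0$, we write
\[
\Phi(M)=\sum_{t=0}^{M-1}\bigl(\Phi(t+1)-\Psi(t+1)\bigr)+\bigl(\Psi(t+1)-\Phi(t)\bigr).
\]
By \cref{lemma:jensens-potential} the first difference in each summand is non-positive. \cref{lemma:potential-growth} bounds the second by $(L+1)(n-\kappa^*+1)2^{-(n-\kappa^*+1)}/(M-t)$. Summing over $t$ gives a harmonic sum $\sum_{t<M}1/(M-t)\le 1+\ln M$. Hence
\[
\Phi(M)\le (L+1)\,n\,2^{-(n-\kappa^*+1)}(1+\ln M),
\]
and \cref{lemma:success-bound} then yields
\[
\Pr(\Bs\text{ succeeds})\le 2\eta+(L+1)\,n\,2^{-(n-\kappa^*+1)}(1+\ln M).
\]

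Now we fix the parameters. Take $\gamma=1/2$ and $\eta=2^{-\beta n}$ for a small constant $\beta>0$. Then
\[
\kappa^*=\lceil\rho n+2\beta n\rceil+1,
\]
so choosing $\beta<(1-\rho)/4$ gives $n-\kappa^*+1\ge (1-\rho)n/2-O(1)=\Omega(n)$ and also guarantees $\kappa^*\le n$. From \cref{prop:YZ-list-recovery} we have:
\begin{itemize}
\item $L=2^{o(n)}$;
\item $\log M=\log n+\log|\Sigma|=O(n\log n)$, so $\ln M=\poly(n)$;
\item $\log|\Cs|=O(n^2\log n)$;
\item $\ell=2^{\Theta(n^c)}$.
\end{itemize}
Therefore the second term is $2^{o(n)}\cdot\poly(n)\cdot 2^{-\Omega(n)}=2^{-\Omega(n)}$, and the total success probability is $2^{-\Omega(n)}$.

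It remains to check hypothesis \cref{eqn:potential-increase-assumption}. The requirement is $S\le \ell/2-O(n^2\log n)-\beta n$. Since $\ell\ge 2^{c_0 n^{c}}$ for some constant $c_0>0$, any $S\le 2^{c_1 n^{c_2}}$ with $c_2=c$ and $c_1<c_0$ satisfies this for large $n$. Because $2^{c_0n^c}$ dominates every polynomial term, we get $S\le \ell/4\le \ell/2-O(n^2\log n)-\beta n$.

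The bound must also cover randomized algorithms. These are handled by fixing the randomness, as stated in the branching program section; averaging over random seeds preserves the bound.

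The main obstacle is parameter bookkeeping. First, $\kappa^*$ must stay a constant fraction below $n$, which forces $\rho<1$ to be a fixed constant and $\eta$ to be only mildly exponentially small, so that $2\eta$ is still $2^{-\Omega(n)}$. Second, the $\ln M$ factor and $L$ must be subexponential. Both hold given the stated code parameters, and the constants $c_1,c_2$ come out of the list-recovery parameter $\ell$.
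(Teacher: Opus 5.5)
Your proposal is correct and follows essentially the same argument as the paper: telescope $\Phi(M)$ using the convexity lemma, apply the per-step growth bound and the harmonic sum, and set $\gamma=1/2$ and $\eta=2^{-\Theta(n)}$ so that $n-\kappa^*+1=\Theta(n)$. Your deviations are cosmetic: you take $\eta=2^{-\beta n}$ with $\beta<(1-\rho)/4$ rather than exactly $(1-\rho)/4$, bound the prefactor $n-\kappa^*+1$ by $n$, and extract $c_2=c$ and $c_1<c_0$ explicitly from $\ell=2^{\Theta(n^c)}$, none of which changes the argument.
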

\begin{proof}
    From \cref{lemma:success-bound}, $\Phi(0)=0$ and the success probability of any branching program is upper bounded by $2\eta + \Phi(M)$. We begin by bounding the potential difference at each time step $t$:
    \begin{equation}
        \Phi(M) - 0 = \sum_{t=0}^{M-1} \Phi(t+1) - \Phi(t) \leq \sum_{t=0}^{M-1} \Psi(t+1) - \Phi(t),
    \end{equation}
    where the last inequality is due to \cref{lemma:jensens-potential}. 
    
    Set $\gamma \gets 1/2, \eta \gets 2^{-(1-\rho) n/4}$, and $\kappa \gets \lceil\rho n + \log(1/\eta)/\gamma\rceil + 1 \leq n$.
    Since
    \begin{equation}
        S \leq 2^{c_1 n^{c_2}} \leq \ell/2 - O(n^2\log n) - (1-\rho) n/4
    \end{equation}
   for small enough constants $c_1,c_2 > 0$ and large enough $n$, we satisfy the conditions of \cref{lemma:potential-growth}. Applying the lemma and writing $r\triangleq n-\kappa+1 = \lfloor (1-\rho) n/2\rfloor$ gives,
    \begin{equation}
        2\eta + \Phi(M) \leq 2^{1-(1-\rho)n/4}+ (L+1)r2^{-r} \sum_{t=1}^M \frac{1}{t} \leq 2^{1-(1-\rho)n/4}+(L+1)r2^{-r} (1+\ln M) \leq 2^{-\Omega(n)}
    \end{equation}
    where we use $L \leq 2^{o(n)}$, $r=\Theta(n)$, and $\log M = \Theta(n \log n)$ due to \cref{prop:YZ-list-recovery}.
\end{proof}
\begin{remark}
    We can generalize the lower bound: for a general $(\rho, \ell, L)$-list-recoverable code where $\rho\in (0,1)$ and
    \begin{equation}
        \log |\Cs| + n \leq o(\ell), \qquad \log L + \log\log M = o(n),
    \end{equation}
    any classical algorithm solving $\CI$ with memory $o(\ell)$ has success probability at most $2^{-\Omega(n)}$.
\end{remark}
\section{Classical space lower bound for streaming Optimal Polynomial Intersection}\label{sec:OPI-lower-bound}
In this section we extend the techniques introduced in \cref{sec:YZ-lower-bound} and use them to prove our classical space lower bound for $\OPIshort$ in a random-order stream in \cref{thm:opi-lower-bound} (formal version of the lower bound in \cref{thm:opi-lower-bound-informal}).

\subsection{Extending the potential to constraint optimization}
We show how to extend our potential function from \cref{def:potential} to prove the more general setting where we allow codewords $\cv$ to have at least a certain number of their address mappings $\H_i(c_i)$ equal 1 instead of requiring all addresses to map to 1 as in $\CI$.

Given a codeword $\cv$, vertex $v\in V_t$ and observed address set $A$, the posterior probability that $\cv$ satisfies $\val_{\H}(\cv)\geq\tau$ again splits into two components: the probability of at least $j$ out of $|T_{\cv}\cap A|$ observed addresses mapping to 1, and the probability of \emph{exactly} $\tau - j$ out of $n-|T_{\cv}\cap A|$ unobserved addresses mapping to 1; we then sum over all possible $j$. Since oracle mappings on the unobserved addresses are independent of the branching program state, we get
\begin{equation}\label{eqn:general-conditional-prob}
    \Pdist_{\H\mid v,A}(\val_{\H}(\cv)\geq\tau)
    = \mathlarger{\sum}_{j=\tau - n + k_{\cv}}^{\min(k_{\cv}, \tau)} \underbrace{2^{-(n-k_{\cv})}\binom{n-k_{\cv}}{\tau - j}}_{\text{unobserved}}
    \underbrace{
    \vphantom{\binom{n-k_{\cv}}{j}}
    \Pdist_{\H\mid v,A}(\val_\H(\cv; A)\geq j)}_{\text{observed}}.
\end{equation}
The unobserved contribution still retains the $2^{-(n-k_{\cv})}$ factor, but now contains an additional binomial coefficient compared to \cref{eqn:YZ-conditional-prob}: the value $\binom{n-k_{\cv}}{\tau - j}$ counts the unobserved completion patterns with $\tau - j$ successes.

\begin{definition}[Generalized threshold crossing potential]\label{def:generalized-potential}
    Fix constants $\eta\in(0,1/2)$, $\tau\in[n]$, and function $\kappa: \{0,1,\ldots, n\}\rightarrow \Z_{\geq 1}$ satisfying
    \begin{equation}\label{eqn:opi-kappa-properties}
        \kappa(0)\geq 1,\quad
        \kappa(n)\leq\tau,\quad
        \kappa(k+1)-\kappa(k)\in\{0,1\}\ \text{ for } 0\leq k < n.
    \end{equation}
    We define the generalized potential function $\Phi_{\eta, \kappa}^{(\tau)}$ at time $t$ as
    \begin{equation}\label{eqn:generalized-potential}
        \Phi_{\eta, \kappa}^{(\tau)}(t)
        \triangleq \sum_{\substack{v\in V_t\\A\in \Us_t}} \Pr(E_{v,A})
        \sum_{\cv\in\Cs} \sum_{j\geq \kappa(k_{\cv})}
        2^{-(n-k_{\cv})}\binom{n-k_{\cv}}{\tau - j}
        \left[\Pdist_{\H\mid v,A}(\val_\H(\cv; A)\geq j)-2\eta\right]^+.
    \end{equation}
\end{definition}
\begin{remark}
    For all $v,A$, there are only a finite number of values $j$ in the summation which give a non-zero contribution: $j\leq \tau$ from the binomial coefficient and $j\leq k_{\cv}$ since $\val_\H(\cv; A)\leq k_{\cv}$. We do not include these upper bounds for simplicity.
\end{remark}
\begin{remark}
    To recover \cref{def:potential}, set $\tau \gets n$ and fix $\kappa$ to a constant function. The binomial coefficient $\binom{n-k_{\cv}}{\tau - j}$ ensures that $j\geq k_{\cv}$ for any non-zero contribution. The only term in the summation is $j=k_{\cv}$, giving the expression of the original potential.
\end{remark}

We again omit the parameters $\eta,\kappa$ when they are clear from context. For convenience, we define shorthand notation for binomial weighting that gives unobserved probability contribution in \cref{eqn:generalized-potential}:
\begin{equation}
    \beta_m(r) \triangleq \Pr(\mathrm{Binom}(m,1/2)=r) = 2^{-m}\binom{m}{r},
\end{equation}
where $\mathrm{Binom}(m,p)$ denotes the binomial random variable with $m$ trials and probability $p$. For our potential we can now write
\begin{equation}
    \Phi^{(\tau)}(t)= \sum_{\substack{v\in V_t\\A\in \Us_t}} \Pr(E_{v,A})
        \sum_{\cv\in\Cs} \sum_{j\geq\kappa(k_{\cv})}
       \beta_{n-k_{\cv}}(\tau - j)
        \left[\Pdist_{\H\mid v,A}(\val_\H(\cv; A)\geq j)-2\eta\right]^+.
\end{equation}
 
The generalized potential again begins at zero and bounds the final success probability.
\begin{lemma}[Generalized potential bounds $\OPIshort$ success probability]\label{lemma:opi-success-bound}
    For any $\eta\in(0,1/2)$ and function $\kappa:\{0,1,\ldots, n\}\rightarrow \Z_{\geq 1}$ satisfying \cref{eqn:opi-kappa-properties}, we have $\Phi^{(\tau)}(0)=0$ and
    \begin{equation}\label{eqn:opi-success-bound}
        \Pr(\Bs\ \mathrm{succeeds})\leq 2\eta+\Phi^{(\tau)}(M).
    \end{equation}
\end{lemma}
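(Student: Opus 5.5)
The plan mirrors the proof of \cref{lemma:success-bound}, with the generalized decomposition \cref{eqn:general-conditional-prob} in place of \cref{eqn:YZ-conditional-prob}.

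\emph{Initial value.} At $t=0$ we have $A=\emptyset$ and $k_{\cv}=0$ for every codeword. With nothing observed, $\val_\H(\cv;\emptyset)=0$ deterministically. Hence for every $j\geq\kappa(0)\geq 1$ the observed term is $\Pdist_{\H\mid v,\emptyset}(\val_\H(\cv;\emptyset)\geq j)=0$, and $[0-2\eta]^+=0$. Every summand therefore vanishes and $\Phi^{(\tau)}(0)=0$. This is the only place the condition $\kappa(0)\geq 1$ is used.

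\emph{Final value.} At $t=M$ we have $A=\Us$ and $k_{\cv}=n$, so $\beta_0(\tau-j)=\mathds{1}[j=\tau]$. The inner sum over $j\geq\kappa(n)$ therefore has the single term $j=\tau$, which is allowed because $\kappa(n)\leq\tau$. This gives
\begin{equation}
    \Phi^{(\tau)}(M)=\sum_{v\in V_M}\Pr(E_v)\sum_{\cv\in\Cs}\left[\Pdist_{\H\mid v}(\val_\H(\cv)\geq\tau)-2\eta\right]^+ .
\end{equation}
This is consistent with \cref{eqn:general-conditional-prob} at $k_{\cv}=n$.

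\emph{Success bound.} Since the output $\cv_v$ at a leaf is a fixed codeword,
\begin{equation}
    \Pr(\Bs\ \text{succeeds})=\sum_{v\in V_M}\Pr(E_v)\,\Pdist_{\H\mid v}(\val_\H(\cv_v)\geq\tau).
\end{equation}
For any $p$ we have $p\leq 2\eta+[p-2\eta]^+$. Applying this pointwise to each leaf term, and using $\sum_v\Pr(E_v)=1$ and the nonnegativity of every $[\cdot]^+$ term, we bound the term for $\cv_v$ by the sum over all $\cv\in\Cs$. This yields $\Pr(\Bs\ \text{succeeds})\leq 2\eta+\Phi^{(\tau)}(M)$.

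There is no real obstacle here. The only care needed is the bookkeeping at $t=M$: the binomial weight must collapse the $j$-sum to the single index $j=\tau$, and that index must lie in the summation range, which is exactly what $\kappa(n)\leq\tau$ guarantees. Zero-probability events $E_{v,A}$ contribute nothing by the convention that $\Pdist=0$ on them.
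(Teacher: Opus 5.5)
Your proposal is correct and follows the paper's proof essentially step for step. You use $\kappa(0)\geq 1$ to get $\Phi^{(\tau)}(0)=0$, and the collapse of $\beta_0(\tau-j)$ to the single index $j=\tau$ (admissible since $\kappa(n)\leq\tau$) to identify $\Phi^{(\tau)}(M)$; the success bound then follows as in \cref{lemma:success-bound}, except that you pass directly from the term for $\cv_v$ to the sum over all codewords, whereas the paper inserts a $\max_{\cv}$ step in between.
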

\begin{proof}
    At $t=0$, there are no observed addresses, so $k_{\cv}=0$ for every $\cv\in\Cs$. Since $\kappa(0)\geq 1$, $\Pdist_{\H\mid v}(\val_\H(\cv; A)\geq j) = 0$ for all $j\geq \kappa(0)$ and codewords $\cv$, implying $\Phi^{(\tau)}(0)=0$.
 
    At $t=M$, since $\kappa(n) \leq \tau$ for every $\cv\in\Cs$, the only non-zero binomial weight is $\beta_0(\tau-j)=1$ when $j=\tau$; the weights are zero for all other values $j\neq\tau$, so 
    \begin{equation}
        \Phi^{(\tau)}(M)=\sum_{v\in V_M}\Pr(E_v)\sum_{\cv\in\Cs}\left[\Pdist_{\H\mid v}(\val_{\H}(\cv)\geq\tau)-2\eta\right]^+.
    \end{equation}
    The remainder of the proof now follows \cref{lemma:success-bound} with the event $\{\val_{\H}(\cv)\geq\tau\}$ replacing $\{\val_{\H}(\cv)=n\}$:
    \begin{align}
        \Pr(\Bs\ \mathrm{succeeds})
        &=\sum_{v\in V_M}\Pr(E_v)\,\Pdist_{\H\mid v}(\val_{\H}(\cv_v)\geq\tau)\\
        & \leq 2\eta + \sum_{v\in V_M} \Pr(E_v) \max_{\cv\in \Cs} \left[ \Pdist_{\H\mid v}(\val_{\H}(\cv)\geq\tau) - 2\eta\right]^+ \\
        &\leq 2\eta + \sum_{v\in V_M} \Pr(E_v) \sum_{\cv\in\Cs} \left[\Pdist_{\H\mid v}(\val_{\H}(\cv)\geq\tau) - 2\eta\right]^+ \\
        &= 2\eta+\Phi^{(\tau)}(M).
    \end{align}
\end{proof}

\subsection{Bounding the potential for Optimal Polynomial Intersection}
Throughout this subsection, we consider the codeword search interpretation of $\OPIshort$ using \cref{def:code-relation}. Set $\Sigma\triangleq\F_q$ and $\Cs$ to the Reed--Solomon code $\RS_q(n,d)$. We now follow the same proof outline in \cref{sec:YZ-lower-bound} for proving \cref{thm:yz-lower-bound}: \cref{lemma:opi-jensens-potential,lemma:opi-pairwise-agreement,lemma:opi-potential-growth} below are the $\OPIshort$ analogues of \cref{lemma:jensens-potential,lemma:list-recovery-confidence,lemma:potential-growth}. 

We note two key differences in this new proof compared to the one presented in \cref{sec:YZ-lower-bound}:
\begin{enumerate}
    \item The threshold $\kappa$, besides being a function, now constrains the observed value $\val_{\H}(\cv;A)$ of a codeword rather than just its number of observed coordinates $k_{\cv}$. Previously the two were equivalent for any codeword contributing to the $\CI$ potential, but this is no longer the case for $\OPI$. 

    \item Unlike \cref{lemma:list-recovery-confidence}, \cref{lemma:opi-pairwise-agreement} does not rely on the density-restoring partition, and instead uses the fact that distinct polynomials agree on at most $d-1$ coordinates.
\end{enumerate}

Similarly to \cref{sec:YZ-lower-bound}, we define the edge-parameterized potential for all $1 \leq t \leq M$ by
\begin{equation}\label{eqn:opi-edge-potential}
    \Psi^{(\tau)}_{\eta,\kappa}(t)\triangleq\sum_{\substack{e\in\Gamma_t\\A\in\Us_t}}\Pr(E_{e,A})\sum_{\cv\in\Cs}\sum_{j\geq\kappa(k_{\cv})}\beta_{n-k_{\cv}}(\tau-j)\left[\Pdist_{\H\mid e,A}(\val_{\H}(\cv;A)\geq j)-2\eta\right]^+.
\end{equation}
 
\begin{lemma}[Convexity bound]\label{lemma:opi-jensens-potential}
    Fix constant $\eta\in(0,1/2)$, $\tau \in [n]$, and function $\kappa:\{0,1,\ldots, n\}\rightarrow \Z_{\geq 1}$. Then 
    \begin{equation}
        \Phi^{(\tau)}(t)\leq\Psi^{(\tau)}(t)
    \end{equation}
    for all $1 \leq t \leq M$.
\end{lemma}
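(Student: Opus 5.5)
The argument follows the proof of \cref{lemma:jensens-potential} almost verbatim. The key observation is that, for fixed $A\in\Us_t$ and codeword $\cv$, the quantities $k_{\cv}=|T_{\cv}\cap A|$, the range $j\geq\kappa(k_{\cv})$, and the weights $\beta_{n-k_{\cv}}(\tau-j)$ depend only on $A$ and $\cv$. None of them depends on whether we condition on a vertex $v\in V_t$ or on an incoming edge $e\in\Gamma_{\text{in}}(v)$. So, for each fixed triple $(A,\cv,j)$, the potential is a nonnegative weight times $[\,\cdot-2\eta]^+$ applied to a conditional probability. Jensen's inequality can then be applied termwise.

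First, I regroup the edge sum in \cref{eqn:opi-edge-potential} according to the vertex each edge enters:
\begin{equation}
\Psi^{(\tau)}(t)=\sum_{\substack{v\in V_t\\A\in\Us_t}}\sum_{e\in\Gamma_{\text{in}}(v)}\Pr(E_{e,A})\sum_{\cv\in\Cs}\sum_{j\geq\kappa(k_{\cv})}\beta_{n-k_{\cv}}(\tau-j)\left[\Pdist_{\H\mid e,A}(\val_{\H}(\cv;A)\geq j)-2\eta\right]^+.
\end{equation}
This is valid because each edge in $\Gamma_t$ enters exactly one vertex of $V_t$. Second, for $(v,A)$ with $\Pr(E_{v,A})>0$, the events $E_{e,A}$ for $e\in\Gamma_{\text{in}}(v)$ partition $E_{v,A}$. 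The law of total probability then gives
\begin{equation}
\Pdist_{\H\mid v,A}(\val_{\H}(\cv;A)\geq j)=\sum_{e\in\Gamma_{\text{in}}(v)}\frac{\Pr(E_{e,A})}{\Pr(E_{v,A})}\Pdist_{\H\mid e,A}(\val_{\H}(\cv;A)\geq j).
\end{equation}
Third, since $x\mapsto[x-2\eta]^+$ is convex and the weights $\Pr(E_{e,A})/\Pr(E_{v,A})$ sum to $1$, Jensen's inequality gives
\begin{equation}
\Pr(E_{v,A})\left[\Pdist_{\H\mid v,A}(\val_{\H}(\cv;A)\geq j)-2\eta\right]^+\leq\sum_{e\in\Gamma_{\text{in}}(v)}\Pr(E_{e,A})\left[\Pdist_{\H\mid e,A}(\val_{\H}(\cv;A)\geq j)-2\eta\right]^+.
\end{equation}

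Finally, I multiply this inequality by the nonnegative weight $\beta_{n-k_{\cv}}(\tau-j)$ and sum over $j\geq\kappa(k_{\cv})$, $\cv\in\Cs$, $v\in V_t$ and $A\in\Us_t$. The left-hand side becomes $\Phi^{(\tau)}(t)$ and the right-hand side becomes $\Psi^{(\tau)}(t)$.

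The only points needing care are bookkeeping. Pairs $(v,A)$ with $\Pr(E_{v,A})=0$ contribute zero to $\Phi^{(\tau)}(t)$ by the convention defining $\Pdist_{\H\mid v,A}$. They contribute nonnegatively to $\Psi^{(\tau)}(t)$; in fact zero, since then $\Pr(E_{e,A})=0$ for every $e\in\Gamma_{\text{in}}(v)$. The sum over $j$ has only finitely many nonzero terms, so exchanging the order of summation is harmless. No step is a genuine obstacle, and the properties \cref{eqn:opi-kappa-properties} of $\kappa$ are not needed here.
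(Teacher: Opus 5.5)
Your proof is correct and follows essentially the same route as the paper: you regroup edges by the vertex they enter, apply the law of total probability and Jensen's inequality for $x\mapsto[x-2\eta]^+$ at each fixed $(A,\cv,j)$, and sum with the nonnegative weights $\beta_{n-k_{\cv}}(\tau-j)$, which depend only on $A$ and $\cv$. Your extra bookkeeping (zero-probability pairs, finitely many nonzero $j$-terms, and not needing the properties of $\kappa$) makes explicit what the paper leaves implicit.
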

\begin{proof}
    From an identical argument as \cref{lemma:jensens-potential}, we get
    \begin{equation}\label{eqn:opi-edge-decomposition}
        \Pr(E_{v,A})\left[\Pdist_{\H\mid v, A}(\val_{\H}(\cv; A)\geq j)-2\eta\right]^+ \leq \sum_{e\in \Gamma_{\text{in}}(v)}\Pr(E_{e,A}) \left[\Pdist_{\H\mid e, A}(\val_{\H}(\cv; A)\geq j)-2\eta\right]^+.
    \end{equation}
    For a fixed set of observed addresses $A\in\Us_t$, $k_{\cv}$ is specified for each codeword $\cv$. Therefore $A$ completely defines the binomial weights $\beta_{n-k_{\cv}}(\tau-j)$ and the range of $j$ which \cref{eqn:opi-edge-potential} sums over. We can again take the summation over $A\in\Us_t$, $v\in V_t$, and $\cv\in\Cs$ to both sides, as well as the summation over $j$ and binomial weight $\beta_{n-k_{\cv}}(\tau-j)$ to both sides of \cref{eqn:opi-edge-decomposition}, which gives the desired result.
\end{proof}

The proof of \cref{lemma:list-recovery-confidence} utilizes the density-restoring partition to show that the oracle is unlikely to map \emph{all} unfixed coordinates of a codeword to $1$. In the case of $\OPIshort$, we forgo the density-restoring partition and instead directly use the fact that distinct polynomials of degree less than $d$ have at most $d-1$ common points. We use this to show that only few codewords can be above the value threshold with non-negligible probability.

\begin{lemma}[Few high-value polynomials]\label{lemma:opi-pairwise-agreement}
For fixed $1\leq d\leq n$, let $\Cs\subseteq\Sigma^n$ be a code in which any two distinct codewords agree on at most $d-1$ coordinates. Fix $A\subseteq\Us$, and let $\Xm$ be a random variable on $\bits^A$ with $H_\infty(\Xm)\ge |A|-D$. For $\cv\in\Cs$, write $k_{\cv}\triangleq |T_{\cv}\cap A|$ and let $E_{\cv}$ be the event
\[
    |\Xm_{T_{\cv}\cap A}|\ge k_{\cv}/2+s.
\]
Suppose $s\ge 3\sqrt n$ and $D\le s^2/(8d)$, and set
$\eta\triangleq e^{-s^2/(4n)}$. Then fewer than $n/(d\eta)$ codewords
$\cv\in\Cs$ satisfy $\Pr[E_{\cv}]\ge 2\eta$. Consequently,
\begin{equation}
\label{eqn:opi-pairwise-agreement}
    \sum_{\cv\in\Cs}
    \left[\Pr[E_{\cv}]-2\eta\right]^+
    \le \frac{n}{d\eta}.
\end{equation}
\end{lemma}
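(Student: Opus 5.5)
The plan is a second-moment (Cauchy--Schwarz) argument over a family of ``bad'' codewords, played off against a pairwise-overlap bound. The consequence \eqref{eqn:opi-pairwise-agreement} follows at once from the counting claim, since each summand $[\Pr[E_{\cv}]-2\eta]^+$ is at most $1$ and vanishes unless $\Pr[E_{\cv}]\ge 2\eta$. So everything reduces to showing that fewer than $N\triangleq n/(d\eta)$ codewords have $\Pr[E_{\cv}]\ge 2\eta$. Suppose for contradiction that $\cv^{(1)},\ldots,\cv^{(N)}$ are distinct codewords with $\Pr[E_{\cv^{(i)}}]\ge 2\eta$, and set $Z\triangleq\sum_{i}\mathds{1}[E_{\cv^{(i)}}]$. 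Then $\E Z\ge 2\eta N$, and by Cauchy--Schwarz $(\E Z)^2\le \E[Z^2]=\sum_i\Pr[E_i]+\sum_{i\ne j}\Pr[E_i\cap E_j]$.

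\textbf{Transfer step.} To control these probabilities under $\Xm$, I will use the min-entropy deficiency in the standard way. For any event $F\subseteq\bits^A$, $H_\infty(\Xm)\ge|A|-D$ gives $\Pr[\Xm\in F]\le 2^{D}\Pr[\Us_A\in F]$, where $\Us_A$ is uniform on $\bits^A$. Under the uniform distribution, Hoeffding (\cref{prop:hoeffding}) gives $\Pr[E_{\cv}]\le e^{-2s^2/k_{\cv}}\le e^{-2s^2/n}=\eta^8$.

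\textbf{Pairwise step.} For $i\ne j$ write $T_i=T_{\cv^{(i)}}\cap A$. The identity $|\Xm_{T_i}|+|\Xm_{T_j}|=|\Xm_{T_i\cup T_j}|+|\Xm_{T_i\cap T_j}|$ together with $|T_i\cap T_j|\le d-1$ shows that $E_i\cap E_j$ forces
\[
|\Xm_{T_i\cup T_j}|\ge |T_i\cup T_j|/2+2s-(d-1)/2 .
\]
Hoeffding on a set of size at most $2n$ then bounds the uniform probability of this by roughly $e^{-(2s-d/2)^2/n}$. Multiplying by $2^D\le 2^{s^2/(8d)}$, the aim is $\Pr[E_i\cap E_j]\lesssim \eta^2\cdot(\text{small})$. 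Plugging into the second-moment inequality then gives $4\eta^2N^2\le N+N^2\cdot o(\eta^2)$, i.e.\ $N\lesssim 1/\eta^2$. Finally I would trade this for the stated $n/(d\eta)$ using the relation between $s$, $d$ and $D$.

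\textbf{Main obstacle.} The loss factor $2^D$ is the main obstacle. When $d$ is comparable to $s$, the overlap term $d/2$ eats the gain from $2s$. When $d$ is small, $2^{s^2/(8d)}$ can swamp $e^{-s^2/n}$, so a naive union-style transfer is too lossy. The fix I would try first is to avoid transferring each pairwise event separately. Instead, apply the transfer once to a single linear statistic. Let $\xi=\Xm-\tfrac12$ and $u\triangleq\sum_i \mathds{1}_{T_i}$. The Gram matrix of the vectors $\mathds{1}_{T_i}$ has diagonal at most $n$ and off-diagonal at most $d-1$, which gives $\|u\|^2\le Nn+N^2d$. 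On $E_i$ one has $\langle\mathds{1}_{T_i},\xi\rangle\ge s$, so $\langle u,\xi\rangle$ picks up at least $sZ$ from the events that occur. Combining this with a $2^D$-weighted Hoeffding tail, $\Pr[\langle u,\xi\rangle\ge\lambda]\le 2^De^{-2\lambda^2/\|u\|^2}$, should give $\eta N\lesssim \sqrt{(D+1)(n/N+d)}/s\cdot N$. With $D\le s^2/(8d)$ and $s\ge3\sqrt n$, this should force $N<n/(d\eta)$.

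The difficulty in carrying this out is that the positive contributions on the $E_i$ are not cancelled by negative contributions from codewords whose events fail. I expect to handle this by conditioning on the set of indices $i$ with $E_i$ holding, or by restricting $u$ to a random subfamily. Getting the constants to match exactly $n/(d\eta)$ is where I expect the real work to lie.
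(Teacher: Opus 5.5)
You reduce the sum bound to the counting claim correctly, and the transfer $\Pr[\Xm\in F]\le 2^D\,\Pr[\mathbf{U}\in F]$ for $\mathbf{U}$ uniform on $\bits^A$ is correct, as is the Gram estimate $\|u\|^2\le Nn+N^2d$. But the proposal never reaches a proof, and both routes you sketch fail for structural reasons, not because of constants. In the pairwise second moment, each additional event buys at best a factor of roughly $e^{-2s^2/n}$ under the uniform distribution, while $2^D$ can be as large as $2^{s^2/(8d)}$. So for $d\ll n$ no moment of fixed order survives the transfer, as you observed, and you are forced to intersect on the order of $n/d$ events at once. The full-family linear statistic fails even if you grant $\langle u,\xi\rangle\ge sZ$. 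Only an $\eta$-fraction of the events occur, so on $\{Z\ge\eta N\}$, which has probability at least $\eta$, the signal is only $s\eta N$. The Hoeffding exponent your Gram bound certifies, $2s^2\eta^2N^2/(Nn+N^2d)$, stays below $2s^2\eta^2/d$ for every $N$. That is smaller than the loss $D\ln 2=s^2\ln 2/(8d)$ at the extreme $D=s^2/(8d)$, because $16\eta^2/\ln 2<1$ when $\eta\le e^{-9/4}$. Hence $2^D$ times the tail bound exceeds $1$, and no contradiction with probability $\ge\eta$ can arise. Passing to a random subfamily does not help, since it is diluted by the same factor $\eta$.

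Your other suggestion, conditioning on which $E_i$ occur, is the right instinct. It works only once it becomes a union bound over $r$-subsets with $r\triangleq\lfloor n/d\rfloor$, drawn from a family of size $O(r/\eta)$; this is the paper's argument. For any distinct $\cv_1,\ldots,\cv_r$, the event $\bigcap_iE_{\cv_i}$ forces $\sum_{a\in A}w(a)(X_a-\tfrac12)\ge rs$, where $w(a)\triangleq|\{i:a\in T_{\cv_i}\}|$. Moreover $\sum_a w(a)^2\le rn+r(r-1)(d-1)\le 2rn$, precisely because $r\le n/d$. Hoeffding and the transfer then give $\Pr[\bigcap_iE_{\cv_i}]\le 2^De^{-rs^2/n}$. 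This is your statistic restricted to occurring indices, so the sign problem disappears. Now suppose at least $r/\eta$ codewords had $\Pr[E_{\cv}]\ge 2\eta$. Take $R$ of them with $r/\eta\le R\le 2r/\eta$, and let $Z$ count the occurring events among them. From $\E Z\ge 2\eta R$ and $Z\le R$ you get $\Pr[Z\ge r]\ge 2\eta-r/R\ge\eta$, whereas
\begin{equation*}
\Pr[Z\ge r]\le\binom{R}{r}2^De^{-rs^2/n}\le\Bigl(\frac{2e}{\eta}\Bigr)^r2^De^{-rs^2/n}\le\bigl(2e\,e^{-s^2/(4n)}\bigr)^re^{-rs^2/(4n)}<\eta.
\end{equation*}
This uses $\eta=e^{-s^2/(4n)}$, $D\le s^2/(8d)\le rs^2/(4n)$ (as $r\ge n/(2d)$), and $2e\,e^{-s^2/(4n)}<1$, which follows from $s\ge 3\sqrt n$. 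The cap $R\le 2r/\eta$ is what keeps the union-bound cost at $(2e/\eta)^r$, which the gain $e^{-rs^2/n}$ absorbs. The argument yields fewer than $r/\eta\le n/(d\eta)$ such codewords directly. No conversion from a bound like $N\lesssim 1/\eta^2$ is needed, and that conversion would fail whenever $1/\eta>n/d$.
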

\begin{proof}
    Fix integer $r\triangleq \lfloor n/d\rfloor$. Then $rd\leq n$ and $D\leq s^2/(8d)\le rs^2/(4n)$.

    We first show that any $r$ distinct codewords are unlikely to simultaneously have an observed Hamming weight surplus $s$ above their mean. Fix distinct
    $\cv_1,\ldots,\cv_r$, and for $u\in A$ let
    $w(u)\triangleq|\{i:u\in T_{\cv_i}\}|$. If all events
    $E_{\cv_1},\ldots,E_{\cv_r}$ occur, then
    \[
        \sum_{u\in A}w(u)(X_u-1/2)\ge rs.
    \]
    Since two distinct codewords agree on at most $d-1$ coordinates,
    $\sum_{u\in A}w(u)^2\le rn+r(r-1)(d-1)\le2rn$.
    Hence, for uniform independent bits on $A$, Hoeffding's inequality gives
    \[
        \Pr\left[\bigcap_{i=1}^r E_{\cv_i}\right]
        \le e^{-rs^2/n}.
    \]
    The min-entropy bound gives
    $\Pr(\Xm=\zv)\le2^{D-|A|}$ for every $\zv\in\bits^A$.
    Therefore, under $\Xm$,
    $\Pr[\bigcap_{i=1}^r E_{\cv_i}]
      \le2^D e^{-rs^2/n}$.

    Suppose now that at least $r/\eta$ codewords satisfy
    $\Pr(E_{\cv})\ge2\eta$. Choose $R$ of them with
    $r/\eta\le R\le2r/\eta$. The expected number of occurring events among these $R$ codewords is at least $2\eta R$. Since at most $R$ events can occur in total and $r\leq\eta R$, this expectation is impossible unless at least $r$ events occur with probability at least $\eta$.

    On the other hand, a union bound over the $r$-subsets gives
    \[
        \Pr[\text{at least $r$ events occur}]
        \le \binom{R}{r}2^D e^{-rs^2/n}
        \le \left(\frac{2e}{\eta}\right)^r
             2^D e^{-rs^2/n}
        < \eta.
    \]
    Indeed, $\eta=e^{-s^2/(4n)}$, $s\geq 3\sqrt n$, and
    $D\leq rs^2/(4n)$ imply that the last expression is at most
    $e^{-rs^2/(4n)}\leq\eta$. This is a contradiction.

    Thus fewer than $r/\eta\leq n/(d\eta)$ codewords satisfy
    $\Pr(E_{\cv})\geq 2\eta$. All other terms inside the $\max$ function vanish, while each remaining term is at most $1$. Hence
    \[
        \sum_{\cv\in\Cs}[\Pr(E_{\cv})-2\eta]^+ \leq \frac{n}{d\eta},
    \]
    as claimed.
\end{proof}

Now we move to bound the potential growth at each time step using \cref{lemma:opi-pairwise-agreement}. We remark that the threshold $\kappa$ in the following lemma is chosen analogously to \cref{lemma:potential-growth}: $\kappa(k)-1$ is the smallest integer that is at least the threshold $\frac{k}{2}+s$ in \cref{lemma:opi-pairwise-agreement}.

\begin{lemma}[Generalized potential growth]\label{lemma:opi-potential-growth}
    For fixed $1\leq d\leq n$, let $\Cs\subseteq\Sigma^n$ be a code where any two distinct codewords agree on at most $d-1$ coordinates. Fix $s\geq 3\sqrt n$, and let
    \begin{equation}\label{eqn:opi-kappa}
        \eta\triangleq e^{-s^2/(4n)} < 1/2
        \qquad\text{and}\qquad
        \kappa(k)\triangleq\left\lceil\frac k2+s\right\rceil+1\quad\text{for } 0\leq k\leq n.
    \end{equation}
    If the branching program $\Bs$ has width $w=2^S$, where
    $S\leq s^2/(8d)- \log|\Cs|$, then for every $0\leq t<M$,
    \begin{equation}
        \Psi^{(\tau)}(t+1)-\Phi^{(\tau)}(t)\leq\frac{n/(d\eta)+1}{M-t}\max_{0\leq k<n}\left(\frac{n-k}{2}\right)\beta_{n-k-1}\left(\tau-\kappa(k)\right).
    \end{equation}
\end{lemma}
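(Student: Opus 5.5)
We mirror the proof of \cref{lemma:potential-growth}. First we expand $\Psi^{(\tau)}(t+1)$ over the outgoing edges of each $v\in V_t$, indexed by $(i,x,b)$ with $(i,x)\in A^C$. As before, $\Pr(E_{e,A^+})=\Pr(E_{v,A})/(2(M-t))$. For a fixed codeword $\cv$ with $k=k_{\cv}$ there are two kinds of addresses:
\begin{itemize}
\item If $(i,x)\notin T_{\cv}$, all conditional probabilities and binomial weights are unchanged. There are $M-t-(n-k)$ such addresses, and each contributes exactly as in $\Phi^{(\tau)}(t)$.
\item If $(i,x)\in T_{\cv}\cap A^C$ (there are $n-k$ of these), then $k^+=k+1$. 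Moreover $\Pdist_{\H\mid e,A^+}(\val_\H(\cv;A^+)\ge j)$ equals $\Pdist_{\H\mid v,A}(\val_\H(\cv;A)\ge j-1)$ when $b=1$ and $\Pdist_{\H\mid v,A}(\val_\H(\cv;A)\ge j)$ when $b=0$.
\end{itemize}
So for each codeword the net change is $\frac{n-k}{M-t}$ times the difference
\[
\tfrac12\sum_{j\ge\kappa(k+1)}\beta_{n-k-1}(\tau-j)\bigl([p_{j-1}-2\eta]^++[p_j-2\eta]^+\bigr)-\sum_{j\ge\kappa(k)}\beta_{n-k}(\tau-j)[p_j-2\eta]^+,
\]
where $p_j\triangleq\Pdist_{\H\mid v,A}(\val_\H(\cv;A)\ge j)$.

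Next we handle this difference using Pascal's identity, $\beta_{n-k}(\tau-j)=\tfrac12\bigl(\beta_{n-k-1}(\tau-j)+\beta_{n-k-1}(\tau-j+1)\bigr)$. Reindexing the $p_{j-1}$ sum ($j\mapsto j+1$), the first sum becomes
\[
\tfrac12\sum_{j\ge\kappa(k+1)-1}\beta_{n-k-1}(\tau-j-1)[p_j-2\eta]^++\tfrac12\sum_{j\ge\kappa(k+1)}\beta_{n-k-1}(\tau-j)[p_j-2\eta]^+.
\]
The second sum is exactly Pascal-matched with $\sum_{j\ge\kappa(k)}\beta_{n-k}(\tau-j)[p_j-2\eta]^+$. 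Compare the index ranges using $\kappa(k+1)-\kappa(k)\in\{0,1\}$:
\begin{itemize}
\item If $\kappa(k+1)=\kappa(k)$, the only leftover is one extra term, $j=\kappa(k)-1$, with weight $\tfrac12\beta_{n-k-1}(\tau-\kappa(k))$ and factor $[p_{\kappa(k)-1}-2\eta]^+$.
\item If $\kappa(k+1)=\kappa(k)+1$, the first sum's range matches while the second loses the term $j=\kappa(k)$. The difference is then nonpositive and can be dropped.
\end{itemize}
So each codeword's increase is at most
\[
\frac{n-k}{2(M-t)}\,\beta_{n-k-1}(\tau-\kappa(k))\,[p_{\kappa(k)-1}-2\eta]^+ .
\]
The main obstacle is carrying out this index bookkeeping carefully, including boundary cases such as $k=n$, where there are no unobserved addresses and the contribution vanishes.

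Finally, we sum over codewords and vertices. Since $\kappa(k)-1=\lceil k/2+s\rceil$, the event $\{\val_\H(\cv;A)\ge\kappa(k)-1\}$ is exactly $E_{\cv}$ from \cref{lemma:opi-pairwise-agreement} with $\Xm\gets\H_A\mid E_{v,A}$. We bound the per-codeword prefactor by $\max_{0\le k<n}\frac{n-k}{2}\beta_{n-k-1}(\tau-\kappa(k))$ and pull it out.

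It remains to bound $\E_A\sum_v\Pr(\bm{v_t}=v\mid A)\sum_{\cv}[\Pr(E_{\cv})-2\eta]^+$. As before, apply \cref{prop:min-entropy-deficiency} with $\varepsilon=1/|\Cs|$. Good vertices have $D=S+\log|\Cs|\le s^2/(8d)$, so \cref{lemma:opi-pairwise-agreement} bounds their sum by $n/(d\eta)$. Bad vertices have mass at most $\varepsilon$ and sum at most $|\Cs|$, contributing at most $1$. This gives the stated factor $(n/(d\eta)+1)/(M-t)$.
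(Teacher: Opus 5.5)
Your proposal follows the paper's proof essentially step for step. It uses the same expansion of $\Psi^{(\tau)}(t+1)$ over outgoing edges and the same shift $p_j\mapsto p_{j-1}$ for $b=1$. It makes the same Pascal-identity comparison, with the two cases $\kappa(k+1)-\kappa(k)\in\{0,1\}$ leaving only the threshold-crossing term $\frac{n-k}{2(M-t)}\beta_{n-k-1}(\tau-\kappa(k))[p_{\kappa(k)-1}-2\eta]^+$. It finishes with the same good/bad vertex split at $\varepsilon=1/|\Cs|$ feeding into \cref{lemma:opi-pairwise-agreement}.

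One slip: Pascal's identity should read $\beta_{n-k}(\tau-j)=\tfrac12\bigl(\beta_{n-k-1}(\tau-j)+\beta_{n-k-1}(\tau-j-1)\bigr)$, not with $\tau-j+1$. Your subsequent matching against the $\beta_{n-k-1}(\tau-j-1)$ sum already uses the correct form, so the argument stands once this is fixed.
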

\begin{proof}
    Following the argument and notation from \cref{lemma:potential-growth}, we partition the summation over edges based on the vertex each edge originates from:
    \begin{equation}\label{eqn:opi-edge-labeled-summation}
        \Psi^{(\tau)}(t+1) = \sum_{\substack{v\in V_t\\A\in \Us_t}} \Pr(E_{v,A}) \sum_{\substack{(i,x)\in A^C\\b\in\bits}} \sum_{\cv\in\Cs} \sum_{j\geq\kappa(k_{\cv}^+)}\frac{\beta_{n-k_{\cv}^+}(\tau-j)}{2(M-t)}\left[\Pdist_{\H\mid e,A^+}(\val_{\H}(\cv;A^+)\geq j)-2\eta\right]^+,
    \end{equation}
    where $e$ is the edge out of $v$ that reads $(i,x,b)$.

    \paragraph{Individual codeword analysis.} Fix $v\in V_t$ and $A\in\Us_t$ with $\Pr(E_{v,A})>0$, and consider a fixed codeword $\cv\in\Cs$. Abbreviate $k\triangleq k_{\cv}$, $k^+\triangleq k_{\cv}^+$, and $P(j)\triangleq\Pdist_{\H\mid v,A}(\val_{\H}(\cv;A)\geq j)$. Dividing by the common factor $\Pr(E_{v,A})$, the contribution of $\cv$ to $\Phi^{(\tau)}(t)$ is $\sum_{j\geq\kappa(k)}\beta_{n-k}(\tau-j)\left[P(j)-2\eta\right]^+$, and we compare it with the contributions of the pairs $(i,x)\in A^C$ to \cref{eqn:opi-edge-labeled-summation}. As in the proof of \cref{lemma:potential-growth}, conditioning on $E_{e,A^+}$ instead of $E_{v,A}$ leaves the distribution of $\H_A$ unchanged and fixes $\H_i(x)=b$, so $\Pdist_{\H\mid e,A^+}(\val_{\H}(\cv;A^+)\geq j)$ equals $P(j)$ if $x\neq c_i$ and $P(j-b)$ if $x=c_i$. We again have two cases for $(i,x)$:
    \begin{itemize}
        \item If $x\neq c_i$, then $(i,x)\in A^C\setminus T_{\cv}$ and $k^+=k$, so the observed address probability does not change. The contribution from each of the $M-t-(n-k)$ pairs $(i,x)$ is therefore
        \begin{equation}
           \frac{1}{M-t}\sum_{j\geq\kappa(k)}\beta_{n-k}(\tau-j)\left[P(j)-2\eta\right]^+.
        \end{equation}

        \item If $x=c_i$, then $(i,c_i)\in T_{\cv}\cap A^C$ and $k^+=k+1$. We sum the $b=0$ and $b=1$ contributions to get a contribution from each of the $n-k$ pairs of
        \begin{equation}\label{eqn:opi-codeword-term}
            \begin{aligned}
                \frac{1}{2(M-t)}\Bigg(&\sum_{j\geq\kappa(k+1)}\beta_{n-k-1}(\tau-j)\left[P(j)-2\eta\right]^+\\
                &+\sum_{j\geq\kappa(k+1)-1}\beta_{n-k-1}(\tau-j-1)\left[P(j)-2\eta\right]^+\Bigg).
            \end{aligned}
        \end{equation}
    \end{itemize}
    For $k<n$, Pascal's identity gives
    \begin{equation}
        \binom{n-k}{a} = \binom{n-k-1}{a} + \binom{n-k-1}{a-1} \implies \beta_{n-k}(a) = \frac{1}{2}\left(\beta_{n-k-1}(a)+\beta_{n-k-1}(a-1)\right)
    \end{equation}
    for every integer $a$, so
    \begin{equation}\label{eqn:pascal-sum}
        \begin{aligned}
            \sum_{j\geq\kappa(k)}2\beta_{n-k}(\tau-j)\left[P(j)-2\eta\right]^+ &= \sum_{j\geq\kappa(k)}\beta_{n-k-1}(\tau-j)\left[P(j)-2\eta\right]^+ \\&\qquad+ \sum_{j\geq\kappa(k)}\beta_{n-k-1}(\tau-j-1)\left[P(j)-2\eta\right]^+.
        \end{aligned}
    \end{equation}
    Since $\kappa(k+1)\in\{\kappa(k),\kappa(k)+1\}$ by \cref{eqn:opi-kappa}, we treat the two cases separately:
    \begin{itemize}
        \item If $\kappa(k+1)=\kappa(k)+1$, then $2(M-t)$ times \cref{eqn:opi-codeword-term} has one less term than the right-hand side of \cref{eqn:pascal-sum}, and therefore it is at most the left-hand side. In this case there is no positive contribution to the potential for this particular codeword.

        \item If $\kappa(k+1)=\kappa(k)$, then $2(M-t)$ times \cref{eqn:opi-codeword-term} has the extra term
        \begin{equation}
            \beta_{n-k-1}(\tau-\kappa(k))\left[P(\kappa(k)-1)-2\eta\right]^+
        \end{equation}
        compared to the right-hand side of \cref{eqn:pascal-sum}.
    \end{itemize}
    Summing over all $M-t$ pairs $(i,x)\in A^C$ and restoring the factor $\Pr(E_{v,A})$, the contribution of $\cv$ to $\Psi^{(\tau)}(t+1)-\Phi^{(\tau)}(t)$ is at most
    \begin{equation}\label{eqn:opi-threshold-term}
        \Pr(E_{v,A})\,\frac{n-k}{2(M-t)}\,\beta_{n-k-1}\big(\tau-\kappa(k)\big)\left[P(\kappa(k)-1)-2\eta\right]^+
    \end{equation}
    if $k<n$. Otherwise if $k=n$, the contribution is always $0$. The term in \cref{eqn:opi-threshold-term} is analogous to the contribution of codewords crossing the $k_{\cv}=\kappa^*$ threshold in \cref{lemma:potential-growth}.

    \paragraph{Summing over codewords that cross the value threshold.} Summing \cref{eqn:opi-threshold-term} over all $v\in V_t$, $A\in\Us_t$ and codewords $\cv\in\Cs$ with $k_{\cv}<n$,
    \begin{align}
        \Psi^{(\tau)}(t+1)-\Phi^{(\tau)}(t)
        \leq\E_{A\in\Us_t}\sum_{v\in V_t}\Pr(\bm{v_t}&=v\mid\bm{A_t}=A)\sum_{\substack{\cv\in\Cs\\k_{\cv}<n}}\frac{n-k_{\cv}}{2(M-t)}\beta_{n-k_{\cv}-1}\big(\tau-\kappa(k_{\cv})\big)\notag\\
        &\cdot\left[\Pdist_{\H\mid v,A}\big(\val_{\H}(\cv;A)\geq\kappa(k_{\cv})-1\big)-2\eta\right]^+.
    \end{align}

    Since
    \begin{equation}
        \frac{n-k_{\cv}}{2}\beta_{n-k_{\cv}-1}(\tau-\kappa(k_{\cv})) \leq \max_{0\leq k<n}\frac{n-k}{2}\,\beta_{n-k-1}\left(\tau-\kappa(k)\right)
    \end{equation}
    for any $\cv\in\Cs$, we apply it to each term in our summation over $\cv$ for a fixed $v,A$:
    \begin{equation}
        \max_{0\leq k<n}\left(\frac{n-k}{2}\right)\beta_{n-k-1}\left(\tau-\kappa(k)\right) \sum_{\substack{\cv\in\Cs\\k_{\cv}<n}} \left[\Pdist_{\H\mid v,A}\big(\val_{\H}(\cv;A)\geq\kappa(k_{\cv})-1\big)-2\eta\right]^+.
    \end{equation}
    
    Now we just need to show the remaining sum over $v$ and $\cv$ is upper bounded by $n/(d\eta)+1$. To do this, we can essentially follow the analogous proof in \cref{lemma:potential-growth}: we apply \cref{prop:min-entropy-deficiency} to partition into good and bad vertices with $\varepsilon\gets 1/|\Cs|$, \cref{lemma:opi-pairwise-agreement} gives a contribution of $n/(d\eta)$ on the good vertices and the trivial upper bound gives a contribution of $\varepsilon|C|=1$ on the bad vertices, completing the proof.
\end{proof}

\begin{theorem}[$\OPI$ lower bound]\label{thm:opi-lower-bound}
    Consider the $\OPIshort$ random-order streaming problem (\cref{problem:OPI-streaming}), where the degree $d=d(n)$ satisfies $d = \omega(\log n)$ and $d = o(n/\log n)$. Set target value $\tau=\lceil n/2+\sqrt{dn}/8\rceil$. There exists a constant $c>0$ such that, for all sufficiently large $n$, any classical streaming algorithm using at most $cn$ bits of memory has success probability at most $2^{-\Omega(d)}$ over the randomness of the oracle and stream ordering.
\end{theorem}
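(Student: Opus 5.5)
The plan is to copy the proof of \cref{thm:yz-lower-bound}, with \cref{lemma:opi-success-bound,lemma:opi-jensens-potential,lemma:opi-potential-growth} playing the roles of \cref{lemma:success-bound,lemma:jensens-potential,lemma:potential-growth}. The code $\Cs=\RS_q(n,d)$ qualifies for \cref{lemma:opi-potential-growth}, because two distinct polynomials of degree less than $d$ agree on at most $d-1$ evaluation points. Its size satisfies $\log|\Cs| = d\log q = d\log(n+1)$. Telescoping and applying \cref{lemma:opi-jensens-potential} give
\[
\Pr(\Bs\ \mathrm{succeeds}) \le 2\eta + \sum_{t=0}^{M-1}\bigl(\Psi^{(\tau)}(t+1)-\Phi^{(\tau)}(t)\bigr).
\]
Each summand is then bounded by \cref{lemma:opi-potential-growth}. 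Summing $1/(M-t)$ over $t$ contributes a factor of at most $1+\ln M = O(\log n)$, since $M = nq = \Theta(n^2)$.

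I will fix the parameters as follows:
\[
s \triangleq \sqrt{dn}/16, \qquad \eta = e^{-s^2/(4n)} = e^{-d/1024}, \qquad \kappa(k)=\lceil k/2+s\rceil+1.
\]
The following side conditions then need to be checked.
\begin{enumerate}
\item $s\ge 3\sqrt n$ holds once $d\ge 2304$, which is true for large $n$ because $d=\omega(\log n)$. The same fact gives $\eta<1/2$.
\item The constraints \cref{eqn:opi-kappa-properties} hold. We have $\kappa(0)\ge 1$, and the increments of $\kappa$ lie in $\{0,1\}$. For large $n$, $\kappa(n)\le n/2+\sqrt{dn}/16+2\le \lceil n/2+\sqrt{dn}/8\rceil=\tau$.
\item The memory condition is $S\le s^2/(8d)-\log|\Cs| = n/2048 - d\log(n+1)$. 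Since $d=o(n/\log n)$, this holds for $S\le cn$ with, say, $c=1/4096$ and $n$ large.
\end{enumerate}

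The main step is bounding the binomial factor $\max_{0\le k<n}\frac{n-k}{2}\beta_{n-k-1}(\tau-\kappa(k))$. Write $m=n-k-1$. The binomial $\mathrm{Binom}(m,1/2)$ has mean $m/2$. Its argument satisfies
\[
\tau-\kappa(k) \ge n/2+\sqrt{dn}/8 - k/2 - s - 2 = \frac{m}{2} + \frac{\sqrt{dn}}{16} - \frac{3}{2},
\]
so the argument exceeds the mean by at least $\delta\approx\sqrt{dn}/16$. If $\tau-\kappa(k)>m$ the term is $0$. Otherwise Hoeffding's inequality (\cref{prop:hoeffding}) applies, and since $m\le n$ it gives
\[
\beta_m(\tau-\kappa(k)) \le e^{-2\delta^2/m} \le e^{-d/128+o(d)}.
\]
The prefactor $(n-k)/2$ is at most $n/2$.

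Combining these gives
\[
\Pr(\Bs\ \mathrm{succeeds}) \le 2e^{-d/1024} + O(\log n)\cdot\Bigl(\frac{n e^{d/1024}}{d}+1\Bigr)\cdot\frac n2\cdot e^{-d/128+o(d)}.
\]
The second term is $\poly(n)\,e^{-(7/1024)d+o(d)}$, and this is $2^{-\Omega(d)}$ because $d=\omega(\log n)$ absorbs the polynomial factors. The only real obstacle is balancing $s$: it must be large enough relative to $\sqrt{dn}$ that $s^2/(8d)=\Omega(n)$ (giving linear memory) and $s\ge 3\sqrt n$. It must also be a constant factor below the surplus $\sqrt{dn}/8$ built into $\tau$, so that the binomial tail $e^{-\Omega(d)}$ beats the factor $1/\eta=e^{s^2/(4n)}$. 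The choice $s=\sqrt{dn}/16$ achieves both.
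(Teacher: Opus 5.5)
Your proposal is correct and follows essentially the same route as the paper: telescope with the convexity lemma, apply the generalized potential-growth lemma with $s$ equal to half the surplus, and bound the binomial factor by Hoeffding. The paper takes $s=\Delta/2$ with $\Delta=\tau-n/2\ge\sqrt{dn}/8$, which matches your $s=\sqrt{dn}/16$ up to rounding; the only other difference is that you keep the sharper deviation $\delta\approx\sqrt{dn}/16$, giving $e^{-d/128+o(1)}$, whereas the paper uses the cruder $\delta\ge\Delta/4$, so its tail $e^{-\Delta^2/(8n)}$ is exactly $\eta^2$. In both versions $d=\omega(\log n)$ absorbs the polynomial factors.
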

\begin{proof}
    Recall $q=n+1$, $M=n(n+1)$. Since two distinct polynomials of degree less than $d$ agree on at most $d-1$ evaluation points, two codewords of $\Cs=\RS_q(n,d)$ agree on at most $d-1$ coordinates. Let $\Bs$ be a branching program of length $M$ and width $2^S$ where $S\leq cn$, for some constant $c>0$. Define $\Delta \triangleq \tau - n/2$, so that $\Delta\geq\sqrt{dn}/8$. 
    
    Similarly to \cref{thm:yz-lower-bound}, we expand the start and end potential difference
    \begin{equation}
        \Phi^{(\tau)}(M) - \Phi^{(\tau)}(0) = \sum_{t=0}^{M-1} \Phi^{(\tau)}(t+1) - \Phi^{(\tau)}(t) \leq \sum_{t=0}^{M-1} \Psi^{(\tau)}(t+1) - \Phi^{(\tau)}(t),
    \end{equation}
    where the last inequality is due to \cref{lemma:opi-jensens-potential}. 
    
    We apply \cref{lemma:opi-potential-growth} with $s\gets \Delta/2$, which sets $\eta=e^{-\Delta^2/(16n)}$ and $\kappa(k)=\lceil(k+\Delta)/2\rceil+1$; from our assumption of $d(n)$, for all sufficiently large $n$, we have $s\geq 3\sqrt n$ and
    \begin{equation}
        s^2/(8d) -\log|\Cs| \geq \Omega(n) - d\log(n+1) \geq cn
    \end{equation}
    for a small enough constant $c$. Therefore all the conditions of \cref{lemma:opi-potential-growth} are satisfied, and we have
    \begin{equation}
        \Psi^{(\tau)}(t+1)-\Phi^{(\tau)}(t)\leq\frac{n/(d\eta)+1}{M-t}\max_{0\leq k<n}\left(\frac{n-k}{2}\right)\beta_{n-k-1}\left(\tau-\kappa(k)\right).
    \end{equation}

    Since $\kappa(n-1) \leq \frac{n + 3 + \Delta}{2} = \frac{n/2 + 3 + \tau}{2} < \tau$ for large enough $n$, $\beta_0(\tau-\kappa(n-1)) = 0$. For all $k\leq n-2$, applying Hoeffding's inequality (\cref{prop:hoeffding}) with $\delta \gets \tau - \kappa(k) - (n-k-1)/2 \geq \Delta/4$ and $m \gets n-k-1$ gives, for large enough $n$,
    \begin{equation}
        \beta_{n-k-1}(\tau-\kappa(k))\leq \Pr(\mathrm{Binom}(n-k-1, 1/2)\geq \tau-\kappa(k))\leq e^{-\Delta^2/(8n)}.
    \end{equation}
    This implies
    \begin{equation}\label{eqn:potential-growth-bound}
         \Psi^{(\tau)}(t+1)-\Phi^{(\tau)}(t)\leq \frac{n/(d\eta)+1}{M-t}\max_{0\leq k<n}\frac{n-k}{2}\,\beta_{n-k-1}\left(\tau-\kappa(k)\right) \leq \frac{n/(d\eta)+1}{M-t}\cdot\frac{n}{2} e^{-\Delta^2/(8n)}.
    \end{equation}
    Since $\kappa(n) \leq \frac{n + 4 + \Delta}{2} \leq \tau$ for large enough $n$, we can also use \cref{lemma:opi-success-bound}. This gives $\Phi^{(\tau)}(0)=0$ and $\Pr(\Bs\ \mathrm{succeeds})\leq 2\eta+\Phi^{(\tau)}(M)$, so
    \begin{equation}\label{eqn:penultimate-success}
        \Pr(\Bs\ \mathrm{succeeds}) \leq 2\eta + \sum_{t=0}^{M-1} \Psi^{(\tau)}(t+1) - \Phi^{(\tau)}(t).
    \end{equation}
    Plugging \cref{eqn:potential-growth-bound} into \cref{eqn:penultimate-success} and using $n/(d\eta)+1\leq 2n/(d\eta)=(2n/d)e^{\Delta^2/(16n)}$,
    \begin{equation}
        \Pr(\Bs\ \mathrm{succeeds}) \leq 2\eta + \sum_{t=0}^{M-1} \Psi^{(\tau)}(t+1) - \Phi^{(\tau)}(t) \leq 2\eta + \sum_{t=0}^{M-1}\frac{2ne^{\Delta^2/(16n)}}{d(M-t)} \cdot \frac{n}{2} e^{-\Delta^2/(8n)}
    \end{equation}
    Since $\eta = e^{-\Delta^2/(16n)}$ and $\sum_{t=1}^M 1/t\leq 1+\ln M$, we get
    \begin{equation}
        \Pr(\Bs\ \mathrm{succeeds}) \leq 2\eta + \frac{\eta n^2}{d} \sum_{t=1}^M \frac{1}{t} \leq \eta\left(2+\frac{n^2(1+\ln M)}{d}\right).
    \end{equation}
    Since $\Delta\geq\sqrt{dn}/8$, we have $\eta\leq e^{-d/1024}$. Additionally, $M=n(n+1)$ and $d\geq1$ implies $2+n^2(1+\ln M)/d\leq n^3$ for all $n\geq 10$. Therefore
    \begin{equation}\label{eqn:opi-final-absorption}
        \Pr(\Bs\ \mathrm{succeeds}) \leq n^3 e^{-d/1024} = \exp\left(-\frac{d}{1024}+3\ln n\right) \leq e^{-d/2048} = 2^{-\Omega(d)},
    \end{equation}
    where the last inequality holds because $d=\omega(\log n)$ implies $3\ln n\leq d/2048$ for all sufficiently large $n$. This completes the proof.
\end{proof}

\DeclareUrlCommand{\Doi}{\urlstyle{sf}}
\renewcommand{\path}[1]{\small\Doi{#1}}
\renewcommand{\url}[1]{\href{#1}{\small\Doi{#1}}}
\bibliographystyle{alphaurl}
\bibliography{refs}

@Article{CFKMP21,
  author  = {Chattopadhyay, Arkadev and Filmus, Yuval and Koroth, Sajin and Meir, Or and Pitassi, Toniann},
  journal = {SIAM Journal on Computing},
  title   = {{{Query-to-Communication Lifting Using Low-Discrepancy Gadgets}}},
  year    = {2021},
  number  = {1},
  pages   = {171--210},
  volume  = {50},
  doi     = {10.1137/19M1310153},
}

@Article{GPW20,
  author   = {G\"{o}\"{o}s, Mika and Pitassi, Toniann and Watson, Thomas},
  journal  = {SIAM Journal on Computing},
  title    = {Query-to-communication lifting for {$\sf{BPP}$}},
  year     = {2020},
  number   = {4},
  pages    = {441--461},
  volume   = {49},
  doi      = {10.1137/17M115339X},
}

@InProceedings{KR11,
  author    = {Klartag, Bo'az and Regev, Oded},
  booktitle = {Proceedings of the 43rd Symposium on Theory of Computing (STOC)},
  title     = {Quantum one-way communication can be exponentially stronger than classical communication},
  year      = {2011},
  pages     = {31--40},
  publisher = {ACM, New York},
  doi       = {10.1145/1993636.1993642},
}

@InProceedings{Liu23,
  author    = {Qipeng Liu},
  booktitle = {Proceedings of the 42nd Conference on the Theory and Applications of Cryptographic Techniques (EUROCRYPT)},
  title     = {{{Non-uniformity and Quantum Advice in the Quantum Random Oracle Model}}},
  year      = {2023},
  pages     = {117--143},
  publisher = {Springer},
  series    = {Lecture Notes in Computer Science},
  volume    = {14004},
  doi       = {10.1007/978-3-031-30545-0_5},
}

@article{raz2018fast,
  title={Fast learning requires good memory: A time-space lower bound for parity learning},
  author={Raz, Ran},
  journal={Journal of the ACM (JACM)},
  volume={66},
  number={1},
  pages={1--18},
  year={2018},
  publisher={ACM New York, NY, USA}
}

@Article{Yamakawa2022,
  author    = {Yamakawa, Takashi and Zhandry, Mark},
  journal   = {Journal of the ACM},
  title     = {Verifiable quantum advantage without structure},
  year      = {2024},
  number    = {3},
  pages     = {1--50},
  volume    = {71},
  doi       = {10.1145/3658665},
  publisher = {ACM},
}

@inproceedings {GargRazTal18,
    AUTHOR = {Garg, Sumegha and Raz, Ran and Tal, Avishay},
     TITLE = {Extractor-based time-space lower bounds for learning},
 BOOKTITLE = {S{TOC}'18---{P}roceedings of the 50th {A}nnual {ACM} {SIGACT}
              {S}ymposium on {T}heory of {C}omputing},
     PAGES = {990--1002},
 PUBLISHER = {ACM, New York},
      YEAR = {2018},
      ISBN = {978-1-4503-5559-9},
   MRCLASS = {68Q32},
  MRNUMBER = {3826311},
}

@inproceedings{goos2025quantum,
  title={Quantum {C}ommunication {A}dvantage in {TFNP}},
  author={G{\"o}{\"o}s, Mika and Gur, Tom and Jain, Siddhartha and Li, Jiawei},
  booktitle={Proceedings of the 57th Annual ACM Symposium on Theory of Computing},
  pages={1465--1475},
  year={2025}
}

@article{jordan2025dqi,
  title={Optimization by decoded quantum interferometry},
  author={Jordan, Stephen P and Shutty, Noah and Wootters, Mary and Zalcman, Adam and Schmidhuber, Alexander and King, Robbie and Isakov, Sergei V and Khattar, Tanuj and Babbush, Ryan},
  journal={Nature},
  volume={646},
  number={8086},
  pages={831--836},
  year={2025},
  publisher={Nature Publishing Group UK London},
  doi = {10.1038/s41586-025-09527-5}
}

@misc{khattar2025optimized,
      title={{{Verifiable Quantum Advantage via Optimized DQI Circuits}}}, 
      author={Tanuj Khattar and Noah Shutty and Craig Gidney and Adam Zalcman and Noureldin Yosri and Dmitri Maslov and Ryan Babbush and Stephen P. Jordan},
      year={2025},
      eprint={2510.10967},
      archivePrefix={arXiv},
      primaryClass={quant-ph}
}

@misc{zhao2026exponential,
  title = {Exponential quantum advantage in processing massive classical data},
  author = {Haimeng Zhao and Alexander Zlokapa and Hartmut Neven and Ryan Babbush and John Preskill and Jarrod R. McClean and Hsin-Yuan Huang},
  year = {2026},
  eprint = {2604.07639},
  archiveprefix = {arXiv},
  primaryclass = {quant-ph},
}

@inproceedings{kallaugher2024exponential,
  title={Exponential quantum space advantage for approximating maximum directed cut in the streaming model},
  author={Kallaugher, John and Parekh, Ojas and Voronova, Nadezhda},
  booktitle={Proceedings of the 56th Annual ACM Symposium on Theory of Computing},
  pages={1805--1815},
  year={2024}
}

@misc{wang2026exponential,
      title={{{Exponential Quantum Space Advantage for Approximating Max-$k$SAT in the Streaming Setting}}}, 
      author={Haoyu Wang and Guangxu Yang},
      year={2026},
      eprint={2606.05366},
      archivePrefix={arXiv},
      primaryClass={cs.DS}
}

@InProceedings{moshkovitz17a,
  title = 	 {{{Mixing Implies Lower Bounds for Space Bounded Learning}}},
  author = 	 {Moshkovitz, Dana and Moshkovitz, Michal},
  booktitle = 	 {Proceedings of the 2017 Conference on Learning Theory},
  pages = 	 {1516--1566},
  year = 	 {2017},
  editor = 	 {Kale, Satyen and Shamir, Ohad},
  volume = 	 {65},
  series = 	 {Proceedings of Machine Learning Research},
  month = 	 {07--10 Jul},
  publisher =    {PMLR}
}

@article{dodis2008fuzzy,
  title={Fuzzy extractors: How to generate strong keys from biometrics and other noisy data},
  author={Dodis, Yevgeniy and Ostrovsky, Rafail and Reyzin, Leonid and Smith, Adam},
  journal={SIAM Journal on Computing},
  volume={38},
  number={1},
  pages={97--139},
  year={2008},
  publisher={SIAM}
}

@article{guha2009stream,
  title={{{Stream order and order statistics: Quantile estimation in random-order streams}}},
  author={Guha, Sudipto and McGregor, Andrew},
  journal={SIAM Journal on Computing},
  volume={38},
  number={5},
  pages={2044--2059},
  year={2009},
  publisher={SIAM}
}

@article{CCM2016robust,
 author = {Chakrabarti, Amit and Cormode, Graham and McGregor, Andrew},
 title = {{{Robust Lower Bounds for Communication and Stream Computation}}},
 year = {2016},
 pages = {1--35},
 doi = {10.4086/toc.2016.v012a010},
 publisher = {Theory of Computing},
 journal = {Theory of Computing},
 volume = {12},
 number = {10}
}

@inproceedings{saxena2023streaming,
  title={Streaming complexity of {CSP}s with randomly ordered constraints},
  author={Saxena, Raghuvansh R and Singer, Noah and Sudan, Madhu and Velusamy, Santhoshini},
  booktitle={Proceedings of the 2023 Annual ACM-SIAM Symposium on Discrete Algorithms (SODA)},
  pages={4083--4103},
  year={2023},
  organization={SIAM}
}

@article{Shor_1997,
   title={{{Polynomial-Time Algorithms for Prime Factorization and Discrete Logarithms on a Quantum Computer}}},
   volume={26},
   ISSN={1095-7111},
   DOI={10.1137/s0097539795293172},
   number={5},
   journal={SIAM Journal on Computing},
   publisher={Society for Industrial & Applied Mathematics (SIAM)},
   author={Shor, Peter W.},
   year={1997},
   month={10}, pages={1484–1509} }

@article{HHL_2009,
   title={{{Quantum Algorithm for Linear Systems of Equations}}},
   volume={103},
   ISSN={1079-7114},
   DOI={10.1103/physrevlett.103.150502},
   number={15},
   journal={Physical Review Letters},
   publisher={American Physical Society (APS)},
   author={Harrow, Aram W. and Hassidim, Avinatan and Lloyd, Seth},
   year={2009},
   month={10}}

@inproceedings{Raz99exponential,
author = {Raz, Ran},
title = {Exponential separation of quantum and classical communication complexity},
year = {1999},
isbn = {1581130678},
publisher = {Association for Computing Machinery},
address = {New York, NY, USA},
doi = {10.1145/301250.301343},
booktitle = {Proceedings of the Thirty-First Annual ACM Symposium on Theory of Computing},
pages = {358–367},
numpages = {10},
location = {Atlanta, Georgia, USA},
series = {STOC '99}
}

@article{gavinsky2009exponential,
author = {Gavinsky, Dmitry and Kempe, Julia and Kerenidis, Iordanis and Raz, Ran and de Wolf, Ronald},
title = {{{Exponential Separation for One-Way Quantum Communication Complexity, with Applications to Cryptography}}},
journal = {SIAM Journal on Computing},
volume = {38},
number = {5},
pages = {1695-1708},
year = {2009},
doi = {10.1137/070706550}
}

@misc{kretschmer2026unconditional,
      title={Demonstrating an unconditional separation between quantum and classical information resources}, 
      author={William Kretschmer and Sabee Grewal and Matthew DeCross and Justin A. Gerber and Kevin Gilmore and Dan Gresh and Nicholas Hunter-Jones and Karl Mayer and Brian Neyenhuis and David Hayes and Scott Aaronson},
      year={2026},
      eprint={2509.07255},
      archivePrefix={arXiv},
      primaryClass={quant-ph}
}

@inproceedings{CJP2008selection,
author = {Chakrabarti, Amit and Jayram, T. S. and P\u{a}tra\c{s}cu, Mihai},
title = {Tight lower bounds for selection in randomly ordered streams},
year = {2008},
publisher = {Society for Industrial and Applied Mathematics},
address = {USA},
booktitle = {Proceedings of the Nineteenth Annual ACM-SIAM Symposium on Discrete Algorithms},
pages = {720–729},
numpages = {10},
location = {San Francisco, California},
series = {SODA '08}
}

@misc{bostanci2026advice,
      title={Separating Quantum and Classical Advice with Good Codes}, 
      author={John Bostanci and Andrew Huang and Vinod Vaikuntanathan},
      year={2026},
      eprint={2602.09385},
      archivePrefix={arXiv},
      primaryClass={quant-ph}
}

@incollection {Raz17,
    AUTHOR = {Raz, Ran},
     TITLE = {A time-space lower bound for a large class of learning
              problems},
 BOOKTITLE = {58th {A}nnual {IEEE} {S}ymposium on {F}oundations of
              {C}omputer {S}cience---{FOCS} 2017},
     PAGES = {732--742},
 PUBLISHER = {IEEE Computer Soc., Los Alamitos, CA},
      YEAR = {2017},
      ISBN = {978-1-5386-3464-6},
   MRCLASS = {68Q17 (68Q87 68T05)},
  MRNUMBER = {3734276},
       DOI = {10.1109/FOCS.2017.73},
       URL = {https://doi.org/10.1109/FOCS.2017.73},
}

@incollection {Moshkovitz18,
    AUTHOR = {Moshkovitz, Dana and Moshkovitz, Michal},
     TITLE = {Entropy samplers and strong generic lower bounds for space
              bounded learning},
 BOOKTITLE = {9th {I}nnovations in {T}heoretical {C}omputer {S}cience},
    SERIES = {LIPIcs. Leibniz Int. Proc. Inform.},
    VOLUME = {94},
     PAGES = {Art. No. 28, 20},
 PUBLISHER = {Schloss Dagstuhl. Leibniz-Zent. Inform., Wadern},
      YEAR = {2018},
      ISBN = {978-3-95977-060-6},
   MRCLASS = {68Q32},
  MRNUMBER = {3761764},
}

@article{watrous1999space,
  title={Space-bounded quantum complexity},
  author={Watrous, John},
  journal={Journal of Computer and System Sciences},
  volume={59},
  number={2},
  pages={281--326},
  year={1999},
  publisher={Elsevier}
}

@inproceedings{le2006exponential,
  title={Exponential separation of quantum and classical online space complexity},
  author={Le Gall, Fran{\c{c}}ois},
  booktitle={Proceedings of the Eighteenth Annual ACM Symposium on Parallelism in Algorithms and Architectures},
  pages={67--73},
  year={2006}
}

@inproceedings{ta2013inverting,
author = {Ta-Shma, Amnon},
title = {Inverting well conditioned matrices in quantum logspace},
year = {2013},
isbn = {9781450320290},
publisher = {Association for Computing Machinery},
address = {New York, NY, USA},
url = {https://doi.org/10.1145/2488608.2488720},
doi = {10.1145/2488608.2488720},
booktitle = {Proceedings of the Forty-Fifth Annual ACM Symposium on Theory of Computing},
pages = {881–890},
numpages = {10},
location = {Palo Alto, California, USA},
series = {STOC '13}
}

@inproceedings{kallaugher2022quantum,
  title={A quantum advantage for a natural streaming problem},
  author={Kallaugher, John},
  booktitle={2021 IEEE 62nd Annual Symposium on Foundations of Computer Science (FOCS)},
  pages={897--908},
  year={2022},
  organization={IEEE}
}

@inproceedings{KKS14,
author = {Michael Kapralov and Sanjeev Khanna and Madhu Sudan},
title = {Approximating matching size from random streams},
booktitle = {Proceedings of the Twenty-Fifth Annual ACM-SIAM Symposium on Discrete Algorithms (SODA)},
pages = {734-751},
year={2014},
publisher = {Society for Industrial and Applied Mathematics},
doi = {10.1137/1.9781611973402.55}
}

@inproceedings{KKS15,
author = {Kapralov, Michael and Khanna, Sanjeev and Sudan, Madhu},
title = {{{Streaming lower bounds for approximating MAX-CUT}}},
year = {2015},
publisher = {Society for Industrial and Applied Mathematics},
booktitle = {Proceedings of the Twenty-Sixth Annual ACM-SIAM Symposium on Discrete Algorithms},
pages = {1263–1282},
numpages = {20},
location = {San Diego, California},
series = {SODA '15}
}

@InProceedings{assadi2021random,
  author =	{Assadi, Sepehr and Behnezhad, Soheil},
  title =	{{Beating Two-Thirds For Random-Order Streaming Matching}},
  booktitle =	{48th International Colloquium on Automata, Languages, and Programming (ICALP 2021)},
  pages =	{19:1--19:13},
  series =	{Leibniz International Proceedings in Informatics (LIPIcs)},
  ISBN =	{978-3-95977-195-5},
  ISSN =	{1868-8969},
  year =	{2021},
  volume =	{198},
  editor =	{Bansal, Nikhil and Merelli, Emanuela and Worrell, James},
  publisher =	{Schloss Dagstuhl -- Leibniz-Zentrum f{\"u}r Informatik},
  address =	{Dagstuhl, Germany},
  doi =		{10.4230/LIPIcs.ICALP.2021.19},
}

@article{munro1980selection,
  title={Selection and sorting with limited storage},
  author={Munro, J Ian and Paterson, Mike S},
  journal={Theoretical Computer Science},
  volume={12},
  number={3},
  pages={315--323},
  year={1980},
  publisher={Elsevier}
}

@InProceedings{hashemi2024weighted,
  author =	{Hashemi, Diba and Wrzos-Kaminska, Weronika},
  title =	{{Weighted Matching in the Random-Order Streaming and Robust Communication Models}},
  booktitle =	{Approximation, Randomization, and Combinatorial Optimization. Algorithms and Techniques (APPROX/RANDOM 2024)},
  pages =	{16:1--16:26},
  series =	{Leibniz International Proceedings in Informatics (LIPIcs)},
  ISBN =	{978-3-95977-348-5},
  ISSN =	{1868-8969},
  year =	{2024},
  volume =	{317},
  editor =	{Kumar, Amit and Ron-Zewi, Noga},
  publisher =	{Schloss Dagstuhl -- Leibniz-Zentrum f{\"u}r Informatik},
  address =	{Dagstuhl, Germany},
  doi =		{10.4230/LIPIcs.APPROX/RANDOM.2024.16}
}

@INPROCEEDINGS{chiplunkar2022factorial,
  author={Chiplunkar, Ashish and Kallaugher, John and Kapralov, Michael and Price, Eric},
  booktitle={2022 IEEE 63rd Annual Symposium on Foundations of Computer Science (FOCS)}, 
  title={{{Factorial Lower Bounds for (Almost) Random Order Streams}}}, 
  year={2022},
  pages={486-497},
  doi={10.1109/FOCS54457.2022.00053}}

@InProceedings{alexandru2026unit,
  author =	{Alexandru, Cezar-Mihail and Diddapur, Adithya and Halld\'{o}rsson, Magn\'{u}s M. and Konrad, Christian and Naidu, Kheeran K.},
  title =	{{Unit Interval Selection in Random Order Streams}},
  booktitle =	{43rd International Symposium on Theoretical Aspects of Computer Science (STACS 2026)},
  pages =	{4:1--4:20},
  series =	{Leibniz International Proceedings in Informatics (LIPIcs)},
  ISBN =	{978-3-95977-412-3},
  ISSN =	{1868-8969},
  year =	{2026},
  volume =	{364},
  publisher =	{Schloss Dagstuhl -- Leibniz-Zentrum f{\"u}r Informatik},
  address =	{Dagstuhl, Germany},
  doi =		{10.4230/LIPIcs.STACS.2026.4},
}

@article{Yossef04,
author = {Bar-Yossef, Ziv and Jayram, T. S. and Kerenidis, Iordanis},
title = {{{Exponential Separation of Quantum and Classical One-Way Communication Complexity}}},
journal = {SIAM Journal on Computing},
volume = {38},
number = {1},
pages = {366-384},
year = {2008},
doi = {10.1137/060651835}}

@article{montanaro2016frequency,
  author    = {Ashley Montanaro},
  title     = {The quantum complexity of approximating the frequency moments},
  journal   = {Quantum Information \& Computation},
  volume    = {16},
  number    = {13\&14},
  pages     = {1169--1190},
  year      = {2016}
}

@misc{feng2026entropy,
      title={Exponential quantum space advantage for {Shannon} entropy estimation in data streams}, 
      author={Weijun Feng and Yongzhen Xu and Lvzhou Li and Gongde Guo and Song Lin},
      year={2026},
      eprint={2604.18014},
      archivePrefix={arXiv},
      primaryClass={quant-ph}
}

@misc{WuYang2026,
      title={Exponential quantum advantages for decoded quantum interferometry in the streaming setting}, 
      author={Kewen Wu and Guangxu Yang},
      year={2026},
      eprint={2610.01902},
      archivePrefix={arXiv},
      primaryClass={quant-ph}
}

@misc{voronova2026random,
      title={{{Random Order in Quantum Streaming: Replenishment and Robust Lower Bounds}}}, 
      author={Nadezhda Voronova},
      year={2026},
      eprint={2610.06727},
      archivePrefix={arXiv},
      primaryClass={quant-ph}
}

\newpage
\appendix
\crefalias{section}{appendix}
\section{Quantum streaming algorithms for Code Intersection and Optimal Polynomial Intersection}
\label{sec:algorithms}
Both algorithms use the stream only to apply commuting input-controlled gates to a small
state. After the pass, they perform post-processing which is independent of the input. This gives us our single-pass, low space quantum algorithm. We specify the details below.

\begin{theorem}[Yamakawa--Zhandry streaming algorithm]\label{thm:YZ-algorithm}
    Let $\Cs_n$ be the code family specified in \cref{prop:YZ-list-recovery}. There exists a quantum streaming algorithm solving \cref{problem:YZ-streaming} using $\poly(n)$ memory with success probability $1-o(1)$ over the randomness of the oracle, for all stream orderings.
\end{theorem}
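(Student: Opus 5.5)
We implement the Yamakawa--Zhandry algorithm by splitting it into an input-dependent phase that uses only commuting phase gates, and an input-independent post-processing phase. Recall the YZ algorithm. First prepare, for each coordinate $i\in[n]$, a register in the state $\sum_{x\in\Sigma} \ket{x}$ with phase $(-1)^{\H_i(x)}$. Equivalently, prepare $|\Sigma|^{-1/2}\sum_{x}(-1)^{\H_i(x)}\ket{x}$. Applying the quantum Fourier transform over $\F_q^m$ to each register yields the superposition the YZ algorithm uses. Concretely, on each coordinate we project onto the part orthogonal to the uniform superposition. This keeps the component whose Fourier weight corresponds to $\Ds$-type errors, i.e.\ $\Vec 0$ with probability about $1/2$ and uniform otherwise, after rewriting $(-1)^{\H_i(x)} = 1-2\H_i(x)$. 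We then add a uniform superposition over dual codewords $\Cs_n^\bot$, decode with $\decode$ from \cref{prop:YZ-list-recovery} to uncompute the error register, apply the inverse Fourier transform, and measure. Following the analysis of \cite{Yamakawa2022}, this outputs $\cv\in\Cs_n$ with $\H_i(c_i)=1$ for all $i$ with probability $1-o(1)$ over $\H$. Only the first step touches the oracle.

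\textbf{Streaming implementation.} Initialize $n$ registers of $\log|\Sigma| = \Theta(n\log n)$ qubits each in uniform superposition, for $O(n^2\log n)=\poly(n)$ qubits in total. On receiving stream element $(i,x,b)$, apply the diagonal gate $\ket{y}\mapsto (-1)^{b\cdot\mathds{1}[y=x]}\ket{y}$ to register $i$. This is implementable with $\poly(n)$ ancilla-free controlled operations, since $x$ and $b$ are classical and the gate is a reflection about a computational basis state. All these gates are diagonal in the computational basis, so they commute. After the full pass, regardless of the ordering $\bm{\pi}$, the state is exactly $\bigotimes_i |\Sigma|^{-1/2}\sum_y (-1)^{\H_i(y)}\ket{y}$. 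Memory is only the registers plus the $O(\log M)=O(n\log n)$ classical bits needed to read one stream element.

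\textbf{Post-processing and correctness.} After the pass we run the remaining YZ steps: the Fourier transforms, the superposition over $\Cs_n^\bot$ (preparable in $\poly(n)$ space, since $\Cs_n^\bot$ is a linear code with an efficiently computable generator matrix), and the reversible version of the deterministic polynomial-time $\decode$. All of these are input-independent and use $\poly(n)$ qubits. Correctness then follows verbatim from \cite{Yamakawa2022}: the final state coincides with the state produced by their query algorithm, so their success bound applies pointwise for every ordering.

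\textbf{Main obstacle.} The delicate step is checking that the post-processing fits in $\poly(n)$ space. In particular, the decoding failure probability $2^{-\Omega(n)}$ from item 2 of \cref{prop:YZ-list-recovery} must translate into $o(1)$ trace-distance error after uncomputation. This is exactly the YZ argument, which requires handling the non-$\Vec{0}$ branch normalization with the $(1-2\H_i)$ rewriting. I would cite their lemma rather than redo it.
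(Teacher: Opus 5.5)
There is a genuine gap at the very first step. The phase state is not the state the Yamakawa--Zhandry post-processing needs, and your claim that its Fourier transform carries $\Ds$-type weight (about $1/2$ on $\Vec 0$) is false. Write $Q=|\Sigma|$, $F_i=\{x:\H_i(x)=1\}$, $\ket{u}=Q^{-1/2}\sum_x\ket{x}$ (the zero Fourier frequency) and $\ket{\psi_i}=Q^{-1/2}\sum_x(-1)^{\H_i(x)}\ket{x}$. Then $\langle u|\psi_i\rangle=(Q-2|F_i|)/Q=O(Q^{-1/2})$ for random $\H_i$. So after the Fourier transform essentially all the weight of $\ket{\psi_i}$ lies on nonzero frequencies, and projecting out $\ket{u}$ removes the rest. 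The resulting error vector is nonzero in all $n$ coordinates. That is far outside the regime of item 2 of \cref{prop:YZ-list-recovery}, so decoding fails. Even if the projection onto $\Cs_n$ could somehow be carried out, the output would be proportional to $\sum_{\cv\in\Cs_n}\prod_i(-1)^{\H_i(c_i)}\ket{\cv}$, whose amplitudes all have equal magnitude. Measuring it returns a uniformly random codeword, which is valid with probability about $2^{-n}$ on average over $\H$. So the step you flagged as the main obstacle (fitting the post-processing in $\poly(n)$ space) is not where the difficulty lies.

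Your own rewriting $(-1)^{\H_i}=1-2\H_i$ shows what is missing. The algorithm needs amplitudes proportional to the indicator $\H_i(x)$, i.e.\ $\ket{\tau_i}=|F_i|^{-1/2}\sum_{x\in F_i}\ket{x}\propto\ket{u}-\ket{\psi_i}$. The zero-frequency component must therefore be coherently \emph{added back} with weight about $1/2$, not projected away.

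The paper obtains $\ket{\tau_i}$ differently. On stream item $(i,x,b)$ it XORs $b$ into an ancilla qubit, controlled on the data register equalling $x$. These gates also commute, so order-independence is kept, and each copy ends as $Q^{-1/2}\sum_x\ket{x}\ket{\H_i(x)}$. Measuring the ancilla and keeping a copy with outcome $1$ gives exactly $\ket{\tau_i}$. Since $|F_i|/Q\ge 1/3$ for all $i$ except with probability $ne^{-\Omega(Q)}$, holding $C\log n$ copies per coordinate makes all $n$ post-selections succeed with probability $1-n(2/3)^{C\log n}=1-o(1)$. Only then is the YZ post-processing invoked, and this is the source of the $1-o(1)$ bound.

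You could instead rescue the phase-oracle route with a DQI-style control qubit per coordinate, as the paper does for OPI. You would phase-kick only one branch of a superposition, apply the Fourier transform, and uncompute the control from whether the frequency register is $\Vec 0$. That ingredient is absent from your proposal.
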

\begin{proof}
Let $Q=|\Sigma|$ and $F_i=\{a\in\Sigma:H_i(a)=1\},
  \ket{\tau_i}
  =\frac1{\sqrt{|F_i|}}\sum_{a\in F_i}\ket a$.
The Yamakawa--Zhandry algorithm succeeds with probability
$1-\negl(n)$ from
\[
  \ket{\tau}=\bigotimes_{i=1}^n\ket{\tau_i},
\]
using no further oracle access~\cite{Yamakawa2022,Liu23}. It therefore suffices to prepare
$\ket{\tau}$ in one pass.

Fix $k=C\log n$ for a sufficiently large constant $C$. For each
$i\in[n]$, prepare $k$ copies of
\[
  \frac1{\sqrt Q}\sum_{a\in\Sigma}\ket a\ket 0.
\]
When the stream item $(i,a,b)$ arrives, apply $\ket a\ket z\longmapsto\ket a\ket{z\oplus b}$
to every copy for coordinate $i$, controlled on its first register
being equal to $a$. Since every pair $(i,a)$ appears exactly once,
after the pass every copy for coordinate $i$ is
\[
  \frac1{\sqrt Q}\sum_{a\in\Sigma}\ket a\ket{H_i(a)}.
\]

Measure the last qubit of every copy. Conditional on outcome $1$,
the first register is exactly $\ket{\tau_i}$. Let
$p_i=|F_i|/Q$. Since each value \(H_i(a)\) is an independent fair bit, the number of \(a\in\Sigma\) with \(H_i(a)=1\) is concentrated around \(Q/2\).
Hence, except with probability $ne^{-\Omega(Q)}$, every $p_i\ge1/3$.
Conditioned on this event, the probability that all $k$ copies fail
for some coordinate is at most
\begin{equation}
    \sum_{i=1}^n (1-p_i)^k\leq n(2/3)^k=o(1).
\end{equation}
Thus, with probability $1-o(1)$ we obtain one copy of $\ket{\tau_i}$
for every $i$, and hence $\ket{\tau}$ exactly. We then run the
Yamakawa--Zhandry postprocessing.

The algorithm stores $O(nk\log Q)=O(n\log n\log Q)=\poly(n)$
qubits. It stores no oracle entry and works for every stream order.
\end{proof}

\begin{theorem}[DQI streaming algorithm]\label{thm:DQI-algorithm}
    Suppose $d=\omega(1)$, $d=o(n)$, and set
\[
    \tau\gets\left\lceil \frac n2+\frac18\sqrt{dn}\right\rceil .
\]
There exists a one-pass quantum streaming algorithm for
\cref{problem:OPI-streaming} using $O(d\log n)$ memory with success probability $1-o(1)$ over the randomness of the oracle, for all stream orderings.
\end{theorem}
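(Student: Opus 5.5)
Following the template of \cref{thm:YZ-algorithm}, I would split DQI into an input-dependent phase made only of commuting, oracle-controlled phase gates, which can be applied as the stream arrives, and an input-independent post-processing phase. I would use the binary/parity form of OPI from \cite{jordan2025dqi,khattar2025optimized}. Set $g_i(x)\triangleq(-1)^{1-\H_i(x)}$, so that $\OPIval(f)=\tfrac n2+\tfrac12\sum_i g_i(f(\alpha_i))$, and write $\hat g_i(y)=\sum_x g_i(x)(-1)^{\Tr(xy)}$.

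The first step is to fix the DQI state. For degree parameter $\ell$, the state is $\sum_{f\in\Ps_d}P(\sum_i g_i(f(\alpha_i)))\ket f$ for a degree-$\ell$ polynomial $P$. By \cite{jordan2025dqi}, it is produced from
\[
  \sum_{k\le\ell} w_k \sum_{|\vv|=k}\ \sum_{\yv:\,\mathrm{supp}(\yv)=\vv}\ \prod_{i\in\vv}\frac{\hat g_i(y_i)}{\|\hat g_i\|}\ \ket{\yv}\ket{B^\top\yv},
\]
where $B$ is the RS generator, $B^\top\yv$ is the syndrome in the dual code (itself RS of dimension $n-d$), and we then uncompute $\yv$ by decoding from the syndrome and apply a QFT. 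The semicircle-law bound gives expected value $\tfrac n2+\sqrt{\ell(n-\ell)}\cdot\tfrac12\cdot(\text{const})$, up to the $\sqrt{\ell/n}$-style expression that depends on $\|\hat g_i\|$. For random $F_i$ this is $\ge \tfrac n2+\tfrac18\sqrt{dn}$ with $\ell\approx d/2$, within the unique decoding radius $(d+1)/2$ of the dual. A concentration step, Hoeffding over the i.i.d.\ $F_i$ applied to the quantities entering the semicircle law, then gives this value for most oracles, and amplification across the distribution of outcomes gives success $1-o(1)$. I would take the precise value bound as known from \cite{jordan2025dqi} and only check that the constant $1/8$ fits.

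The second step is space. Storing $\yv$ explicitly costs $n\log q$ qubits, which is too many. I would therefore use the \cite{khattar2025optimized} representation: store only the sparse error $\yv$ as a list of $\ell$ (position, value) pairs, costing $O(\ell\log n)=O(d\log n)$ qubits, plus the syndrome of $d\log q$ qubits. The Dicke-type superposition over supports with weights $w_k$ can be prepared in this sparse encoding, and the syndrome $B^\top\yv=\sum_{(i,y_i)}y_iB_i$ is accumulated in place. The QFT over $\F_q^d$ at the end returns a polynomial $f$ in $O(d\log n)$ qubits.

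The streaming step is the crux. The only data-dependent factor is $\prod_{i}\hat g_i(y_i)$, and $\hat g_i(y)$ is a sum over all $x\in\F_q$. To avoid storing it, I would reorder DQI as in the YZ case. Prepare each list entry as $\ket{i}\otimes\frac1{\sqrt q}\sum_x\ket{x}$, and when a stream element $(i',x',b)$ arrives, apply the phase $(-1)^{1-b}$ controlled on an entry matching $(i',x')$. These gates commute and each address appears once, so after the pass every entry $i$ carries $\sum_x g_i(x)\ket x$. A QFT on the $x$ register then yields $\sum_y\hat g_i(y)\ket y$ in sparse form. Because $\hat g_i(0)=\sum_x g_i(x)$ is not zero in general, I expect to handle it by postselecting or reweighting onto $y\ne0$; this success probability is constant per entry for random $F_i$, and is fixed by $O(\log n)$-fold repetition or absorbed in the analysis, mirroring the $k$-copies trick in \cref{thm:YZ-algorithm}.

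The main obstacle is the $y\neq0$ and normalization issue. The distinct-support requirement across the $\ell$ entries also needs care: I would sort the list and uncompute collisions coherently, or accept the $O(\ell^2/n)=o(1)$ collision error since $d=o(n)$. After the pass, everything is input-independent: compute the syndrome, uncompute the sparse list with Berlekamp--Welch on $\ell<(d+1)/2$ errors using $O(d\log n)$ workspace, and apply the QFT. Total memory is $O(d\log n)$.
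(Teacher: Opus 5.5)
Your architecture matches the paper's: a sorted sparse list of error locations, uniform field superpositions that receive commuting phases $2b-1$ during the pass, a per-entry Fourier transform, removal of the zero frequency, syndrome computation, coherent erasure of the error, and a final Fourier transform. \textbf{The gap is in the success analysis.}

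The semicircle law of \cite{jordan2025dqi} controls only the \emph{expected} score, and it is derived for mean-zero $g_i$, which your $\pm1$ signs are not. Scores lie in $[0,n]$, and the mean exceeds $\tau$ by only $\Theta(\sqrt{dn})$. So an expectation bound certifies merely $\Pr[\OPIval(f)\ge\tau]=\Omega(\sqrt{d/n})$, which tends to $0$. Hoeffding over the $F_i$ only makes that mean typical over $\H$; it says nothing about the spread of the output distribution for fixed $\H$. ``Amplification across the distribution of outcomes'' is also unavailable. After the single pass, an $O(d\log n)$-memory algorithm cannot evaluate $\OPIval$ of a candidate, so it cannot choose among repeated samples. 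A symptom is that your plan never uses $d=\omega(1)$, which is exactly what an $e^{-\Omega(d)}$ tail requires.

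The paper obtains concentration by replacing the semicircle-optimal weights with truncated product weights $c_I=\eta^{|I|/2}(1-\eta)^{(n-|I|)/2}/\sqrt{\rho}$, where $\eta=d/(16n)$ and $\ell=\lfloor (d-1)/2\rfloor$. It compares the output with $F_\H(f)=\prod_i(\sqrt{1-\eta}+\sqrt{\eta}\,h_i(f(\alpha_i)))$. For each fixed $f$ the signs $h_i(f(\alpha_i))$ are i.i.d.\ uniform over $\H$. Hence $F_\H(f)^2=\prod_i(1+2\sqrt{\eta(1-\eta)}\,h_i(f(\alpha_i)))$ is a probability density on the joint experiment ($\H$ random, $f$ uniform). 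Under it the signs are independent with mean about $\tfrac12\sqrt{d/n}$. The score then concentrates at $n/2+(\tfrac14+o(1))\sqrt{dn}$ with $O(\sqrt n)$ fluctuations, and drops below $\tau$ with probability $e^{-\Omega(d)}$. The same independence gives orthogonality across distinct supports. This bounds the truncation error ($e^{-\Omega(d)}$) and the zero-frequency error in $L_2$. Finally, $\E|G^2-F^2|\le\|G-F\|_2\|G+F\|_2$ transfers the bound to the actual output distribution.

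\textbf{Your handling of $\hat g_i(0)$ also needs correcting.} The weight of an active entry on $y=0$ is $\mu_i^2$, where $\mu_i=\frac1q\sum_a h_i(a)$ and $\E\mu_i^2=1/q$. So it is $O(1/q)$, not constant. Were it constant, the loss would compound to $e^{-\Omega(\ell)}$ over the $\ell$ entries of a single branch. Nor is there per-entry repetition inside a coherent superposition over supports. The $k$-copies trick of \cref{thm:YZ-algorithm} works only because its copies are independent product states.

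The paper instead computes one flag (all active Fourier registers nonzero), measures it once, and keeps the surviving branch unnormalized. The probability of passing and outputting $f$ is then exactly $q^{-d}|G_\H(f)|^2$, where $G_\H(f)=\sum_{|I|\le\ell}c_I\prod_{i\in I}(h_i(f(\alpha_i))-\mu_i)$. Writing $F_\ell$ for the same sum without the $-\mu_i$, one gets $\|F_\ell-G\|_2^2\le \ell/q=o(1)$; this is where $d=o(n)$ is used.

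Two smaller points. Berlekamp--Welch decodes from a length-$n$ received word, not from the $d$-symbol syndrome. You therefore need a syndrome decoder run reversibly in $O(d\log n)$ space; the paper simply enumerates all weight-$\le\ell$ candidates, which is trivially reversible and costs only time. Also, preparing the support superposition directly over sorted subsets, by a coherent scan with conditional binomial probabilities, removes your collision issue altogether.
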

\begin{proof}
   We will use the space-optimized version of the DQI algorithm \cite{jordan2025dqi} due to Khattar et al.~\cite{khattar2025optimized}. The idea is to store at most $d/2$ error locations, write all input-dependent phases during the pass, and decode only after the stream ends.

Set $\ell=\left\lfloor\frac{d-1}{2}\right\rfloor$, $\eta=\frac{d}{16n}$, $\rho=\Pr[\mathrm{Binom}(n,\eta)\le\ell]$, and, for $|I|\le\ell$, let
\begin{equation}
     c_I\triangleq\frac{\eta^{|I|/2}(1-\eta)^{(n-|I|)/2}}{\sqrt\rho}.
\end{equation}
Before the stream, prepare the state
\[
  \sum_{|I|\le\ell} c_I\ket I
  \bigotimes_{i\in I}\frac1{\sqrt q}\sum_{a\in\F_q}\ket a,
\]
storing $I$ as a sorted list. This state can be prepared within the same
$O(d\log n)$ workspace: scan $i=1,\ldots,n$, keep the current support size,
and use the conditional binomial probabilities to decide coherently whether to
append $i$ to the list. At all times we store only the sparse list and
$O(\log n)$ additional work.

Write $h_i(a)=2\H_i(a)-1$. When $(i,a,b)$ arrives, multiply by $2b-1$ every
branch whose sparse list contains $i$ and whose field register there equals
$a$. The gates commute. After one pass, each active location $i$ contains
$q^{-1/2}\sum_a h_i(a)\ket a$. No later step accesses $\H$.

Apply the Fourier transform over $\F_q$ to each active field register. Write
\begin{equation}
     \widehat h_i(y)=\frac1q\sum_{a\in\F_q}h_i(a)(-1)^{\Tr(ay)},
  \qquad \mu_i=\widehat h_i(0),
\end{equation}
where $\Tr: \F_q\rightarrow\F_2$ denotes the field trace.
We next compute one flag that is $1$ exactly when every active Fourier
register is nonzero, and measure only this flag. We do not measure $I$ or the
Fourier values. On failure the algorithm aborts. On success, the
\emph{unnormalized} successful branch is
\begin{equation}
    \sum_{|I|\le\ell}c_I \sum_{y_i\ne0\ (i\in I)}
  \left(\prod_{i\in I}\widehat h_i(y_i)\right)\ket{I,(y_i)_{i\in I}}.
\end{equation}
Let $e\in\F_q^n$ be the corresponding sparse vector, so
$\operatorname{wt}(e)\le\ell$.

Let $B_{i,j}=\alpha_i^j$ for $0\le j<d$. Compute the syndrome $B^Te$ into $d$ field registers. It determines $e$: if $B^Te=B^Te'$, then $e-e'$ has weight at most $2\ell<d$, while every set of fewer than $d$ columns of the Vandermonde matrix $B^T$ is independent. We can therefore erase $e$ coherently from its syndrome. For a self-contained space bound, enumerate all canonical sparse errors $e'$ of weight at most $\ell$. For each candidate, compute $B^Te'$ into scratch space, compare it with the stored syndrome, and, on the unique match, XOR $e'$ into the sparse error register. Then uncompute the scratch space. The candidate, its syndrome, and the loop counters use $O(d\log n)$ workspace. This may take exponential time, but our model only charges space.

Finally, inverse-Fourier-transform the syndrome and measure $f\in\Ps_d$. For a fixed oracle $\H$, Fourier inversion gives
\begin{equation}
  \Pr[\text{projection succeeds and output is }f\mid\H]
  =q^{-d}|G_\H(f)|^2,
  \label{eq:dqi-output-probability}
\end{equation}
where
\begin{equation}
  G_\H(f)=\sum_{|I|\le\ell}c_I
  \prod_{i\in I}\bigl(h_i(f(\alpha_i))-\mu_i\bigr).
  \label{eq:dqi-stream-amplitude}
\end{equation}
Since we leave the successful branch unnormalized,
$q^{-d}|G_\H(f)|^2$ is the probability that the projection succeeds and the
algorithm outputs $f$. If the projection fails, the algorithm aborts and
contributes zero. Therefore
\begin{equation}
    \Pr[\text{algorithm succeeds}]
  =\E_\H\left[q^{-d}\sum_{\OPIval(f)\ge\tau}|G_\H(f)|^2\right].
\end{equation}

For the success analysis, we compare with the ideal DQI amplitudes from
\cite{jordan2025dqi},
\[
  F_\H(f)=\prod_{i=1}^n
  \left(\sqrt{1-\eta}+\sqrt\eta\,h_i(f(\alpha_i))\right).
\]
Following \cite{jordan2025dqi}, average jointly over the random oracle $\H$
and a uniform polynomial $f$. Since $\E_{\H,f}F_\H(f)^2=1$, the weights
$F_\H(f)^2$ define a probability distribution on this joint experiment. Under
this distribution, the score is concentrated around
$n/2+(1/4+o(1))\sqrt{dn}$, and the probability of score below $\tau$ is
$e^{-\Omega(d)}$.

Our streaming state differs from this ideal state in two ways. First, we
truncate to supports of size at most $\ell$. The DQI analysis shows that this
costs $e^{-\Omega(d)}$ in squared $L_2$ norm. Second, the projection removes the
zero Fourier frequency. Let
$F_{\ell,\H}(f)=\sum_{|I|\le\ell}c_I\prod_{i\in I}h_i(f(\alpha_i))$.
For $X_i=h_i(f(\alpha_i))$ and $Y_i=X_i-\mu_i$,
$\E X_i^2=1$ and $\E Y_i^2=\E[X_iY_i]=1-1/q$. Orthogonality across distinct
subsets gives
\[
  \|F_\ell-G\|_2^2
  =\sum_{|I|\le\ell}c_I^2
    \left(1-\left(1-\frac1q\right)^{|I|}\right)
  \le\frac{\ell}{q}.
\]
Cauchy--Schwarz and triangle inequality give
\begin{equation}
    \E_{\H,f}|G_\H(f)^2-F_\H(f)^2|\le\|G-F\|_2\,\|G+F\|_2\le\bigl(\|G-F_\ell\|_2+\|F_\ell-F\|_2\bigr)\bigl(\|G\|_2+\|F\|_2\bigr)
\end{equation}
where $\|F\|_2=\|F_\ell\|_2=1$ and $\|G\|_2\le\|F_\ell\|_2$ by orthogonality; therefore
\begin{equation}
    \E_{\H,f}|G_\H(f)^2-F_\H(f)^2| \leq e^{-\Omega(d)}+O(\sqrt{d/q}).
\end{equation}
Combining this estimate with \cref{eq:dqi-output-probability}, and using
$q=n+1$, gives
\[
  \Pr[\OPIval(f)\ge\tau]
  \ge1-e^{-\Omega(d)}-O(\sqrt{d/n})=1-o(1).
\]
The sparse error, syndrome, and reusable decoding workspace each use
$O(d\log n)$ qubits. This proves the theorem.
\end{proof}

\end{document}